\newcommand{\resetCurThmBraces}{%
\gdef\curThmBraceOpen{(}%
\gdef\curThmBraceClose{)}}
\newcommand{\removeThmBraces}{%
\gdef\curThmBraceOpen{}%
\gdef\curThmBraceClose{}}
\newenvironment{notheorembrackets}{\removeThmBraces}{\resetCurThmBraces}
\patchcmd{\thmhead}{(#3)}{\curThmBraceOpen #3\curThmBraceClose }{}{}
\newcommand{\defaultshowkeysformat}[1]{%
\StrSubstitute{#1}{ }{\textvisiblespace}[\TEMP]%
\parbox[t]{\marginparwidth}{\raggedright\normalfont\small\ttfamily\(\{\){\color{red!50!black}\expandafter\seqsplit\expandafter{\TEMP}}\(\}\)}%
}
\renewcommand*\showkeyslabelformat[1]{%
\noexpandarg%
\defaultshowkeysformat{#1}%
}
\newcommand{\itemref}[2]{\autoref{#1}.\ref{#2}}
\newcommand{\xra}[1]{\xrightarrow{~#1~}}
\newcommand{\CO}{\mathcal{O}}
\newcommand{\op}[1]{\operatorname{\mathsf{#1}}}
\newcommand{\id}{\op{id}}
\newcommand{\fil}{\op{fil}}
\newcommand{\fils}[1]{\fil_{\{#1\}}} %
\newcommand{\opcit}[1][.]{\textit{op.cit#1}\xspace}
\newcommand{\group}{\op{group}}
\newcommand{\ungroup}{\op{ungroup}}
\newcommand{\curry}{\op{curry}}
\newcommand{\uncurry}{\op{uncurry}}
\newcommand{\swap}{\op{swap}}
\newcommand{\ev}{\op{ev}}
\newcommand{\inj}{\op{in}}
\newcommand{\inl}{\op{inl}}
\newcommand{\inr}{\op{inr}}
\newcommand{\pr}{\op{pr}}
\newcommand{\ar}{\op{ar}}
\newcommand{\monoto}{\ensuremath{\rightarrowtail}}
\newcommand{\epito}{\ensuremath{\twoheadrightarrow}}
\newcommand{\subto}{\ensuremath{\hookrightarrow}}
\newcommand{\filter}{\op{filter}}
\newcommand{\init}{\ensuremath{\mathtt{init}}\xspace}
\newcommand{\update}{\ensuremath{\mathtt{update}}\xspace}
\newcommand{\merge}{\ensuremath{\mathtt{merge}}}
\newcommand{\concat}{\ensuremath{\mathtt{concat}}}
\newcommand{\Set}{\mathsf{Set}}
\newcommand{\Pow}{\mathcal{P}}
\newcommand{\Dist}{\mathcal{D}}
\newcommand{\Powf}{\mathcal{P}_\mathsf{f}} %
\newcommand{\Bag}{\mathcal{B}}
\newcommand{\copar}{CoPaR\xspace}
\newcommand{\N}{\mathds{N}}
\newcommand{\R}{\mathds{R}}
\newcommand{\fpair}[1]{\ensuremath{\langle #1 \rangle}}
\newcommand{\prodi}[1][i]{{\textstyle\prod_{#1\in I}}}
\newcommand{\coprodi}[1][i]{{\textstyle\coprod_{#1\in I}}}
\newcommand{\lmbrace}{\{\mskip-4mu[}
\newcommand{\rmbrace}{]\mskip-4mu\}}
\newcommand{\mbracesSpace}[1]{\ifthenelse{\equal{#1}{}}{}{\,}}
\newcommand{\mbraces}[1]{%
  \lmbrace{}\protect\mbracesSpace{#1}#1\protect\mbracesSpace{#1}\rmbrace{}}
\newcommand{\emptybag}{\lmbrace{}\rmbrace{}}
\newcommand{\Coalg}{\mathop{\mathsf{Coalg}}}
\newcommand{\fullorax}{appendix}
\newcommand{\takeout}[1]{\empty}
\newcommand{\descto}[3][]{\arrow[phantom]{#2}[#1]{\text{\footnotesize{}\begin{tabular}{c}#3\end{tabular}}}}
\newcommand{\desctox}[4][]{\arrow[phantom,#2]{#3}[#1]{\text{\footnotesize{}\begin{tabular}{c}#4\end{tabular}}}}
\tikzset{shiftarr/.style={
        rounded corners,%
        to path={--([#1]\tikztostart.center)
                     -- ([#1]\tikztotarget.center) \tikztonodes
                     -- (\tikztotarget)},
}}
\newcommand{\pullbackangle}[2][]{\arrow[phantom,to path={
                     -- ($ (\tikztostart)!1cm!#2:([xshift=8cm]\tikztostart) $)
                        node[anchor=west,pos=0.0,rotate=#2,
                        inner xsep = 0]
                        {\begin{tikzpicture}[minimum
                        height=1mm,baseline=0,#1]
    \draw[-] (0,0) -- (.5em,.5em) -- (0,1em);
                        \end{tikzpicture}}}]{}}
\theoremstyle{definition}
\newtheorem{defn}[theorem]{Definition} %
\newtheorem{construction}[definition]{Construction}
\newtheorem{assumption}[remark]{Assumption}
\title{Coalgebra Encoding for Efficient Minimization}
\authorrunning{H.-P.~Deifel, S.~Milius, T.~Wi\ss\/mann}
\author{Hans-Peter Deifel}%
  {Friedrich-Alexander-Universität Erlangen-Nürnberg, Germany}%
  {hans-peter.deifel@fau.de}%
  {https://orcid.org/0000-0002-9542-9664}
  {Supported by the Deutsche Forschungsgemeinschaft (DFG) as part of the Research and Training Group 2475 ``Cybercrime and Forensic Computing'' (393541319/GRK2475/1-2019)}
\author{Stefan Milius}%
  {Friedrich-Alexander-Universität Erlangen-Nürnberg, Germany}%
  {stefan.milius@fau.de}%
  {https://orcid.org/0000-0002-2021-1644}
 {Supported by Deutsche Forschungsgemeinschaft (DFG) under project MI~717/5-2}%
\author{Thorsten Wi\ss\/mann}%
  {Radboud University Nijmegen, Netherlands}%
  {uni@thorsten.wissmann.de}%
  {https://orcid.org/0000-0001-8993-6486}{%
  Supported by NWO TOP project 612.001.852}%
\keywords{Coalgebra, %
  Partition refinement, Transition systems, Minimization}
\begin{document}
\FXRegisterAuthor{sm}{asm}{SM}%
\FXRegisterAuthor{hp}{ahp}{HP}%
\FXRegisterAuthor{tw}{atw}{TW}%

\maketitle
\begin{abstract}
  Recently, we have developed an efficient generic partition refinement
  algorithm, which computes behavioural equivalence on a state-based system
  given as an encoded coalgebra, and implemented it in the tool \copar{}. Here
  we extend this to a fully fledged minimization algorithm and tool by
  integrating two new aspects: (1)~the computation of the transition structure
  on the minimized state set, and (2)~the computation of the reachable part of
  the given system. In our generic coalgebraic setting these two aspects turn
  out to be surprisingly non-trivial requiring us to extend the previous
  theory. In particular, we identify a sufficient condition on encodings of
  coalgebras, and we show how to augment the existing interface, which encapsulates
  computations that are specific for the coalgebraic type functor, to make the
  above extensions possible. Both extensions have linear run time.
\end{abstract}

\section{Introduction}
\smnote{}
\label{sec:intro}
The task of minimizing a given finite state-based system has arisen in different
contexts throughout computer science and for various types of systems,
such as standard deterministic automata, tree
automata, transition systems,
Markov chains, probabilistic or other weighted
systems.
In addition to the obvious goal of reducing the mere memory consumption of the
state space, minimization often appears as a subtask of a more complex
problem. For instance, probabilistic model checkers benefit from minimizing the
input system before performing the actual model checking algorithm, as
e.g.~demonstrated in benchmarking by Katoen et al.~\cite{KatoenEA07}.

Another example is the graph isomorphism problem. A considerable portion of
input instances can already be decided correctly by performing a step called
colour refinement~\cite{BerkholzBG17}, which amounts to the minimization of a
weighted transition system wrt.~weighted bisimilarity.  \smnote{}

Minimization algorithms typically perform two steps: first a reachable subset of
the state set of the given system is computed by a standard graph search, and
second, in the resulting reachable system all behaviourally equivalent states
are identified. For the latter step one uses \emph{partition refinement} or
\emph{lumping} algorithms that start by identifying all states and then
iteratively refine the resulting partition of the state set by looking one step
into the transition structure of the given system. There has been a lot of
research on efficient partition refinement procedures, and the most efficient
algorithms for various concrete system types have a run time in
$\mathcal{O}(m\log n)$, for a system with $n$ states and $m$ transitions,
e.g.~Hopcroft's algorithm for deterministic automata~\cite{Hopcroft71} and the
algorithm by Paige and Tarjan~\cite{PaigeTarjan87} for transition systems, even
if the number of action labels is not fixed~\cite{Valmari09}. Partition
refinement of probabilistic systems also underwent a dynamic
development~\cite{CattaniS02,ZhangEA08}, and the best algorithms for Markov
chain lumping now match the complexity of the relational Paige-Tarjan
algorithm~\cite{HuynhTian92,DerisaviEA03,ValmariF10}.  For the minimization of
more complex system types such as Segala systems~\cite{BaierEM00,GrooteEA18}
(combining probabilities and non-determinism) or weighted tree
automata~\cite{HoegbergEA09}, partition refinement algorithms with a similar
quasilinear run time have been designed over the years.

Recently, we have developed a generic partition refinement
algorithm~\cite{dmsw17,concurSpecialIssue}\twnote{} and implemented it in the tool
\copar{}~\cite{coparFMSpecial,copartool}. This generic algorithm computes the
partition of the state set modulo behavioural equivalence for a wide variety of
stated-based system types, including all the above. This
genericity in the system type is achieved by working with \emph{coalgebras} for
a functor which encapsulates the specific types of transitions of the input
system. More precisely, the algorithm takes as input a syntactic description of
a set functor and an \emph{encoding} of a coalgebra for that functor and then
computes the simple quotient, i.e.~the quotient of the state set modulo
behavioural equivalence. The algorithm works correctly for every zippable set
functor (\autoref{def:zip}). It matches, and in some cases even improves on, the
run-time complexity of the best known partition refinement algorithms for many
concrete system types~\cite[Table~1]{coparFMSpecial}.

The reasons why this run-time complexity can be stated and proven generically
are: first, the encoding allows us to talk about the number of states and, in
particular, the number of transitions of an input coalgebra. But more
importantly, every iterative step of partition refinement requires only very
few system-type specific computations. These computations are encapsulated
in the \emph{refinement interface}~\cite{concurSpecialIssue}, which is then used
by the generic algorithm.

An important feature of our coalgebraic algorithm is its modularity: in the tool
the user can freely combine functors with already implemented refinement
interfaces by products, coproducts and functor composition. A refinement
interface for the combined functor is then automatically derived. In this way
more structured systems types such as (simple and general) Segala systems and
weighted tree automata can be handled.

In the present paper, we extend our algorithm to a fully fledged minimizer. In
previous work~\cite{AdamekEA14} it has been shown that for set functors
preserving intersections, every coalgebra equipped with a point, modelling
initial states, has a minimization called the \emph{well-pointed
  modification}. Well-pointedness means that the coalgebra does not have any
proper quotients (i.e.~it is \emph{simple}) nor proper pointed subcoalgebras
(i.e.~it is \emph{reachable}), in analogy to minimal deterministic automata
being reachable and observable~(see e.g.~\cite[p.~256]{ArbibManes86}).  The
well-pointed modification is obtained by taking the reachable part of the simple
quotient of a given pointed coalgebra~\cite{AdamekEA14} (and the more usual
reversed order, simple quotient of the reachable part, is correct for 
functors preserving inverse images~\cite[Sec.~7.2]{Wissmann2020}). Our previous
work on coalgebraic minimization algorithms has focused on computing the simple
quotient. Here we extend our algorithm by two missing aspects of minimization
and provide their correctness proofs: the computation of~(1) the transition
structure of the minimized system, and~(2) the reachable states of an input
coalgebra.

One may wonder why~(1) is a step worth mentioning at all because for many
concrete system types this is trivial, e.g.~for deterministic automata where the
transitions between equivalence classes are simply defined by choosing
representatives and copying their transitions from the input automaton.
However, for other system types this step is not that obvious, e.g.~for weighted
automata where transition weights need to be summed up and transitions might
actually disappear in the minimized system because weights cancel out. We found
that in the generic coalgebraic setting enabling the computation of the
(encoding of) the transition structure of the minimized coalgebra is
surprisingly non-trivial, requiring us to extend the theory behind our
algorithm.

\renewcommand{\propositionautorefname}{Prop.}%
\renewcommand{\theoremautorefname}{Thm.}%
In order to be able to perform
this computation generically we work with \emph{uniform encodings}, which are
encodings that satisfy a coherence property (\autoref{def:uniform-encoding}). We
prove that all encodings used in our previous work are uniform, and that the
constructions enabling modularity of our algorithm preserve uniformity
(\autoref{prop:prod-coprod-enc-uniform}). We also prove that uniform encodings
are subnatural transformations, but the converse does not hold in general. In
addition, we introduce the \emph{minimization interface} containing the new
function $\merge$ (to be implemented together with the refinement interface for
each new system type) which takes care of transitions that change as a result of
minimization. We provide $\merge$ operations for all functors with explicitly
implemented refinement interfaces (\autoref{ex:mininter}), and show that for
combined system types minimization interfaces can be automatically derived
(\autoref{prop:merge-for-prod}); similarly as for refinement interfaces. Our
main result is that the (encoded) transition structure of the minimized
coalgebra can be correctly computed in linear time
(\autoref{constr:correct:linear}).
\renewcommand{\propositionautorefname}{Proposition}
\renewcommand{\theoremautorefname}{Theorem}

Concerning extension~(2), the computation of reachable states, it is well-known
that every pointed coalgebra has a reachable part (being the smallest
subcoalgebra)~\cite{AdamekEA14,WissmannEA19}. Moreover, for a set functor
preserving intersections it coincides with the reachable part of the canonical
graph of the coalgebra~\cite[Lem.~3.16]{AdamekEA14}. Recently, it was shown that
the reachable part of a pointed coalgebra can be constructed
iteratively~\cite[Thm.~5.20]{WissmannEA19} and that this corresponds to
performing a standard breadth-first search on the canonical graph. The missing
ingredient to turn our previous partition refinement algorithm into a minimizer
is to relate the canonical graph with the encoding of the input coalgebra. We
prove that for a functor with a subnatural encoding, the encoding (considered as a graph)
of every coalgebra coincides with its canonical graph (\autoref{prop:her}).
\twnote{}

Putting everything together, we obtain an algorithm that computes the
well-pointed modification of a given pointed coalgebra. Both additions can be
implemented with linear run time in the size of the input coalgebra and hence do
not add to the run-time complexity of the previous partition refinement
algorithm. We have provided such an implementation with the new version of our tool
\copar{}.

All proofs and additional details can be found in the \fullorax{}.
\subparagraph*{Reachability in Coalgebraic Minimization}\hpnote{}
There are several works
on coalgebraic minimization, ranging from abstract constructions to concrete and
implemented algorithms~\cite{abhkms12,D,concurSpecialIssue,KoenigKuepper18,
  coparFMSpecial}, that compute the simple quotient~\cite{Gumm03} of a given
coalgebra. These are not concerned with reachability since coalgebras are not
equipped with initial states in general.

In Brzozowski's automata minimization algorithm~\cite{Brzozowski62},
reachability is one of the main ingredients. This is due to the duality of
reachability and observability described by Arbib and Manes~\cite{ArbibManes75},
and this duality is used twice in the algorithm. Consequently, reachability also
appears as a subtask in the categorical generalizations of Brzozowski's
algorithm~\cite{BonchiEA12,D,E,F,G}. These generalizations concern
automata processing input words and so do not cover minimization of
(weighted) tree automata. Segala systems are not treated either.
Due to the dualization, Brzozowki's classical algorithm for deterministic
automata has doubly exponential time complexity in the worst case (although it
performs well on certain types of non-deterministic
automata, compared to determinization followed by
minimization~\cite{TabakovVardi05}).%
\section{Background}%
\label{sec:prelim}
Our algorithmic framework~\cite{concurSpecialIssue} is defined on the level of
coalgebras for set functors, following the paradigm of \emph{universal
  coalgebra}~\cite{rutten_universal_2000}. Coalgebras can model a wide variety
of %
systems.

In the following we recall standard notation for sets and functions as well as
basic notions from the theory of coalgebras. We fix a singleton set $1=\{*\}$;
for each set $X$, we have a unique map $!\colon X\to 1$. We denote the disjoint
union (coproduct) of sets~$A,B$ by $A+B$ and use $\inl,\inr$ for the canonical
injections into the coproduct, as well as $\pr_1,\pr_2$ for the projections out
of the product. We use the notation $\fpair{\cdots}$, respectively~$[\,\cdots]$, for
the unique map induced by the universal property of a product, respectively coproduct.
We also fix two sets $2=\{0, 1\}$ and $3=\{0,1,2\}$ and use the former as a set
of boolean values with $0$ and $1$ denoting \emph{false} and \emph{true},
respectively. For each subset $S$ of a set $X$, the \emph{characteristic
  function} $\chi_S\colon X\to 2$ assigns 1 to elements of $S$ and 0 to elements
of $X\setminus S$. We denote by $\Set$ the category of all sets and maps. We
shall indicate injective and surjective maps by $\monoto$ and~$\epito$,
respectively.%

Recall that an endofunctor $F\colon \Set\to\Set$ assigns to each set~$X$ a
set~$FX$, and to each map $f\colon X\to Y$ a map $Ff\colon FX\to FY$, preserving
identities and composition, that is we have ~$F\id_X=\id_{FX}$ and
$F(g\cdot f)=Fg\cdot Ff$. We denote the composition of maps by $\cdot$ written
infix, as usual. An \emph{$F$-coalgebra} is a pair $(X,c)$ that consists of a
set~$X$ of \emph{states} and a map $c\colon X\to FX$ called \emph{(transition)
  structure}. A \emph{morphism} $h\colon (X,c)\to (Y,d)$ of $F$-coalgebras is a
map $h\colon X\to Y$ preserving the transition structure,
i.e.~$Fh\cdot c = d\cdot h$. Two states $x,y\in X$ of a coalgebra $(X,c)$ are
\emph{behaviourally equivalent} if there exists a coalgebra morphism $h$ with
$h(x) = h(y)$.

\begin{example}\label{ex:functors} Coalgebras and the generic notion for
  behavioural equivalence instantiate to a variety of well-known system types
  and their equivalences:
  \begin{enumerate}
  \item The \emph{finite powerset functor} $\Powf$ maps a set to the set of all
    its finite subsets and functions $f\colon X\to Y$ to
    $\Powf f = f[-]\colon\Powf X\to \Powf Y$ taking direct images. Its coalgebras are finitely
    branching (unlabelled) transition systems and coalgebraic behavioural
    equivalence coincides with Milner and Park's (strong) bisimilarity.
    
  \item\label{ex:fun:monoid} Given a commutative monoid $(M, +, 0)$, the \emph{monoid-valued functor}
    $M^{(-)}$ maps a set $X$ to the set of finitely supported functions from $X$
    to $M$. These are the maps $f\colon X\to M$, such that $f(x) = 0$ for all
    except finitely many $x\in X$. Given a map $h\colon X\to Y$ and a finitely
    supported function $f\colon X\to M$, $M^{(h)}(f)\colon M^{(X)}\to M^{(Y)}$
    is defined as $M^{(h)}(f)(y)=\sum_{x\in X, h(x)=y}f(x)$. Coalgebras for
    $M^{(-)}$ correspond to finitely branching weighted transition systems with
    weights from $M$. If a coalgebra morphism $h\colon (X,c)\to (Y,d)$ merges
    two states $s_1,s_2$, then for all transitions $x\xrightarrow{m_1} s_1$,
    $x\xrightarrow{m_2} s_2$ in $(X,c)$ there must be a transition
    $h(x)\xrightarrow{m_1+m_2} h(s_1) = h(s_2)$ in $(Y,d)$ and similarly if more
    than two states are merged. Coalgebraic behavioural equivalence captures weighted
    bisimilarity~\cite[Prop.~2]{Klin09}.

    Note that the monoid may have inverses: if $s_2 = -s_1$, then the
    transitions in the above example cancel each other out, leading to a
    transition $h(x)\xrightarrow{0}h(s_1)$ with weight 0, which in fact
    represents the absence of a transition. This happens for example for the
    monoid $(\R, +, 0)$ of real numbers. A simple minimization algorithm for
    real weighted transition (i.e.~$\R^{(-)}$-coalgebras) systems is given
    by Valmari and Franceschinis~\cite{ValmariF10}. These systems subsume Markov
    chains which are precisely the coalgebras for the finite probability
    distribution functor $\Dist$, a subfunctor of $\R^{(-)}$.

  \item Given a signature $\Sigma$ consisting of operation symbols $\sigma$,
    each with a prescribed natural number, its \emph{arity} $\ar(\sigma)$, the
    \emph{polynomial functor} $F_\Sigma$ sends each set $X$ to the set of
    (shallow) terms over $X$, specifically to the set
    \[
    \{\sigma(x_1, \ldots, x_n)\mid \sigma\in\Sigma, \ar(\sigma)=n, (x_1,\ldots,
    x_n)\in X^n\}.
    \]
    The action of $F$ on a function $f\colon X\to Y$ is given by
    \[
    F_\Sigma f(\sigma(x_1,\ldots,x_n)) = \sigma(f(x_1),\ldots,f(x_n)).
    \]
    A coalgebra structure $c\colon X\to F_\Sigma X$ assigns to a state $x \in X$
    an expression $\sigma(x_1,\ldots,x_n)$, where $\sigma$ is an output symbol
    and $x_1$ to $x_n$ are the successor states. Two states are behaviourally
    equivalent if their tree-unfoldings, obtained by repeatedly applying the
    coalgebra structure $c$, yields the same (infinite) $\Sigma$-tree.
    
  \item For a fixed alphabet $A$, the functor given by $FX = 2\times X^A$ is a
    special case of a polynomial functor over a signature with two symbols of
    arity $|A|$. An $F$-coalgebra $c\colon X \to 2 \times X^A$ is the same as a
    deterministic automaton without an initial state: the structure $c$ assigns
    a pair $(b,t)$ to each $x\in X$, where the boolean value $b \in 2$
    determines its finality, and the function $t\colon A \to X$ assigns to each
    input letter from $a \in A$ the successor state of $x$ under $a$. Here,
    behavioural equivalence coincides with language equivalence in the usual
    automata theoretic sense.
    
  \item The \emph{bag functor} $\Bag$ sends a set $X$ to the set of finite
    multisets over $X$ and functions $f\colon X\to Y$ to $\Bag f\colon \Bag X\to
    \Bag Y$ given by
    $\Bag f(\mbraces{x_1,\ldots,x_2}) = \mbraces{f(x_1),\ldots,f(x_2)}$,
    where we use the multiset braces $\lmbrace$ and $\rmbrace$ to differentiate
    from standard set notation; in particular $\mbraces{x,x}\neq
    \mbraces{x}$. Coalgebras for $\Bag$ are finitely branching transition
    systems where multiple transitions between any two states are allowed, or
    equivalently, weighted transition systems with positive integers as
    weights. This follows from the fact that the bag functor is (naturally
    isomorphic to) the monoid-valued functor for the monoid $(\N, +,
    0)$. Hence, behavioural equivalence coincides with weighted bisimilarity
    again.

    Note that every undirected graph may be considered as a $\Bag$-coalgebra by
    turning every edge into two directed edges with weight 1. Then two states
    are behaviourally equivalent iff they are identified by
    \emph{colour refinement}, also called the 1-dimensional Weisfeiler-Lehman
    algorithm (see e.g.~\cite{BerkholzBG17,CaiEA1992,Weisfeiler1976}).

  \end{enumerate}
\end{example}

\begin{example}[Modularity] \label{ex:modularity} New system types can be
  constructed from existing ones by functor composition. For example, labelled
  transition systems (LTSs) are coalgebras for the functor
  $FX = \Powf(A\times X)$, which is the composite of $\Powf$ and $A\times -$ for
  a label alphabet $A$, and precisely the bisimilar states in an $F$-coalgebra
  are behaviourally equivalent. Composing further, \emph{Segala systems} (or
  \emph{probabilistic LTSs}~\cite{GrooteEA18}) are coalgebras for
  $FX = \Powf(A\times\Dist X)$, for which coalgebraic behavioural equivalence
  instantiates to probabilistic bisimilarity~\cite{BartelsEA04}. Another example
  are \emph{weighted tree automata}~\cite{HoegbergEA09} with weights in a
  commutative monoid~$M$ and input signature $\Sigma$; they are coalgebras for
  the composed functor $FX=M^{(\Sigma X)}$, for which behavioural equivalence
  coincides with \emph{backwards bisimilarity}~\cite{coparFM19}.
\end{example}

\subparagraph*
{Simple, Reachable, and Well-Pointed Coalgebras}

Minimizing a given pointed coalgebra means to
compute its well-pointed modification. We now briefly recall the corresponding
coalgebraic concepts. For a more detailed and well-motivated discussion with
examples, see e.g.~\cite[Sec.~9]{AdamekEA20}.

First, a \emph{quotient coalgebra} of an $F$-coalgebra $(X,c)$ is represented by
a surjective $F$-coalgebra morphism, for which we write
$q\colon (X,c) \epito (Y,d)$, and a \emph{subcoalgebra} of $(X,c)$ is
represented by an injective $F$-coalgebra
morphism~$m\colon (S,s) \monoto (X,c)$.

A coalgebra $(X,c)$ is called \emph{simple} if it does not have any proper
quotient coalgebras~\cite{Gumm03}. That is, every quotient $q\colon (X,c) \epito (Y,d)$ is an
isomorphism. Equivalently, distinct states~$x,y \in X$ are never
behaviourally equivalent. Every coalgebra has an (up to isomorphism) unique
simple quotient (see e.g.~\cite[Prop.~9.1.5]{AdamekEA20}).
\begin{example}
  \begin{enumerate}
  \item\sloppypar A deterministic automaton regarded as a coalgebra for
    $FX = 2 \times X^A$ is simple iff it is
    observable~\cite[p.~256]{ArbibManes86}, that is, no distinct states accept
    the same formal language.
    
  \item A finitely branching transition system considered as a
    $\Powf$-coalgebra is simple, if it has no pairs of strongly bisimilar
    but distinct states; in other words if two states $x, y$ are strongly
    bisimilar, then $x = y$.
    
  \item A similar characterization holds for monoid-valued functors (such as the
    bag functor) wrt.~weighted bisimilarity.
  \end{enumerate}
\end{example}  

\noindent A \emph{pointed} coalgebra is a coalgebra $(X,c)$ equipped with a point
$i\colon 1 \to X$, equivalently a distinguished element $i\in X$, modelling an
initial state. Morphisms of pointed coalgebras are the point-preserving
coalgebra morphisms, i.e.~morphisms $h\colon (X,c,i) \to (Y,d,j)$
satisfying $h \cdot i = j$. Quotients and subcoalgebras of pointed coalgebras
are defined wrt.~these morphisms. A pointed coalgebra $(X,c,i)$ is called
\emph{reachable} if it has no proper subcoalgebra, that is, every subcoalgebra
$m\colon (S,s,j) \monoto (X,c,i)$ is an isomorphism. Every pointed coalgebra has a
unique reachable subcoalgebra (see e.g.~\cite[Prop.~9.2.6]{AdamekEA20}). The
notion of reachable coalgebras corresponds well with graph theoretic
reachability in concrete examples. We elaborate on this a bit more in
\autoref{sec:reach}.
\begin{example}
  \begin{enumerate}
  \item A deterministic automaton considered as a pointed coalgebra for $FX = 2 \times
    X^A$ (with the point given by the initial state) is reachable if all of
    its states are reachable from the initial state.
  \item A pointed $\Powf$-coalgebra is a finitely branching directed graph with a root
    node. It is reachable precisely when every node is reachable from the root
    node.
  \item Similarly, for monoid-valued functors such as the bag functor,
    reachability is precisely graph theoretic reachability, where a
    transition weight of 0 means `no edge'.
  \end{enumerate}
\end{example}

\noindent Finally, a pointed coalgebra $(X,c,i)$ is \emph{well-pointed} if it is reachable
and simple. Every pointed coalgebra has a \emph{well-pointed modification},
which is obtained by taking the reachable part of its simple quotient
(see~\cite[Not.~9.3.4]{AdamekEA20}).
\begin{remark}
  For a functor preserving inverse images,\smnote{}  one may
  reverse the two constructions: the well-pointed modification is the simple
  quotient of the reachable part of a given pointed
  coalgebra~\cite[Sec.~7.2]{Wissmann2020}. This is the usual order in which
  minimization of systems is performed algorithmically. However, for a functor
  that does not preserve inverse images, quotients of reachable coalgebras need
  not be reachable again~\cite[Ex.~5.3.27]{Wissmann2020}, possibly rendering the
  usual order incorrect.\smnote{}
\end{remark}
\smnote{}%
Our present paper is concerned with the \emph{minimization problem} for
coalgebras, i.e.~the problem to compute the well-pointed modification of a
given pointed coalgebra in terms of its encoding.

\begin{remark}\label{rem:reduction}
  Recall that a (sub)natural transformation $\sigma$ from a functor $F$ to a
  functor~$G$ is a set-indexed family of maps $\sigma_X\colon FX \to GX$ such
  that for every (injective) function $m\colon X \to Y$ the square on the right
  below commutes; we also say that $\sigma$ is \emph{(sub)natural in $X$}.
  
  From previous results (see \cite[Prop.~2.13]{concurSpecialIssue}
  and~\cite[Thm.~4.6]{WissmannEA19}) one obtains the following sufficient
  condition for reductions of reachability and simplicity. Given a family of
  maps $\sigma_X\colon FX\to GX$, then every $F$-coalgebra $(X,c)$ yields a
  $G$-coalgebra $(X, \sigma_X \cdot c)$ and we can reduce minimization tasks
  from $F$-coalgebras to $G$-coalgebras as follows:

  \vspace{2pt plus 1pt minus 1pt}
  \noindent
  \begin{minipage}[b]{.73\textwidth}%
  \begin{enumerate}
  \item\label{rem:reduction:1} Suppose that $\sigma\colon F \to G$ is
    \emph{sub-cartesian}, that is the squares on the right are
    pullbacks for every injective map $m\colon X \monoto Y$.  Then the reachable
    part of a pointed $F$-coalgebra $(X,c,i)$ is obtained from the reachable
    part of the $G$-coalgebra $(X, \sigma_X\cdot c, i)$.
  \end{enumerate}
  \end{minipage}%
  \hfill%
  \begin{minipage}[b]{.25\textwidth}
    \hspace*{0pt}\hfill
      \begin{tikzcd}[baseline=(FX.base)]
        |[alias=FX]|
        FX
        \arrow{r}{\sigma_X}
        \arrow{d}[swap]{Fm}
        &
        GX
        \arrow{d}{Gm}
        \\
        FY
        \arrow{r}{\sigma_Y}
        &
        GY
      \end{tikzcd}
      \\
      ~
  \end{minipage}
  \begin{enumerate}
    \stepcounter{enumi}
  \item Suppose that $F$ is a subfunctor of $G$, i.e.~we have a natural
    transformation $\sigma$ with injective components $\sigma_X\colon FX \monoto
    GX$. Then the problem of computing the simple
    quotient for~$F$-coalgebras reduces to that for $G$-coalgebras: the simple
    quotient of $(X, \sigma_X \cdot c)$ yields that of $(X,c)$.
  \end{enumerate}
  Consequently, if $F$ is a subfunctor of $G$ via a subcartesian $\sigma$, the
  minimization problem for $F$-coalgebras reduces to that for $G$-coalgebras.
  For example, the distribution functor $\Dist$ is a subcartesian subfunctor
  of $\R^{(-)}$. (For details see the \fullorax{}.)
\end{remark}

\subparagraph*
{Preliminaries on Bags}

The bag functor defined in \autoref{ex:functors} plays an important role in our
minimization algorithm, not only as one of many possible system types, but bags
are also used as a data structure. To this end, we use a couple of additional properties of
this functor.

\begin{remark}\label{rem:bags}
  \begin{enumerate}
  \item  Since $\Bag$ can also be regarded
    as a monoid-valued functor for $(\N, +, 0)$, every bag
    $b = \mbraces{x_1,\ldots,x_n}\in\Bag X$ may be identified with a finitely
    supported function $X \to \N$, assigning to each $x\in X$ its multiplicity
    in $b$. We shall often make use of this fact and represent bags as functions.

  \item The set $\Bag X$ itself is a commutative monoid with bag-union as
    the operation and the empty bag $\emptybag$ as the identity element.  In fact,
    this is the free commutative monoid over $X$. It therefore makes sense to
    consider the monoid-valued functor $(\Bag X)^{(-)}$ for a monoid of
    bags. Note that for every pair of sets $A, X$, the set $(\Bag A)^{(X)}$
    of finitely supported functions from $X$ to $\Bag A$ is isomorphic to
    $\Bag(A\times X)$ as witnessed by the following isomorphism (where
    $\swap$, $\curry$ and $\uncurry$ are the evident canonical
    bijections):
    \begin{align*}
      \group &= \big(\Bag(A \times X) \xrightarrow{\Bag(\swap)} \Bag(X
      \times A) \xrightarrow{\curry} (\Bag A)^{(X)}\big),\ \text{and}
      \\
      \ungroup &= \big((\Bag A)^{(X)} \xrightarrow{\uncurry} \Bag(X \times
      A) \xrightarrow{\Bag(\swap)} \Bag(A \times X)\big).
    \end{align*}
    Note that since $\swap$ is self-inverse and $\curry$, $\uncurry$ are
    mutually inverse, $\group$ and $\ungroup$ are mutually inverse, too. In
    symbols: 
    \begin{equation}
    \label{eq:group-ungroup}
    \group\cdot\ungroup = \id_{(\Bag A)^{(X)}},\quad
    \ungroup\cdot\group = \id_{\Bag(A\times X)}.
  \end{equation}
\end{enumerate}
\end{remark}

\noindent We often need to filter a bag of tuples $\Bag(A\times X)$ by a subset
$S\subseteq X$. To this end we define the maps
$\fil_S\colon\Bag(A\times X)\to \Bag(A)$ for sets $S\subseteq X$ and $A$ by
\[
  \fil_S(f)= \big(a\mapsto \sum_{x\in S}f(a, x)\big)
  =\mbraces{a \mid (a,x) \in f, x\in S},
\]
where the multiset comprehension is given for intuition.

\subparagraph*
{Zippable Functors}

One crucial ingredient for the efficiency of the generic
partition refinement algorithm~\cite{concurSpecialIssue} is that the coalgebraic
type functor is zippable:
\begin{notheorembrackets}
  \begin{defn}[{\cite[Def.~5.1]{concurSpecialIssue}}]\label{def:zip}%
    A set functor $F$ is called
    \emph{zippable} if the following maps are injective for every pair $A, B$ of
    sets:
  \[
    F(A+ B)
    \xrightarrow{~\fpair{F(A + !), F(! + B)}~}
    F(A + 1) \times F(1 + B).
  \]
\end{defn}
\end{notheorembrackets}
\smnote{}%
Zippability of a functor allows that partitions are refined incrementally by the
algorithm~\cite[Prop.~5.18]{concurSpecialIssue}, which in turn is the key for
allowing a low run time complexity of the implementation. For additional visual
explanations of zippability, see~\cite[Fig.~2]{concurSpecialIssue}.  We shall
need this notion in the proof of \autoref{encoding-injective}, and later proofs
use this result.

It was shown in~\cite{concurSpecialIssue} that all functors in \autoref{ex:functors} are
zippable. In addition, zippable functors are closed under products, coproducts
and subfunctors. However, they are not closed under functor composition,
e.g.~$\Powf\Powf$ is not zippable~\cite[Ex.~5.10]{concurSpecialIssue}.

\subparagraph*%
{The Trnkov\'a Hull}\label{trnkova}
For purposes of universal coalgebra, we may assume without loss of generality
that set functors preserve injections. Indeed, every set functor preserves
nonempty injections (being the split monomorphisms in $\Set$). As shown by
Trnkov\'a~\cite[Prop.~II.4 and~III.5]{Trnkova71}, for every set functor $F$
there exists an essentially unique set functor $\bar F$ which coincides with $F$
on nonempty sets and functions, and preserves finite intersections (whence
injections).  The functor $\bar F$ is called the \emph{Trnkov\'a hull} of
$F$. Since $F$ and $\bar F$ coincide on nonempty sets and maps, the categories
of coalgebras for $F$ and $\bar F$ are isomorphic.

\section{Coalgebra Encodings}
\label{sec:refint}
In order to make abstract coalgebras tractable for computers and to have a
notion of the size of a coalgebra structure in terms of nodes and edges as for
standard transition systems, our algorithmic framework encodes coalgebras using
a graph-like data structure. To this end, we require functors to be
equipped with an encoding as follows.
\begin{notheorembrackets}
\begin{defn} \label{def-encoding}
  An \emph{encoding} of a set functor $F$ consists of a set $A$ of \emph{labels} and a
  family of maps $\flat_X\colon FX\to\Bag(A\times X)$, one for every set $X$,
  such that the following map is injective:
  \[FX\xra{\fpair{F!, \flat_X}} F1\times\Bag(A\times X).\]
  \noindent
  An \emph{encoding} of a coalgebra $c\colon X\to FX$ is given by
  $\fpair{F!, \flat_X}\cdot c\colon X\to F1\times \Bag(A\times X)$.
\end{defn}
\end{notheorembrackets}
Intuitively, the encoding $\flat_X$ of a functor $F$ specifies how an
$F$-coalgebra should be represented as a directed graph, and the required
injectivity models that different coalgebras have different representations.
\begin{remark}
  Previously~\cite[Def.~6.1]{concurSpecialIssue}, the map $\fpair{F!, \flat_X}$
  was not explicitly required to be injective. Instead, a family of maps
  $\flat_X\colon FX\to \Bag(A\times X)$ and a \emph{refinement interface} for
  $F$ was assumed. The definition of a refinement interface for $F$ is tailored
  towards the computation of behaviourally equivalent states and its details are
  therefore not relevant for the present work. All we need here is that the
  existence of a refinement interface implies the injectivity condition of
  \autoref{def-encoding} and consequently, we inherit all examples of encodings
  from the previous work:
\end{remark}
\begin{proposition}\label{encoding-injective}%
  For every zippable set functor $F$ with a family of maps $\flat_X\colon FX\to
  \Bag(A\times X)$ and a refinement interface, the family $\flat_X$ is an
  encoding for $F$.
\end{proposition}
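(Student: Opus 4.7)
The plan is to derive the injectivity of $\fpair{F!, \flat_X}$ from the structural axioms of a refinement interface together with zippability, appealing to results from prior work~\cite{concurSpecialIssue}. Suppose $t, t' \in FX$ satisfy $F!(t) = F!(t')$ and $\flat_X(t) = \flat_X(t')$; I want to conclude $t = t'$.

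The first step is to unpack the axioms that a refinement interface for $F$ must satisfy. These axioms govern how the interface's operations (the initial colouring and the \update{} function) interact with $\flat_X$: whenever two elements of $FX$ agree on their images under $F!$ and under $\flat_X$, they produce identical outputs under every admissible colouring of the successor set~$X$. In particular, the refinement interface sees $t$ and $t'$ as indistinguishable, whatever colouring we impose.

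The second step is to convert this indistinguishability into actual equality in $FX$. Here zippability enters essentially: by the incremental refinement argument underlying \cite[Prop.~5.18]{concurSpecialIssue}, the information that the interface can extract from an element of $FX$ is rich enough to separate any two distinct elements, provided we iterate over all coproduct decompositions $X \cong A + B$ and exploit injectivity of the zippability map $\fpair{F(A+!), F(!+B)}\colon F(A+B) \to F(A+1) \times F(1+B)$. Consequently, agreement of $t$ and $t'$ under every such decomposition forces $t = t'$.

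I expect the main obstacle to be bridging the clean injectivity statement of \autoref{def-encoding} and the operational language of refinement interfaces in \cite{concurSpecialIssue} (with its \init, \update, and coherence axioms); the concrete translation is deferred to the \fullorax{}. Once this bridge is established, the sketch above shows that what was previously folded into the definition of a refinement interface can now be repackaged as the single injectivity condition we take as the definition of an encoding, so no examples are lost when replacing the old definition by \autoref{def-encoding}.
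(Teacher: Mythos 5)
Your overall strategy coincides with the paper's: first show that the data extractable via the refinement interface is determined by $\fpair{F!,\flat_X}$, then use zippability to show that this data separates points of $FX$. However, both halves of your sketch are asserted rather than proved, and in each case the missing content is exactly where the work lies.

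For the first half, it is not an ``axiom'' of the refinement interface that elements agreeing under $F!$ and $\flat_X$ produce identical outputs under every colouring. What the axioms give you is that $\init$ computes $w(X)$ from $\fpair{F!,\fil_X\cdot\flat_X}$ and that $\update$ computes $\fpair{w(S),F\chi_S^C,w(C\setminus S)}$ from $\fpair{\fil_S\cdot\flat_X,w(C)}$; to conclude that each map $F\chi_S^C\colon FX\to F3$ factors through $\fpair{F!,\flat_X}$ you must chain these together, building a factorization of $w(X)$ from $\init$, then of $w(C)$ from $\update$ and the factorization of $w(X)$, and only then of $F\chi_S^C$ (this is \autoref{encoding2anyF3} in the appendix). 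Your sketch skips this bootstrapping entirely.

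For the second half, the mechanism you propose --- iterating over coproduct decompositions $X\cong A+B$ and applying the zippability map directly --- is not how the separation argument works, and it is not clear it can be made to work: knowing that $t,t'$ agree under $F!$ and $\flat_X$ gives you no direct handle on their images under $F(A+!)$ and $F(!+B)$. What the paper actually does is choose a specific filtration $S_i=\{x_i\}$, $C_i=\{x_i,\dots,x_{n-1}\}$, verify that the union of the kernels $\ker(q_k)$ and $\ker(\chi_{S_k}^{C_k})$ is an equivalence relation (the hypothesis needed to invoke \cite[Prop.~5.18]{concurSpecialIssue}, which is where zippability enters), conclude $\ker Fq_n=\ker\fpair{F\chi_{S_k}^{C_k}}_{k}$, and observe that $q_n$ is injective so that $Fq_n$ is too. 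This combinatorial choice of nested colourings is the key idea that makes the family of maps from the first half jointly injective, and it is absent from your proposal. Without it, ``agreement under every decomposition forces $t=t'$'' remains an unproven claim rather than a consequence of zippability.
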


\begin{example}\label{ex:encodings}
  We recall a number of encodings from~\cite{concurSpecialIssue}; the
  injectivity is clear, and in fact implied by \autoref{encoding-injective}:
  \begin{enumerate}
  \item\label{ex:encodings:1} Our encoding for the finite powerset functor
    $\Powf$ resembles unlabelled transition systems by taking the singleton set
    $A=1$ as labels.  The map
    $\flat_X\colon\Powf(X)\to\Bag(1\times X)\cong\Bag(X)$ is the obvious
    inclusion, i.e.~$\flat_X(t)(*, x) = 1$ if $x\in t$ and 0 otherwise.
  \item\label{ex:encodings:2} The monoid-valued functor $M^{(-)}$ has labels
    from $A=M$ and
    $\flat_X\colon M^{(X)}\to\Bag(M\times X)$ is given by $\flat_X(t)(m, x) = 1$
    if $t(x) = m\neq 0$ and 0 otherwise.
  \item\label{ex:encodings:3} For a polynomial functor $F_\Sigma$, we use $A=\N$ as the label set and
    define the maps $\flat_X\colon F_\Sigma X\to\Bag(\N\times X)$ by
    $\flat_X(\sigma(x_1,\ldots,x_n)) = \mbraces{(1,x_1),\ldots,(n,x_n)}$.
    
    Note that $\flat_X$ itself is not injective if $\Sigma$ has at least two
    operation symbols with the same arity. E.g.~for DFAs ($F_\Sigma X = 2\times
    X^A$), $\flat_X$ only retrieves information about successor states but
    disregards the `finality' of states. However, pairing $\flat_X$ with
    $F!\colon FX\to F1$ yields an injective map.

  \item\label{ex:encodings:4} The bag functor $\Bag$ itself also has $A=\N$ as
    labels and $\flat_X(t)(n, x) = 1$ if $t(x) = n$ and 0 otherwise. This is
    just the special case of the encoding for a monoid-valued functor
    for the monoid $(\N, +, 0)$.
  \end{enumerate}
\end{example}

\noindent The encoding does by no means imply a reduction of the problem of minimizing
$F$-coalgebras to that of coalgebras for $\Bag(A\times -)$
(cf.~\autoref{rem:reduction}). In fact, the notions of behavioural equivalence
for $F$-coalgebras and coalgebras for $\Bag(A\times -)$, can be
radically different. If $\flat_X$ is natural %
in $X$, then behavioural
equivalence wrt.~$F$ implies that for $\Bag(A \times -)$, but not necessarily
conversely. However, we do not assume naturality of $\flat_X$, and in fact it fails in
all of our examples except one:
\begin{proposition}\label{prop:poly-flat-nat}
  The encoding $\flat_X\colon F_\Sigma X\to\Bag(A\times X)$ for the
  polynomial functor $F_\Sigma$ is a natural transformation.
\end{proposition}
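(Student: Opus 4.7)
The plan is to verify naturality directly by chasing a generic element through both sides of the naturality square. Given a map $f\colon X \to Y$, I need to show that $\flat_Y \cdot F_\Sigma f = \Bag(\id_A \times f) \cdot \flat_X$ as maps $F_\Sigma X \to \Bag(A \times Y)$, where $A = \N$.

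First, I would pick an arbitrary element $\sigma(x_1, \ldots, x_n) \in F_\Sigma X$ (with $\sigma \in \Sigma$ of arity $n$). Applying the left-hand side: by definition of $F_\Sigma f$, we get $\sigma(f(x_1), \ldots, f(x_n))$, and then $\flat_Y$ sends this to $\mbraces{(1, f(x_1)), \ldots, (n, f(x_n))}$. Applying the right-hand side: $\flat_X$ produces $\mbraces{(1, x_1), \ldots, (n, x_n)}$, and then the functorial action of $\Bag$ on $\id_A \times f$ maps each pair $(i, x_i)$ to $(i, f(x_i))$, giving the same bag. Since these agree on every element, the two maps are equal.

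There is no real obstacle here; the only subtlety worth commenting on is that $\flat_X$ uses the index $i \in \N$ of each argument position, and this index is preserved by $\Bag(\id_A \times f)$ because $\id_A \times f$ acts as the identity on the first component. The naturality would fail if, for instance, the labels depended on the elements of $X$ rather than on purely syntactic positional information, but this is not the case for $F_\Sigma$.

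Finally, I would note (as a side remark, not part of the proof) that this is the sole example in \autoref{ex:encodings} for which $\flat_X$ is natural: for the other encodings (powerset, monoid-valued, bag) the label $A$ depends on data extracted from the state, so the second component of the pair is coupled to the state being mapped, and naturality fails in general; $F_\Sigma$ is special precisely because its labels encode only the argument index $i$, which is invariant under renaming of variables.
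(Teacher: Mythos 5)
Your proof is correct and is essentially identical to the paper's: both verify naturality by chasing a generic element $\sigma(x_1,\ldots,x_n)$ through the two legs of the naturality square and observing that each yields $\mbraces{(1,f(x_1)),\ldots,(n,f(x_n))}$.
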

\begin{example}\label{ex:nonnat}
  The encoding $\flat_X\colon \Powf(X) \to \Bag(1 \times X)\cong \Bag(X)$ in
  \autoref{ex:encodings} item \ref{ex:encodings:1} is not natural. Indeed, consider
  the map $!\colon 2 \to 1$, for which we have
  \[
    \Bag(!) \cdot \flat_2(\{0,1\})
    =
    \Bag(!)\mbraces{0,1}
    =
    \mbraces{*,*}
    \neq
    \mbraces{*}
    =
    \flat_1(\{*\})
    =
    \flat_1\cdot \Powf (!)(\{0,1\}).
  \]
  Similar examples show that the encodings in \autoref{ex:encodings}
  item~\ref{ex:encodings:2} (for all non-trivial monoids)
  and item~\ref{ex:encodings:4} are not natural.
\end{example}

An important feature of our algorithm and tool is that all implemented functors
can be combined by products, coproducts and functor composition. That is, the
functors from \autoref{ex:encodings} are implemented directly, but the algorithm also
automatically handles coalgebras for more complicated combined functors, like
those in \autoref{ex:modularity}, e.g.~$\Powf(A\times -)$. The mechanism that
underpins this feature is detailed in previous work
\cite{concurSpecialIssue,coparFM19} and depends crucially on the ability to form
coproducts and products of encodings:

\begin{notheorembrackets}
\begin{construction}[{\cite{concurSpecialIssue,coparFM19}}]\label{const:comb-enc}
  Given a family of functors $(F_i)_{i\in I}$ with encodings
  $(\flat_{X,i})_{i\in I}$ and $(A_i)_{i\in I}$, we obtain the following
  encodings with labels $A = \coprod_{i\in I} A_i$:
  \begin{enumerate}
  \item for the coproduct functor $F = \coprod_{i\in I} F_i$ we take
    \[
      \flat_X\colon \coprod_{i\in I} F_i X
      \xrightarrow{\coprod_{i\in I}\flat_{X,i}}\coprod_{i\in I}\Bag(A_i\times X)
      \xrightarrow{[\Bag(\inj_i\times X)]_{i\in I}}\Bag\big(\coprod_{i\in I} A_i\times X\big).
    \]
    
  \item  for the product functor $F = \prod_{i\in I} F_i$ we take
    \begin{align*}
      &\flat_X\colon \prod_{i\in I} F_i X \to \Bag(\coprod_{i\in I} A_i\times X)
      &\flat_X(t)(\inj_i(a), x) = \flat_i(\pr_i(t))(a, x),
    \end{align*}
    where $\inj_i\colon A_i \to \coprod_j A_j$ and $\pr_i\colon \prod_j F_jX \to
    F_iX$ denote the canonical coproduct injections and product projections, respectively.
  \end{enumerate}
\end{construction}
\end{notheorembrackets}
\begin{proposition}\label{P:prod-coprod-enc}
  The families $\flat_X$ defined in \autoref{const:comb-enc} yield encodings for
  the functors $\prodi F_i$ and $\coprodi F_i$, respectively.
\end{proposition}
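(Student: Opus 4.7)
The plan is to verify in each case the injectivity condition of \autoref{def-encoding}, namely that $\langle F!, \flat_X\rangle\colon FX \to F1 \times \Bag(A\times X)$ is injective, by reducing to the corresponding injectivity for each $\langle F_i!, \flat_{X,i}\rangle$.

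For the coproduct $F = \coprodi F_i$, I would note first that $F! = \coprodi F_i!$ and $F1 = \coprodi F_i1$, so $F!$ already records the summand index of any element of $\coprodi F_iX$. Hence, given $t,t' \in FX$ with $\langle F!,\flat_X\rangle(t) = \langle F!,\flat_X\rangle(t')$, they lie in the same summand, say $t = \inj_j(s)$ and $t' = \inj_j(s')$ with $s,s' \in F_jX$ and $F_j!(s) = F_j!(s')$. Unfolding the definition of $\flat_X$, we have $\flat_X(\inj_j(s)) = \Bag(\inj_j \times X)(\flat_{X,j}(s))$ and likewise for $s'$. Since $\inj_j \times X$ is injective, so is $\Bag(\inj_j \times X)$ (the bag functor preserves injections), so $\flat_{X,j}(s) = \flat_{X,j}(s')$. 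Combined with $F_j!(s) = F_j!(s')$, the injectivity of $\langle F_j!,\flat_{X,j}\rangle$ yields $s = s'$, hence $t = t'$.

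For the product $F = \prodi F_i$, I would use $F! = \prodi F_i!$ and $F1 = \prodi F_i1$, so that $F!(t) = F!(t')$ is equivalent to $F_i!(\pr_i(t)) = F_i!(\pr_i(t'))$ for every $i \in I$. By the definition of the product encoding, $\flat_X(t)(\inj_i(a),x) = \flat_{X,i}(\pr_i(t))(a,x)$, and since the injections $\inj_i$ are jointly epic and pairwise disjoint, $\flat_X(t) = \flat_X(t')$ is equivalent to $\flat_{X,i}(\pr_i(t)) = \flat_{X,i}(\pr_i(t'))$ for every $i$. Therefore $\langle F!,\flat_X\rangle(t) = \langle F!,\flat_X\rangle(t')$ implies $\langle F_i!,\flat_{X,i}\rangle(\pr_i(t)) = \langle F_i!,\flat_{X,i}\rangle(\pr_i(t'))$ for every $i$, and injectivity of each $\langle F_i!,\flat_{X,i}\rangle$ gives $\pr_i(t) = \pr_i(t')$ for all $i$, hence $t = t'$.

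Neither half is deep; the only small subtlety is that the encoding's injectivity requirement is on the pairing $\langle F!, \flat_X\rangle$, not on $\flat_X$ alone, so the argument must carefully exploit the $F!$-component (in the coproduct case to pin down the summand, and in the product case componentwise). The main potential obstacle is bookkeeping with the disjoint-union label set $A = \coprodi A_i$ in the product case: I would make explicit that elements of $\Bag(A\times X)$ decompose into a family of bags indexed by $i$ (via pre-composition with $\inj_i \times X$), which is what allows the componentwise reduction above.
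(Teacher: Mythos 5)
Your proof is correct and follows essentially the same route as the paper's: in the coproduct case you use the $F!$-component to pin down the summand and then the injectivity of $\Bag(\inj_j\times X)$ to reduce to $\fpair{F_j!,\flat_{X,j}}$, and in the product case you reduce componentwise via the projections, exactly as the paper does. The only cosmetic difference is that the paper packages the coproduct step as injectivity of the single map $\inj_j\times\Bag(\inj_j\times\id)$ post-composed with $\fpair{F_j!,\flat_{X,j}}$, whereas you treat the two components separately.
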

\begin{remark}\label{rem:comp}
  Since zippable functors are not closed under composition, modularity cannot be
  achieved by simply providing a construction of an encoding for a composed
  functor (at least not without giving up on the efficient run-time
  complexity). Functor composition is reduced to coproducts making a detour via
  many-sorted sets. Here is a rough explanation of how this works. Suppose that
  $F$ is a \emph{finitary} set functor, which means that for every $x \in FX$
  there exists a finite subset $Y \subseteq X$ and $x' \in FY$ such that
  $x = Fm (x')$ for the inclusion map $m\colon Y \subto X$. Given a finite coalgebra
  $c\colon X \to FG X$, it can be turned into a $2$-sorted coalgebra
  $(c',d')\colon (X, Y) \to (FY, GX)$ as follows: since $F$ is finitary one
  picks a finite subset~$Y$ of $GX$ such that there exists a map $c'\colon X \to
  FY$ with $c= Fd' \cdot c'$, where $d'\colon Y \subto GX$ is the inclusion
  map. Then $c'$ and $d'$ are combined into one coalgebra on the disjoint union
  $X+Y$ as shown below:
  \[
    X + Y
    \xrightarrow{~c'+d'~}
    FY + GX
    \xrightarrow{~[F\inr,G\inl]~}
    (F+G)(X+Y)
  \]
  for the coproduct of the functors $F$ and $G$, where $\inl\colon X \to X + Y$ and
  $\inr\colon Y \to X+Y$ are the two coproduct injections. Full
  details may be found in~\cite[Sec.~8]{concurSpecialIssue}.
\end{remark}

For the sake of computing the coalgebra structure of the minimized coalgebra,
we require that, intuitively,
the labels used for encoding $FX$ are independent of the cardinality of $X$:
\begin{defn}\label{def:uniform-encoding}
  An encoding $\flat_X$ for a set functor $F$ is called
  \emph{uniform} if it fulfils the following property for every~$x\in X$:
  \begin{equation}
    \label{eq:uniform-encoding}
    \begin{tikzcd}[row sep = 5]
      FX \ar[r, "\flat_X"]\ar[dd, swap, "F\chi_{\{x\}}"] & \Bag(A\times
      X)\ar[dr, "\fil_{\{x\}}", near start]\\
      & & \Bag(A)\\
      F2 \ar[r, "\flat_2"] & \Bag(A\times 2)\ar[ur, "\fil_{\{1\}}", swap, near start]
    \end{tikzcd}
  \end{equation}
\end{defn}

Intuitively, the condition in \autoref{def:uniform-encoding} expresses that in
an encoded coalgebra, the edges (and their labels) to a state $x$
do not change if other states $y,z \in X\setminus\{x\}$ are identified by a
possible partition on the state space. Diagram \eqref{eq:uniform-encoding}
expresses the extreme case of such a partition, particularly the one where
\emph{all} elements of $X$ except for $x$ are identified in a block, with
$x$ being in a separate singleton block.

Fortunately, requiring uniformity does not exclude any of the existing encodings
that we recalled above.
\begin{proposition}\label{prop:enc}
  All encodings from \autoref{ex:encodings} are uniform.
\end{proposition}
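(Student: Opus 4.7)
The plan is to reduce the uniformity condition for three of the four cases to a single bookkeeping lemma, and then treat the remaining (polynomial) case by invoking naturality.

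First, I would record the following elementary observation about filtering: for every map $g\colon X\to Y$ and every $y\in Y$, the equality $\fil_{\{y\}}\cdot \Bag(A\times g) = \fil_{g^{-1}(\{y\})}$ holds, simply because $\Bag(A\times g)$ relabels the second components of all pairs by $g$, and filtering by $\{y\}$ afterwards keeps precisely those original pairs whose second component lies in $g^{-1}(\{y\})$. Specializing to $g = \chi_{\{x\}}\colon X\to 2$ and $y = 1$ gives $\fil_{\{1\}}\cdot \Bag(A\times\chi_{\{x\}}) = \fil_{\{x\}}$, since $x$ is the unique preimage of $1$ under $\chi_{\{x\}}$.

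With this in hand, case~\ref{ex:encodings:3} (polynomial functors) is immediate from \autoref{prop:poly-flat-nat}: pasting the naturality square for $\flat$ at $\chi_{\{x\}}$ with the above identity yields
\[
\fil_{\{1\}}\cdot\flat_2\cdot F_\Sigma\chi_{\{x\}}
= \fil_{\{1\}}\cdot \Bag(A\times\chi_{\{x\}})\cdot\flat_X
= \fil_{\{x\}}\cdot\flat_X,
\]
which is exactly \eqref{eq:uniform-encoding}. The remaining three encodings are not natural (cf.~\autoref{ex:nonnat}), so I would verify uniformity by pointwise computation. In each case the strategy is to observe that the encoding $\flat_X(t)$ records, at every position $(a,y)$, a "weight" that depends only on the way $t$ interacts with the single element~$y$, and then to check that applying $F\chi_{\{x\}}$ preserves precisely this weight at the distinguished point $1\in 2$ because $x$ is the only preimage of $1$.

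For case~\ref{ex:encodings:1} ($\Powf$), both sides evaluate to $\mbraces{*}$ if $x\in t$ and to $\emptybag$ otherwise, using that $\chi_{\{x\}}[t]\ni 1 \iff x\in t$. For case~\ref{ex:encodings:2} (monoid-valued $M^{(-)}$), the defining formula $M^{(\chi_{\{x\}})}(t)(1) = \sum_{y\in\chi_{\{x\}}^{-1}(\{1\})} t(y) = t(x)$ shows that $\flat_2(F\chi_{\{x\}}(t))(m,1) = [t(x)=m\neq 0] = \flat_X(t)(m,x)$, and filtering on the appropriate set gives $\mbraces{t(x)}$ if $t(x)\neq 0$ and $\emptybag$ otherwise on both sides. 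Case~\ref{ex:encodings:4} (the bag functor) is just the specialization of case~\ref{ex:encodings:2} to $(\N,+,0)$, as noted in \autoref{ex:encodings}\ref{ex:encodings:4}. I do not expect any genuine obstacle here; the only mildly delicate point is to be careful that the "weight $0$ means no edge" convention built into the encodings of \ref{ex:encodings:2} and \ref{ex:encodings:4} is respected on both sides of the diagram, which is automatic because $\chi_{\{x\}}$ isolates $x$ rather than combining it with other states.
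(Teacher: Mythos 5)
Your proposal is correct and follows essentially the same route as the paper: a direct pointwise computation for $\Powf$ and $M^{(-)}$ (using $\chi_{\{x\}}[t]\ni 1\iff x\in t$ and $M^{(\chi_{\{x\}})}(t)(1)=t(x)$, respectively), the reduction of the bag functor to the monoid-valued case, and for polynomial functors the observation that naturality (\autoref{prop:poly-flat-nat}) implies uniformity, which the paper obtains by citing \autoref{prop:subnat}(\ref{prop:subnat:1}) and you obtain by inlining the same two-line pasting argument with the identity $\fil_{\{1\}}\cdot\Bag(A\times\chi_{\{x\}})=\fil_{\{x\}}$.
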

Uniform encodings interact nicely with the modularity constructions:
\begin{proposition}\label{prop:prod-coprod-enc-uniform}
  Uniform encodings are closed under product and coproduct.
\end{proposition}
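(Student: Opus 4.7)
The plan is to verify the uniformity diagram \eqref{eq:uniform-encoding} for each of the two cases in Construction \ref{const:comb-enc} separately, by an element-wise computation that uses the uniformity of each component encoding $\flat_{X,i}$.

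For the \emph{coproduct} case, take an element $\inj_j(t)\in \coprodi F_iX$. First I would unfold $\flat_X$: by definition,
\[
  \flat_X(\inj_j(t)) = \Bag(\inj_j\times X)\bigl(\flat_{X,j}(t)\bigr).
\]
Filtering commutes with relabelling the label component, i.e.~for every $b\in\Bag(A_j\times X)$ one has $\fil_{\{x\}}\cdot\Bag(\inj_j\times X)(b) = \Bag(\inj_j)(\fil_{\{x\}}(b))$; this is the routine bookkeeping step. Hence the top-right path of \eqref{eq:uniform-encoding} sends $\inj_j(t)$ to $\Bag(\inj_j)\bigl(\fil_{\{x\}}(\flat_{X,j}(t))\bigr)$. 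On the other path, the action of $F\chi_{\{x\}}$ on a coproduct is $F\chi_{\{x\}}(\inj_j(t)) = \inj_j(F_j\chi_{\{x\}}(t))$, and the same computation applied to $\flat_2$ and $\fil_{\{1\}}$ yields $\Bag(\inj_j)\bigl(\fil_{\{1\}}(\flat_{2,j}(F_j\chi_{\{x\}}(t)))\bigr)$. The two outputs agree by uniformity of $\flat_{X,j}$.

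For the \emph{product} case, take $t=(t_i)_{i\in I}\in\prodi F_iX$. By the defining equation $\flat_X(t)(\inj_i(a),y) = \flat_{X,i}(t_i)(a,y)$, evaluating $\fil_{\{x\}}(\flat_X(t))\in\Bag(A)$ at a label $\inj_i(a)$ gives
\[
  \fil_{\{x\}}(\flat_X(t))(\inj_i(a)) = \flat_X(t)(\inj_i(a),x) = \flat_{X,i}(t_i)(a,x) = \fil_{\{x\}}(\flat_{X,i}(t_i))(a).
\]
Similarly, since $F\chi_{\{x\}}(t) = (F_i\chi_{\{x\}}(t_i))_{i\in I}$, the other path of \eqref{eq:uniform-encoding} evaluates at $\inj_i(a)$ to $\fil_{\{1\}}(\flat_{2,i}(F_i\chi_{\{x\}}(t_i)))(a)$. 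Pointwise equality of the two bags on each label then follows from uniformity of $\flat_{X,i}$, so the diagram commutes.

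No real obstacle arises beyond the indexing bookkeeping; the only slightly delicate point is the auxiliary lemma that $\fil_{\{x\}}\cdot\Bag(\inj_j\times X) = \Bag(\inj_j)\cdot\fil_{\{x\}}$, which I would state once at the beginning of the coproduct argument and then apply both at $X$ and at $2$. Everything else is a direct chase of definitions, and no new hypotheses on the functors $F_i$ beyond the assumed uniformity of their encodings are needed.
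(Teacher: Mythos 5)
Your proposal is correct and follows essentially the same route as the paper: an element-wise verification of diagram \eqref{eq:uniform-encoding} for each of the two constructions, reducing in both cases to the uniformity of the component encodings $\flat_{X,i}$. The only cosmetic difference is that in the coproduct case you package the bookkeeping as the identity $\fil_{\{x\}}\cdot\Bag(\inj_j\times X) = \Bag(\inj_j)\cdot\fil_{\{x\}}$ (which is the naturality of $\fil_S$ in the label set, \autoref{lem:fils-properties}), whereas the paper evaluates both paths pointwise at a label $\inj_j a$ with an explicit case distinction on whether the coproduct components match.
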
%
That is, given functors $(F_i)_{i\in I}$ with uniform encodings
$(\flat_i)_{i\in I}$, then the encodings for the functors $\coprod_{i\in I} F_i$
and $\prod_{i\in I} F_i$, as defined in \autoref{const:comb-enc}, are uniform.

Admittedly, the condition in \autoref{def:uniform-encoding} is slightly
technical. However, we will now prove that it sits strictly between two standard
properties, naturality and subnaturality.
\begin{proposition}\label{prop:subnat}
  \begin{enumerate}
  \item\label{prop:subnat:1} Every natural encoding is uniform.
  \item\label{prop:subnat:2} Every uniform encoding is a subnatural transformation.
  \end{enumerate}
\end{proposition}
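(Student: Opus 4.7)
My plan is to handle the two items separately: item~(1) follows directly from the naturality square instantiated at a specific map, while item~(2) is a pointwise argument split by whether a given $y\in Y$ lies in the image of the injection.

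For item~(1), I would apply naturality of $\flat$ to the characteristic map $\chi_{\{x\}}\colon X \to 2$, giving
\[
  \flat_2 \cdot F\chi_{\{x\}} \;=\; \Bag(A\times \chi_{\{x\}})\cdot \flat_X.
\]
Post-composing both sides with $\fil_{\{1\}}$ reduces diagram~\eqref{eq:uniform-encoding} to the pointwise identity $\fil_{\{1\}}\cdot \Bag(A\times\chi_{\{x\}}) = \fil_{\{x\}}$, which is immediate from $\chi_{\{x\}}(x')=1$ iff $x'=x$.

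For item~(2), fix an injection $m\colon X\monoto Y$ and $t\in FX$, and aim to show that the bags $\flat_Y(Fm(t))$ and $\Bag(A\times m)(\flat_X(t))$ in $\Bag(A\times Y)$ agree at every $(a,y)$. If $y = m(x)$ for some (necessarily unique) $x$, injectivity of $m$ yields $\chi_{\{m(x)\}}\cdot m = \chi_{\{x\}}$, so applying uniformity first at $Y$ (with respect to $m(x)$) and then at $X$ (with respect to $x$) gives $\flat_Y(Fm(t))(a,m(x)) = \flat_X(t)(a,x)$, matching the right-hand side. If instead $y\notin m[X]$, the right-hand side is $0$; moreover $\chi_{\{y\}}\cdot m$ is the constant-$0$ map $X\to 2$, which factors as $\inl\cdot{!}\colon X\to 1\to 2$ (with $\inl$ sending $*\mapsto 0$), so uniformity at $Y$ rewrites the left-hand side as $\flat_2(F\inl(F!(t)))(a,1)$.

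What remains, and is the main obstacle, is the sublemma that $\flat_2(F\inl(s))(a,1) = 0$ for every $s\in F1$ and $a\in A$. My plan here is a proof by contradiction using finite support of bags: if this value were strictly positive for some $s$ and $a$, then applying uniformity at the infinite set $\N$, with the injection $m'\colon 1 \monoto \N$ sending $*\mapsto 0$, to the element $Fm'(s)\in F\N$, would yield $\flat_\N(Fm'(s))(a,y)$ equal to this same positive value for every $y\in\N\setminus\{0\}$—because for such $y$ the composite $\chi_{\{y\}}\cdot m'\colon 1\to 2$ again equals $\inl$. Then $\flat_\N(Fm'(s))$ would have infinite support, contradicting its membership in $\Bag(A\times\N)$. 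With the sublemma in hand, the second case of item~(2) closes and subnaturality follows.
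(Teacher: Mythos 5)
Your proposal is correct and follows essentially the same route as the paper: item~(1) via naturality at $\chi_{\{x\}}$ combined with the identity $\fil_{\{1\}}\cdot\Bag(A\times\chi_{\{x\}})=\fil_{\{x\}}$, and item~(2) by a pointwise case split on $y\in m[X]$ versus $y\notin m[X]$, with the same key sublemma that $\fil_{\{1\}}\cdot\flat_2\cdot F0$ yields the empty bag, proved by the same finite-support contradiction (the paper uses $1+\N$ with $\inl$ where you use $\N$ with $0\mapsto *$, a purely cosmetic difference).
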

The converses of both of the above implications fail in general. For the
converse of~\ref{prop:subnat:1} we saw a counterexample in \autoref{ex:nonnat},
and for the converse of~\ref{prop:subnat:2} we have the following
counterexample. 
\begin{example}\label{ex:nonsub}
  Consider the following encoding for the functor $FX=X\times X\times X$
  given by $A = 3 + 3$ and
  \begin{align*}
    &\flat_X\colon FX\to\Bag(A\times X)\\
    &\flat_X(x, y, z) =
    \begin{cases}
      \{(\inl 0, x), (\inl 1, y), (\inl 2, z)\} & \text{if }y = z,\\
      \{(\inr 0, x), (\inr 1, y), (\inr 2, z)\} & \text{if }y \not= z.\\
    \end{cases}
  \end{align*}
  This encoding is subnatural, since the value of $y=z$ is preserved by
  injections under $F$. But it is not uniform, for if $x\neq y\neq z$, then we have
  \begin{equation*}
    \fils{1}(\flat(F\chi_{\{x\}}(x, y, z))) = \fils{1}(\flat(1, 0, 0)) = \{\inl 0\} \neq \{\inr 0\} = \fils{x}(\flat(x, y, z)).
  \end{equation*}
\end{example}

\section{Computing the Simple Quotient}
\label{sec:simpl}

The previous coalgebraic partition refinement algorithm and its tool
implementation in \copar{} compute for a given encoding of a coalgebra $(X,c)$
the state set of its simple quotient $q\colon (X,c) \epito (Y, d)$, that is the
partition $Y$ of the set $X$ corresponding to behavioural equivalence. But the
algorithm does not compute the coalgebra structure $d$ of the simple quotient
(and note that it is not given the structure $c$ explicitly, to begin
with). Here we will fill this gap. We are interested in computing the
encoding $Y \xrightarrow{d} FY \xrightarrow{\flat_Y}\Bag(A\times Y)$ given the
encoding $X\xrightarrow{c} FX \xrightarrow{\flat_X}\Bag(A\times X)$ of the input
coalgebra and the quotient map $q\colon X \epito Y$.

The edge labels in the encoding of the quotient coalgebra relate to the labels
in the encoded input coalgebra in a functor specific way. For example, for
weighted transition systems, the labels are the transition weights, which are
added whenever states are identified. In contrast, for deterministic automata
(or when $F$ is a polynomial functor), the labels (i.e.~input symbols) on the
transitions remain the same even when states are identified.

Thus, when computing the encoding of the simple quotient, the modification of
edge labels is functor specific. Algorithmically, this is reflected by
specifying a new interface containing one function $\merge$, which is intended
to be implemented together with the refinement interface~(\autoref{sec:refint})
for every functor of interest. The abstract function $\merge$ is then used in
the generic \autoref{cons:main} in order to compute the encoding of the simple
quotient.

\begin{defn}
  A \emph{minimization interface} for a set functor $F$ equipped with a functor encoding $\flat_X: FX\to\Bag(A\times X)$ is a function
  $\merge\colon \Bag(A) \to \Bag(A)$
  such that the following diagram commutes for all $S\subseteq X$:
  \begin{equation}
    \label{eq:merge-axiom}
    \begin{tikzcd}[column sep=12mm]
      FX \ar[r, "\flat_X"]
      \ar[d, "F\chi_S",swap]
      &
      \Bag(A \times X)
      \ar[r, "\fil_S"]
      &
      \Bag(A)\ar[d, "\merge"]
      \\
      F2 \ar[r, "\flat_2"]
      &
      \Bag(A\times 2)
      \ar[r, "\fil_{\{1\}}"]
      & \Bag(A)
    \end{tikzcd}
  \end{equation}
\end{defn}

Intuitively, $\merge$ expresses what happens on the labels of edges from one
state to one block. It receives the bag of all labels of edges from a particular
source state $x$ to a \emph{set of} states $S$ that the minimization procedure
identified as equivalent. It then computes the edge labels from $x$ to the
merged state $S$ of the minimized coalgebra in a functor specific
way. \autoref{fig:merge-intuition} depicts this process for a monoid-valued
functor (cf.~\autoref{ex:functors}, item~\ref{ex:fun:monoid}).
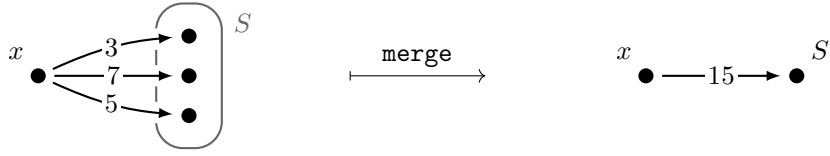
\begin{figure}[h]
  \centering
  \begin{tikzpicture}[%
    dot/.style={circle,fill,inner sep=.2em},%
    block/.style={rounded corners=10,color=black!60},
    lab/.style={fill=white,inner sep=1pt},
    ->,>=latex,thick,
    transition/.style={
      shorten <= 1mm,
      shorten >= 1mm,
      bend angle=10,
      preaction={
        draw=white, %
        -, %
        line width=8pt,
        shorten <=4mm,
      },
    },
    ]
    \begin{scope}
      \node[dot] (x) [label=120:$x$] {};
      \node[dot] (s2) [right=5em of x] {};
      \node[dot] (s1) [above=3mm of s2] {};
      \node[dot] (s3) [below=3mm of s2] {};
      \draw[block] ($(s1.north west)+(-1em,1em)$) rectangle ($(s3.south east)+(1em,-1em)$);
      \path (x) edge[transition,bend left] node[lab] {3} (s1);
      \path (x) edge[transition] node[lab] {7} (s2);
      \path (x) edge[transition,bend right] node[lab] {5} (s3);
      \node [above right of=s2,color=black!60] {$S$};
    \end{scope}
    \node (merge) [right=2cm of s2,minimum width=5em] {};
    \draw[commutative diagrams/.cd, every arrow,mapsto]
      (merge.west) -- node [above] {$\merge$} (merge.east);
    \begin{scope}
      \node[dot] (xneu) [right=2cm of merge, label=120:$x$] {};
      \node[dot] (sneu) [right=5em of xneu, label=60:$S$] {};
      \draw (xneu) edge[transition] node[lab] {15} (sneu);
    \end{scope}
  \end{tikzpicture}
  \caption{Example application of $\merge$ for the monoid-valued functor.}
  \label{fig:merge-intuition}
\end{figure}
In this example, $\merge$ sums up the labels (which are monoid elements),
resulting in a correct transition label to the new merged state.

Before we give formal definitions of $\merge$ for the functors of interest, let us show that
there is a close connection between properties of $\merge$ and the encoding;
this will simplify the definition of $\merge$ later (\autoref{ex:mininter}).

First, if $\merge$ receives the bag of labels from a source state to a
\emph{single} target state, then there is nothing to be merged and thus $\merge$
should simply return its input bag. Moreover, we can even characterize uniform
encodings by this property:
\begin{lemma}\label{L:merge-enc}
  Given a minimization interface, the following are equivalent:
  \begin{enumerate}
  \item\label{L:merge-enc:1} $\merge(\fil_{\{x\}}(\flat_X(t))) = \fil_{\{x\}}(\flat_X(t))$ for all $t\in FX$.
  \item $\flat_X$ is uniform.
  \end{enumerate}
\end{lemma}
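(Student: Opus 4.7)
The plan is to observe that both conditions are about what happens when $S$ is a singleton $\{x\}$, and to chain them via the merge axiom.

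First, I would instantiate the defining square \eqref{eq:merge-axiom} of the minimization interface at the singleton set $S = \{x\}$. This yields, for every $t \in FX$, the identity
\[
  \merge\bigl(\fil_{\{x\}}(\flat_X(t))\bigr) \;=\; \fil_{\{1\}}\bigl(\flat_2(F\chi_{\{x\}}(t))\bigr),
\]
i.e.\ the right-hand side of \ref{L:merge-enc:1} is always equal to the composite along the bottom and right of the uniformity diagram~\eqref{eq:uniform-encoding}.

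For the direction \ref{L:merge-enc:1}$\Rightarrow$(2), I would simply substitute the left-hand side of \ref{L:merge-enc:1} for the merge expression just obtained. The assumption gives $\merge(\fil_{\{x\}}(\flat_X(t))) = \fil_{\{x\}}(\flat_X(t))$, so
\[
  \fil_{\{x\}}(\flat_X(t)) \;=\; \fil_{\{1\}}\bigl(\flat_2(F\chi_{\{x\}}(t))\bigr),
\]
which is precisely the commutativity of \eqref{eq:uniform-encoding} evaluated pointwise. Since $x$ and $t$ were arbitrary, $\flat_X$ is uniform.

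For the converse (2)$\Rightarrow$\ref{L:merge-enc:1}, I would read the uniformity diagram as the identity $\fil_{\{x\}} \cdot \flat_X = \fil_{\{1\}} \cdot \flat_2 \cdot F\chi_{\{x\}}$ and combine it with the instantiated merge axiom to get
\[
  \merge\bigl(\fil_{\{x\}}(\flat_X(t))\bigr) \;=\; \fil_{\{1\}}\bigl(\flat_2(F\chi_{\{x\}}(t))\bigr) \;=\; \fil_{\{x\}}(\flat_X(t)).
\]
No obstacle is expected here, since the entire proof is an equational chase around a single diagram; the only subtlety is to notice that uniformity and condition~\ref{L:merge-enc:1} are two different ways of asserting that the $\merge$ supplied by the minimization interface acts trivially on singleton-$S$ filters, which is automatic once one specialises the merge axiom to $S = \{x\}$.
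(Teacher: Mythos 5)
Your proof is correct and is essentially identical to the paper's own argument: both instantiate the merge axiom~\eqref{eq:merge-axiom} at $S=\{x\}$ and observe that, since this always identifies $\merge\cdot\fil_{\{x\}}\cdot\flat_X$ with $\fil_{\{1\}}\cdot\flat_2\cdot F\chi_{\{x\}}$, condition~\ref{L:merge-enc:1} holds if and only if the uniformity square~\eqref{eq:uniform-encoding} commutes. The paper phrases this as a single pasted diagram whose inner part commutes iff its outside does; your equational chase is the same argument written pointwise.
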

\smallskip
\noindent
Similarly, the property that $\merge$ is \emph{always} the identity
characterizes \emph{natural} encodings:
\begin{lemma}\label{lem:merge-id-enc-nat}
  For every encoding $\flat_X\colon FX\to\Bag(A\times X)$, the following are equivalent:
  \begin{enumerate}
  \item\label{item:merge-id-enc-nat} The identity on $\Bag A$ is a minimization interface.
  \item\label{item:enc-nat-merge-id} $\flat_X$ is a natural transformation.
  \end{enumerate}
\end{lemma}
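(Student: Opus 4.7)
The plan is to unpack both conditions as pointwise equalities between bags and prove each implication by a short calculation. The elementary observation used in both directions is that for any $S \subseteq X$ one has $\fil_S = \fil_{\{1\}} \cdot \Bag(A \times \chi_S)$; indeed, evaluating the right-hand side at $b \in \Bag(A \times X)$ and $a \in A$ gives $\sum_{x \in \chi_S^{-1}(1)} b(a, x) = \sum_{x \in S} b(a, x) = \fil_S(b)(a)$.

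For the direction \ref{item:enc-nat-merge-id}~$\Rightarrow$~\ref{item:merge-id-enc-nat}, I apply naturality of $\flat$ at the map $\chi_S\colon X \to 2$ to obtain $\flat_2 \cdot F\chi_S = \Bag(A \times \chi_S) \cdot \flat_X$. Composing both sides with $\fil_{\{1\}}$ on the left and invoking the preceding identity yields exactly the merge-identity condition $\fil_S \cdot \flat_X = \fil_{\{1\}} \cdot \flat_2 \cdot F\chi_S$.

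For the converse \ref{item:merge-id-enc-nat}~$\Rightarrow$~\ref{item:enc-nat-merge-id}, I fix $f\colon X \to Y$ and $t \in FX$ and verify equality of $\Bag(A \times f)(\flat_X(t))$ and $\flat_Y(Ff(t))$ at an arbitrary point $(a, y) \in A \times Y$. Unfolding the action of $\Bag$ on maps identifies the left-hand side with $\fil_{f^{-1}(y)}(\flat_X(t))(a)$. Applying the merge-identity condition with $S = f^{-1}(y) \subseteq X$ rewrites this as $\flat_2(F\chi_{f^{-1}(y)}(t))(a, 1)$. The crucial factorization is $\chi_{f^{-1}(y)} = \chi_{\{y\}} \cdot f$ (since $f(x) = y$ iff $x \in f^{-1}(y)$); functoriality then gives $F\chi_{f^{-1}(y)}(t) = F\chi_{\{y\}}(Ff(t))$. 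A second invocation of the merge-identity condition, this time on $Y$ with $S = \{y\}$ applied to the element $Ff(t) \in FY$, rewrites the expression as $\fil_{\{y\}}(\flat_Y(Ff(t)))(a) = \flat_Y(Ff(t))(a, y)$, which is the desired equality.

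The only mildly delicate step is the factorization $\chi_{f^{-1}(y)} = \chi_{\{y\}} \cdot f$, which is precisely what allows the merge-identity condition---stated only in terms of characteristic functions into~$2$---to be chained twice: once on the source side to collapse the fibre $f^{-1}(y)$, and once on the target side to isolate the point $y$, thereby bridging the general case of an arbitrary map $f$. Everything else reduces to unfolding the definitions of $\fil$, $\flat$, and of $\Bag$ on maps.
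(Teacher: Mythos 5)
Your proof is correct and follows essentially the same route as the paper's: the forward direction pastes naturality at $\chi_S$ with the identity $\fil_S = \fil_{\{1\}}\cdot\Bag(A\times\chi_S)$, and the converse checks equality pointwise at each $(a,y)$ (equivalently, uses joint injectivity of the family $(\fil_{\{y\}})_{y\in Y}$), chaining the axiom twice via the factorization $\chi_{f^{-1}(y)}=\chi_{\{y\}}\cdot f$, exactly as in the paper (where that factorization appears as a separate small lemma). The only difference is presentational: you compute elementwise where the paper pastes commutative diagrams.
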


\begin{example}\label{ex:mininter}
  \begin{enumerate}
  \item For the finite powerset functor $\Powf(-)$, with labels $A = 1$, we define
    $\merge{}\colon \Bag 1\to \Bag 1$ by
    $\merge(\ell)(*) = \min(1, \ell(*))$.
  \item For monoid-valued functors $M^{(-)}$ with $A = M$, \merge{} is defined as
    \[
      \merge(\ell) =
      \begin{cases}
        \mbraces{ \Sigma\ell } & \Sigma \ell\neq 0\\
        \emptybag & \text{otherwise,}
      \end{cases}
    \]
    where $\Sigma\colon \Bag(M) \to M$ is defined
    by $\Sigma\mbraces{m_1, \ldots, m_n} = m_1 + \cdots + m_n$. 
  \item The encoding for the polynomial functor $F_\Sigma$ for a signature
    $\Sigma$ is a natural transformation and hence its minimization interface
    is given by $\merge=\id$ (see \autoref{lem:merge-id-enc-nat}).
  \end{enumerate}
\end{example}
\begin{proposition}\label{prop:all-merges-go}
  All $\merge$ maps in \autoref{ex:mininter} are minimization interfaces and run
  in linear time in the size of their input bag.
\end{proposition}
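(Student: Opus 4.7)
The plan is to treat each of the three $\merge$ definitions in \autoref{ex:mininter} separately, first establishing commutativity of diagram \eqref{eq:merge-axiom} and then bounding the runtime.

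For the polynomial functor $F_\Sigma$, there is nothing to do for commutativity: \autoref{prop:poly-flat-nat} tells us that $\flat_X$ is a natural transformation, and then \autoref{lem:merge-id-enc-nat} directly yields that $\merge = \id$ is a valid minimization interface. The runtime is trivially constant (or linear if one insists on copying the input bag).

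For the finite powerset functor, both paths of \eqref{eq:merge-axiom} reduce to the question ``does $t$ meet $S$?''. More precisely, for $t\in\Powf X$ and $S\subseteq X$ I would compute the upper path as $\fil_S(\flat_X(t))(*) = |t\cap S|$, so that $\merge\cdot\fil_S\cdot\flat_X(t)(*) = \min(1,|t\cap S|)$. For the lower path, $F\chi_S(t)=\chi_S[t]\in\Powf 2$ contains $1$ iff $t\cap S\neq\emptyset$, and $\fil_{\{1\}}(\flat_2(\chi_S[t]))(*)$ equals exactly the characteristic value of $t\cap S\neq\emptyset$. Both sides thus agree. The runtime is $O(1)$ in the size of the input bag.

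The monoid-valued case $M^{(-)}$ is the one that carries real content; I would handle it by tracing both paths in parallel. Unfolding the encoding gives
\[
  \fil_S(\flat_X(t)) \;=\; \mbraces{\,t(x) \mid x\in S,\, t(x)\neq 0\,},
\]
so $\Sigma\cdot\fil_S\cdot\flat_X(t) = \sum_{x\in S} t(x)$. Along the lower path, $M^{(\chi_S)}(t)\in M^{(2)}$ satisfies $M^{(\chi_S)}(t)(1)=\sum_{x\in S}t(x)$ by definition of the action of $M^{(-)}$ on maps, so $\fil_{\{1\}}(\flat_2(M^{(\chi_S)}(t)))$ equals the singleton bag $\mbraces{\sum_{x\in S}t(x)}$ if this sum is nonzero and $\emptybag$ otherwise; this matches $\merge(\fil_S(\flat_X(t)))$ by its case distinction. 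For runtime, $\Sigma$ is a single linear pass summing the elements of the input bag, plus a constant-time check against $0$, so the whole operation is linear in the bag's size.

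The main (and essentially only) subtlety is the monoid-valued calculation: one must be careful that the multiplicities in the intermediate bag $\fil_S(\flat_X(t))$ conspire with $\Sigma$ to reproduce exactly the sum computed pointwise by the functor action $M^{(\chi_S)}$. Once this identification is made, all three cases collapse to unfolding the definitions, and the runtime bounds are immediate from the concrete formulas for $\merge$.
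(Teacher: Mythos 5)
Your proposal is correct and follows essentially the same route as the paper's proof: the polynomial case is dispatched via naturality (\autoref{prop:poly-flat-nat}) together with \autoref{lem:merge-id-enc-nat}, and the powerset and monoid-valued cases are verified by unfolding both paths of \eqref{eq:merge-axiom}, with the key identity $\Sigma\cdot\fil_S\cdot\flat_X = \big(t\mapsto\sum_{x\in S}t(x)\big) = \big(t\mapsto M^{(\chi_S)}(t)(1)\big)$ doing the work in the monoid-valued case exactly as in the paper. The runtime bounds also match (constant for $\Powf$ and $F_\Sigma$, one linear summation pass for $M^{(-)}$).
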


\noindent
Having $\merge$ defined for the functors of interest, we can now use it to
compute the encoding of the simple quotient.

\begin{assumption}\label{ass:simpl}
  For the remainder of this section we assume that $F1\neq \emptyset$.
\end{assumption}
This is w.l.o.g.\ since $F1 = \emptyset$ if and only if $FX=\emptyset$ for all
sets $X$, for which there is only one coalgebra (which is therefore its own
simple quotient already).

\begin{proposition}\label{thm:merge-thm}
  Suppose that the set functor $F$ is equipped with a uniform
  encoding $\flat_X\colon FX\to\Bag(A\times X)$ and a minimization interface
  \merge{}. Then the diagram below commutes for every map $q\colon X\to Y$,
  \begin{equation}
    \label{eq:merge-thm}
    \begin{tikzcd}[column sep = 35]
      FX \ar[r, "\flat_X"]\ar[d, "Fq",swap]
        & \Bag(A\times X)\ar[r, "\Bag(A\times q)"]
        & \Bag(A\times Y)\ar[r, "\group"] \ar[d, dashed]
        & \Bag(A)^{(Y)} \ar[d, "\merge^{(Y)}"]\\
      FY \ar[rr, "\flat_Y"]
        &
        & \Bag(A\times Y)
        & \Bag(A)^{(Y)} \ar[l, "\ungroup",swap]
    \end{tikzcd}
  \end{equation}
\end{proposition}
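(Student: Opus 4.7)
The plan is to reduce the diagram to a pointwise equality and then chain together uniformity of $\flat$, functoriality of $F$, and the minimization interface axiom. Because $\group$ and $\ungroup$ are mutually inverse (Remark~\ref{rem:bags} and~\eqref{eq:group-ungroup}), commutativity of~\eqref{eq:merge-thm} amounts to
\[
  \group \cdot \flat_Y \cdot Fq \;=\; \merge^{(Y)} \cdot \group \cdot \Bag(A \times q) \cdot \flat_X
  \colon FX \longrightarrow \Bag(A)^{(Y)}.
\]
Unfolding the definition of $\group$, one checks that for any $b \in \Bag(A \times Z)$ and $z \in Z$ we have $\group(b)(z) = \fil_{\{z\}}(b)$. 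Hence the equality above, evaluated at $t \in FX$ and $y \in Y$, reads
\[
  \fil_{\{y\}}\!\bigl(\flat_Y(Fq(t))\bigr) \;=\; \merge\!\bigl(\fil_{q^{-1}(y)}(\flat_X(t))\bigr),
\]
using that $\fil_{\{y\}} \cdot \Bag(A \times q) = \fil_{q^{-1}(y)}$, which is immediate from the multiset comprehension definition of $\fil$. So the entire proposition reduces to verifying this one pointwise identity for every $y \in Y$ and every $t \in FX$.

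Next I would massage the left-hand side. Applying uniformity of $\flat$ (Diagram~\eqref{eq:uniform-encoding}) at the element $y \in Y$ rewrites
\[
  \fil_{\{y\}}\!\bigl(\flat_Y(Fq(t))\bigr) \;=\; \fil_{\{1\}}\!\bigl(\flat_2 \cdot F\chi_{\{y\}} \cdot Fq\,(t)\bigr).
\]
Here a short computation at the level of set maps gives $\chi_{\{y\}} \cdot q = \chi_{q^{-1}(y)} \colon X \to 2$, since a point $x \in X$ lands in $1$ on either side precisely when $q(x) = y$. Functoriality of $F$ then yields $F\chi_{\{y\}} \cdot Fq = F\chi_{q^{-1}(y)}$, so the right-hand side of the display above is
\[
  \fil_{\{1\}}\!\bigl(\flat_2 \cdot F\chi_{q^{-1}(y)}\,(t)\bigr).
\]

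Finally, I would apply the axiom~\eqref{eq:merge-axiom} of a minimization interface with $S := q^{-1}(y) \subseteq X$, which gives exactly
\[
  \fil_{\{1\}}\!\bigl(\flat_2 \cdot F\chi_{q^{-1}(y)}\,(t)\bigr) \;=\; \merge\!\bigl(\fil_{q^{-1}(y)}(\flat_X(t))\bigr).
\]
Chaining these three rewrites yields the required pointwise identity and hence the commutativity of~\eqref{eq:merge-thm}.

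The only real subtlety is the bookkeeping of how $\group$, $\ungroup$, $\Bag(A \times q)$ and $\fil$ interact; once one observes that $\group(b)(z) = \fil_{\{z\}}(b)$ and $\fil_{\{y\}} \cdot \Bag(A \times q) = \fil_{q^{-1}(y)}$, the rest is a straightforward concatenation of uniformity (to change the indexing block $\{y\} \subseteq Y$ into a reference via $\chi_{\{y\}}$), functoriality (to move $F\chi_{\{y\}} \cdot Fq$ into $F\chi_{q^{-1}(y)}$), and the $\merge$-axiom (to absorb $F\chi_{q^{-1}(y)}$ into a $\merge$ on the $X$-side). No naturality of $\flat$ is used, which is consistent with the counterexamples in \autoref{ex:nonnat}.
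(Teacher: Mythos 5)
Your proof is correct and follows essentially the same route as the paper's: reduce via $\group$/the jointly injective evaluations to the pointwise identity $\fil_{\{y\}}\cdot\flat_Y\cdot Fq = \merge\cdot\fil_{q^{-1}[y]}\cdot\flat_X$, then chain uniformity~\eqref{eq:uniform-encoding} at $y$, the identity $\chi_{\{y\}}\cdot q=\chi_{q^{-1}[y]}$ with functoriality, and the axiom~\eqref{eq:merge-axiom} for $S=q^{-1}[y]$. The only ingredient the paper includes that you omit is the preliminary fact $\merge(\emptybag)=\emptybag$ (\autoref{prop:merge-empty}, using subnaturality and \autoref{ass:simpl}), which is needed so that $\merge^{(Y)}$ is well defined as a map on the finitely supported functions $\Bag(A)^{(Y)}$ when $Y$ is infinite -- though in your pointwise formulation the finite support of the right-hand side follows from the proven equality itself, so this is a presentational rather than a substantive difference.
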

Note that the dashed arrow is not simply the identity map because~$\flat_X$
fails to be natural for most functors of interest (\autoref{ex:nonnat}).
\begin{proof}[Proof (Sketch)]
  One first proves that $\merge$ preserves empty bags:
  $\merge(\mbraces{}) = \mbraces{}$. The commutativity of the desired
  diagram~\eqref{eq:merge-thm} is proven by extending it by every evaluation map
  $\ev(y)\colon \Bag(A)^{(Y)} \to \Bag(A)$, $y \in Y$, which form a jointly
  injective family. The extended diagram for $y \in Y$ is then proven
  commutative using~\eqref{eq:uniform-encoding} for $y$, \eqref{eq:merge-axiom}
  for $S = q^{-1}[y]$, which is also used in the form $\chi_{\{y\}} \cdot q
  = \chi_S$ in addition to two easy properties of $\ev$ and
  $\fil$: $\fil_{\{y\}} = \ev(y)\cdot \group$ and $\fil_{\{y\}} \cdot \Bag(A
  \times q) = \fil_S$.
\end{proof}

\begin{construction}\label{cons:main}
  \sloppypar Given the encoded $F$-coalgebra $(X,\,\flat_X\!\cdot\! c)$, the
  quotient $q\colon X \epito Y$, and a minimization interface for $F$, we define the map
  $e\colon Y \to \Bag(A\times Y)$ as follows: given an element $y\in Y$, choose
  any $x \in X$ with $q(x) = y$ and put
  \[
    e(y) := (\ungroup\cdot\,\merge{{}^{(Y)}}\!\cdot \group \cdot\, \Bag(A\times
    q)\cdot \flat_X\cdot c)(x),
  \]
  where the involved types are as follows:
\begin{equation}\label{eq:goal}
\begin{tikzcd}
  X \ar[r, "c"]\ar[->>,d, swap, "q"] & FX \ar[r, "\flat_X"] &
  \Bag(A\times X)\ar[r, "\Bag(A\times q)"] &[3mm] \Bag(A\times Y)\ar[r, "\group"] & \Bag(A)^{(Y)} \ar[d, "\merge^{(Y)}"]\\
  Y \ar[rr, "e"] & & \Bag(A\times Y) & & \Bag(A)^{(Y)} \ar[ll, "\ungroup",swap]
\end{tikzcd}
\end{equation}
\end{construction}

For the well-definedness and the correctness of \autoref{cons:main}, we need to
prove that~\eqref{eq:goal} commutes. Moreover, observe that $c$ is not
directly given as input, and that the structure~$d\colon Y\to FY$ of the simple
quotient is not computed; only their encodings $\flat_X \cdot c$ and
$e = \flat_Y \cdot d$ are.

\begin{theorem} \label{constr:correct:linear}
  Suppose that $q\colon (X,c) \epito (Y,d)$ represents a quotient
  coalgebra. Then \autoref{cons:main} correctly yields the encoding $e = \flat_Y
  \cdot d$ given the encoding $\flat_X \cdot c$ and the partition of $X$
  associated to $q$.  

  If $\merge$ runs in linear time (in its parameter), then \autoref{cons:main}
  can be implemented with linear run time (in the size of the input coalgebra $\flat_X \cdot c$).
\end{theorem}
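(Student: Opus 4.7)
The plan is to separate the theorem into its two independent claims — correctness of the computed map $e$, and the linear run-time bound — and handle each with the tools already established.

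For the correctness claim, I would derive everything directly from \autoref{thm:merge-thm}. That proposition states that the outer boundary of diagram \eqref{eq:merge-thm} commutes for \emph{any} map $q\colon X \to Y$; in particular, for the surjection underlying the quotient coalgebra $q\colon (X,c)\epito (Y,d)$. Precomposing with $c\colon X \to FX$ yields
\[
\ungroup \cdot \merge^{(Y)} \cdot \group \cdot \Bag(A\times q) \cdot \flat_X \cdot c \;=\; \flat_Y \cdot Fq \cdot c.
\]
Since $q$ is a coalgebra morphism, $Fq \cdot c = d \cdot q$, so the right-hand side equals $\flat_Y \cdot d \cdot q$. Evaluating both sides at any $x \in X$ with $q(x) = y$ then shows that the value $e(y)$ produced by \autoref{cons:main} equals $\flat_Y(d(y))$, independently of the chosen representative. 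This simultaneously yields well-definedness and the equality $e = \flat_Y \cdot d$ claimed in the theorem.

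For the linear run-time claim, I would assume the standard \copar{} representation in which each $x \in X$ carries an $O(1)$-accessible pointer to its block $q(x) \in Y$. Let $n = |X|$ and $m = \sum_{x \in X} |\flat_X(c(x))|$, so that the size of the input encoding is $\Theta(n+m)$. A single pass over $X$ selects one representative $x_y$ for each $y \in Y$. For each representative, the top row of \eqref{eq:goal} is implemented in time proportional to $|\flat_X(c(x_y))|$ as follows: applying $\Bag(A\times q)$ is a linear scan replacing the second component of each pair by its block; $\group$ is a bucket pass placing each pair into the list indexed by its $Y$-component; $\merge^{(Y)}$ runs $\merge$ on each non-empty bucket, which by hypothesis is linear in the bucket's size and hence linear in total; and $\ungroup$ concatenates the resulting per-block bags. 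Summing over representatives gives total work $O\!\left(n + \sum_{y\in Y} |\flat_X(c(x_y))|\right) \le O(n+m)$, which is linear in the input size.

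The main obstacle I expect is not the correctness part — which is essentially immediate from \autoref{thm:merge-thm} — but making the linearity argument rigorous without recourse to hash tables. The delicate step is the $\group$/$\ungroup$ pair: iterating over all of $Y$ to locate and reset buckets when processing each $x_y$ would be quadratic. The fix is the standard bucket-reset trick from partition-refinement code, namely to maintain, for each $x_y$, only the sparse list of blocks that actually received entries, so that all bucket operations are charged to actual bag entries rather than to $|Y|$. I would justify this by pointing to the analogous data-structure discussion in the previous \copar{} papers, where exactly this representation is already in use.
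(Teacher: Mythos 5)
Your proposal is correct and follows essentially the same route as the paper: correctness is obtained by pasting the \autoref{thm:merge-thm} rectangle with the coalgebra-morphism square for $q$ and using surjectivity of $q$, and linearity is obtained by implementing the pipeline of \eqref{eq:goal} stage by stage with the sparse bucket-reset trick for $\group$ (the paper's $\operatorname{idx}$ array, reset only on the entries actually touched). The only cosmetic difference is that you run the pipeline once per block representative while the paper runs it for every $x\in X$ and discards duplicates at the end; both are $\mathcal{O}(n+m)$.
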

\noindent
\twnote{}%
In the run time analysis, a bit of care is needed so that the implementation of
$\group$ has linear run time; see the \fullorax{} for details.  From \autoref{prop:all-merges-go} we
see that for every functor from \autoref{ex:functors}, \autoref{cons:main} can
be implemented with linear run time.

\subsection{Modularity of Minimization Interfaces}

Modularity in the system type is gained by reducing functor
composition to products and coproducts (\autoref{rem:comp}). Since we want the
construction of the minimized coalgebra structure to benefit from the same
modularity, we need to verify closure under product and coproduct for the
notions required in \autoref{thm:merge-thm}. We have already done so
for uniform encodings (\autoref{prop:prod-coprod-enc-uniform}); hence it remains
to show that minimization interfaces can also be combined by product and coproduct:

\begin{construction}\label{C:merge-prod}
  Given a family of functors $(F_i)_{i\in I}$ together with uniform encodings
  $\flat_i\colon F_iX\to\Bag(A_i\times X)$ and minimization interfaces
  $\merge_i\colon \Bag(A_i)\to\Bag(A_i)$, we define \merge{} for the (co)product
  functors  $\prodi F_i$ and $\coprodi F_i$ as follows: 
  \[
    \begin{array}{l}
      \merge\colon\Bag(\coprodi A_i)\to\Bag(\coprodi A_i)
      \qquad
      \merge(t)(\inj_i a) = \merge_i(\filter_i(t))(a),
    \end{array}
  \]
  where $\filter_i\colon\Bag(\coprod_{j\in I}A_j)\to\Bag(A_i)$ is
  given by $\filter_i(f)(a) = f(\inj_i(a))$.
\end{construction}
Curiously, the definition of $\merge$ is the same for products and
coproducts, e.g.~because the label sets are the same
(see~\autoref{const:comb-enc}). However, the correctness proofs turns out to be
quite different. Note that for coproducts, all labels in the image of
$\fil_S\cdot\,\flat_X$ are in the same coproduct component. Thus, $\filter_i$
never removes elements and acts as a mere type-cast when the above $\merge$ is
used in accordance with its specification.
\begin{proposition}\label{prop:merge-for-prod}
  The \merge{} function defined in \autoref{C:merge-prod} yields a minimization
  interface for the functors $\prodi F_i$ and $\coprodi F_i$. It can be
  implemented with linear run-time if each $\merge_i$ is linear in its input.
\end{proposition}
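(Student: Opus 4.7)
The plan is to verify the commutativity of the minimization-interface axiom~\eqref{eq:merge-axiom} for each of the combined functors $\prodi F_i$ and $\coprodi F_i$ by reducing, via the explicit formulas in~\autoref{const:comb-enc}, to the axiom for each individual $F_i$. Since every element of $A = \coprodi A_i$ has the unique form $\inj_i(a)$, I would prove the equality of the two sides of~\eqref{eq:merge-axiom} as bags in $\Bag(A)$ by evaluating at each label $\inj_i(a)$ and comparing the resulting natural numbers.

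For the product functor $F = \prodi F_i$ and $t = (t_i)_{i\in I} \in FX$, the defining formula of the product encoding yields $\filter_i \cdot \fil_S \cdot \flat_X = \fil_S \cdot \flat_i \cdot \pr_i$. Unfolding the definition of $\merge$ from~\autoref{C:merge-prod} and then applying the minimization axiom for $\merge_i$ gives
\[
  \merge(\fil_S(\flat_X(t)))(\inj_i(a)) = \merge_i(\fil_S(\flat_i(t_i)))(a) = \fil_{\{1\}}(\flat_{i,2}(F_i\chi_S(t_i)))(a),
\]
which is precisely the $\inj_i(a)$-component of $\fil_{\{1\}}(\flat_2(F\chi_S(t)))$ by the same unfolding on the right-hand side, noting that $F\chi_S(t) = (F_i\chi_S(t_i))_{i\in I}$.

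For the coproduct functor $F = \coprodi F_i$, every $t \in FX$ has the form $\inj_i(t')$ for a unique $i$, and the coproduct encoding formula ensures that all labels of $\flat_X(t)$ lie in the $i$-th copy of $A$. Hence $\filter_i(\fil_S(\flat_X(t))) = \fil_S(\flat_i(t'))$ while $\filter_j(\fil_S(\flat_X(t))) = \emptybag$ for $j \neq i$; symmetrically on the right-hand side of~\eqref{eq:merge-axiom}. Matching on labels $\inj_i(a)$ then reduces to the $\merge_i$-axiom exactly as in the product case. The main obstacle I anticipate is matching both sides on labels $\inj_j(a)$ for $j \neq i$, which amounts to showing that $\merge_j(\emptybag) = \emptybag$. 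I would derive this by instantiating the $F_j$-axiom at $S = \emptyset$: since $\chi_\emptyset = \inl \cdot\, {!}\colon X \to 1 \monoto 2$, subnaturality of $\flat_j$ along the injection $\inl\colon 1 \monoto 2$ (granted by uniformity via~\autoref{prop:subnat}) forces every label of $\flat_{j,2}(F_j\chi_\emptyset(s))$ to have second coordinate $0$, so that $\fil_{\{1\}}$ produces the empty bag, and the axiom then collapses to $\merge_j(\emptybag) = \emptybag$.

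For the complexity bound, $\filter_i$ merely restricts an input bag's support to a single coproduct slice, implementable in time linear in the input size by a single traversal; since each entry of the input bag $t$ is routed to exactly one $\merge_i$, the total cost of the calls $\merge_i(\filter_i(t))$ is bounded by $\sum_i O(|\filter_i(t)|) = O(|t|)$ under the linear-time hypothesis on each $\merge_i$, yielding an overall linear-time implementation of the combined $\merge$.
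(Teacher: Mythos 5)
Your proposal is correct and follows essentially the same route as the paper: reduce the axiom for the combined functor to the component axioms via $\filter_i$ (the paper phrases this diagrammatically using auxiliary maps $\concat$, $\concat'$ and the joint injectivity of the family $(\filter_i)_{i\in I}$, whereas you compute pointwise at each label $\inj_i(a)$, which amounts to the same thing), and handle the off-component labels in the coproduct case by showing $\merge_j(\emptybag)=\emptybag$ --- which is exactly the paper's \autoref{prop:merge-empty}, proved there by the same argument you sketch (instantiate the axiom at $S=\emptyset$ and use subnaturality along $0\colon 1\monoto 2$). The runtime argument also matches.

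One small point you gloss over: to extract $\merge_j(\emptybag)=\emptybag$ from the $F_j$-axiom at $S=\emptyset$ you must apply both sides of that axiom to \emph{some} element of $F_jX$, so you need $F_j$ not to be the constantly empty functor; the paper makes this explicit by assuming w.l.o.g.\ that $F_j1\neq\emptyset$ for every summand (empty summands can be dropped from the coproduct without changing it). Without that proviso the claim can actually fail, since a constantly empty $F_j$ imposes no constraint on $\merge_j(\emptybag)$, yet the combined axiom at labels $\inj_j(a)$ still requires it to be empty.
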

\begin{corollary}
  The class of set functors having a minimization interface contains all
  polynomial and all  monoid-valued functors and is closed under
  product and coproduct. 
\end{corollary}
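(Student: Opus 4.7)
The plan is to assemble results already proved above into a short bookkeeping argument in two parts. First I would discharge the base cases. For a polynomial functor $F_\Sigma$, \autoref{prop:poly-flat-nat} establishes that its encoding $\flat_X$ is a natural transformation, so by \autoref{lem:merge-id-enc-nat} the identity $\id_{\Bag A}$ is a minimization interface. For a monoid-valued functor $M^{(-)}$, the explicit $\merge$ of \autoref{ex:mininter}(2)—summing the monoid elements of the input bag and collapsing a zero sum to the empty bag—is certified to be a minimization interface by \autoref{prop:all-merges-go}. So both classes of functors have minimization interfaces paired with their standard encodings.

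Next I would establish the closure claim. Suppose $(F_i)_{i\in I}$ is a family of functors each equipped with an encoding $\flat_i$ and a minimization interface $\merge_i$; \autoref{const:comb-enc} then gives canonical encodings for $\coprodi F_i$ and $\prodi F_i$ using labels $\coprodi A_i$, and \autoref{C:merge-prod} defines a combined $\merge$ by the componentwise rule $\merge(t)(\inj_i a) = \merge_i(\filter_i(t))(a)$. To conclude that this combined $\merge$ is a minimization interface I would invoke \autoref{prop:merge-for-prod}. Its hypothesis requires the base encodings to be uniform; but uniformity holds for the two base classes (polynomial encodings are natural, hence uniform by \autoref{prop:subnat}(1); monoid-valued encodings are uniform by \autoref{prop:enc}) and is preserved by product and coproduct by \autoref{prop:prod-coprod-enc-uniform}. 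Hence an easy induction over the closure operations keeps uniformity available whenever it is needed, and \autoref{prop:merge-for-prod} applies at every step.

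I do not anticipate any real obstacle: the corollary is essentially the statement that all the machinery of \autoref{sec:simpl} fits together. The only subtlety worth flagging is that \autoref{prop:merge-for-prod} is formulated for uniform encodings, which is not part of the bare definition of a minimization interface; the argument above addresses this by carrying the uniformity hypothesis along with the minimization interface through the entire induction, using \autoref{prop:enc}, \autoref{prop:poly-flat-nat}, and \autoref{prop:prod-coprod-enc-uniform} as the uniformity inputs.
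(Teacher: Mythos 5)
Your proposal is correct and is exactly the argument the paper intends: the corollary is a direct assembly of \autoref{prop:all-merges-go} (and \autoref{lem:merge-id-enc-nat}) for the base cases with \autoref{prop:merge-for-prod} for closure under product and coproduct. Your observation that uniformity of the encodings must be carried along through the induction---supplied by \autoref{prop:enc}, \autoref{prop:poly-flat-nat} with \autoref{prop:subnat}, and \autoref{prop:prod-coprod-enc-uniform}---is precisely the bookkeeping the paper performs implicitly in the lead-up to \autoref{C:merge-prod}.
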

\noindent
Consequently, \autoref{cons:main} correctly yields encoded quotient
coalgebras for those functors. Note that all functors from
\autoref{ex:mininter} are contained in this class. Furthermore, functor
composition can be dealt with by using coproducts as explained in
\autoref{rem:comp}.

\section{Reachability}
\label{sec:reach}

Having quotiented an encoded coalgebra by behavioural equivalence, the remaining
task is to restrict the coalgebra to the states that are actually reachable from
a distinguished initial state. For an intersection preserving set
functor, the reachable part of a pointed coalgebra can be constructed
iteratively, and this reduces to standard graph search on the canonical graph of
the coalgebra~\cite[Cor.~5.26f]{WissmannEA19}, which we now recall.
Throughout, $\Pow$ denotes the (full) powerset functor. The following is 
inspired by Gumm~\cite[Def.~7.2]{Gumm2005}:
\begin{defn}%
  \label{def:cangr}
  Given a functor $F\colon\Set\to\Set$, we define a family of maps
  $\tau^F_X\colon FX\to \Pow X$ by
  $
    \tau_{X}^F(t) = \{ x\in X\mid 1\xrightarrow{t}FX\text{ does not
      factorize through }F(X\setminus\{x\})\xrightarrow{Fi}FX \},
  $
  where $i\colon X\setminus\{x\}\hookrightarrow X$ denotes the inclusion
  map.

  The \emph{canonical graph} of a coalgebra $c\colon X\to FX$ is the directed graph
  $
    X\xrightarrow{c} FX \xrightarrow{\tau_{X}^F} \Pow X.
  $
  The nodes are the states of $(X,c)$ and one has an edge from $x$ to $y$
  whenever $y \in \tau^F_X (c(x))$. 
\end{defn}
\noindent
Note that for a pointed coalgebra $(X,c,i)$ its canonical graph is equipped with
the same point $i\colon 1 \to X$, that is, the canonical graph is equipped with
a root node $i(*) \in X$. As we pointed out in \autoref{sec:prelim}, reachability
of the pointed $\Pow$-coalgebra $(X,\tau^F_X\cdot c, i)$ precisely means that
every $x \in X$ is reachable from the root node in the canonical graph.
\begin{example}\label{ex:cangr}
  \begin{enumerate}
  \item For a deterministic automaton considered as a coalgebra for $FX = 2
    \times X^A$ the canonical graph is precisely its usual underlying state
    transition graph. %

  \item For the finite powerset functor $\Powf$, it is easy to see that
    $\tau^{\Powf}_X\colon \Powf X \subto \Pow X$ is the inclusion map. Thus,
    the canonical graph of a $\Powf$-coalgebra (a finitely branching graph) is
    itself.

  \item\label{ex:cangr:3} For the functor $\Bag(A \times -)$ the maps
    $\tau^{\Bag(A\times -)}_X\colon \Bag(A \times X) \to \Pow X$ act as follows
    \[
      \mbraces{(a_1,x_1), \ldots, (a_n, x_n)}
      \mapsto
      \{x_1, \ldots,  x_n\}.
    \]
    Hence, if we view a coalgebra $X \to \Bag(A\times X)$ as a
    finitely-branching graph whose edges are labelled by pairs of elements of
    $A$ and $\N$, then the canonical graph is that same graph but without the
    edge labels. This holds similarly also for other monoid-valued functors.
  \end{enumerate}
\end{example}

\noindent To perform reachability analysis on encoded coalgebras, we would like that the
canonical graph of a coalgebra and its encoding coincide. This clearly follows when,
given a set functor~$F$ with encoding $\flat_X\colon FX\to\Bag(A\times X)$, the
following equation holds for every set $X$:\!\!%
\begin{equation}\label{eq:desire}
  \tau_X^F = \big(FX \xrightarrow{~\flat_X~} \Bag(A \times X)
  \xrightarrow{~\tau_X^{\Bag(A\times -)}~} \Pow X\big).
\end{equation}%

\begin{assumption} \label{ass:reachable}
  For the rest of this section we assume that $F$ is an intersection
  preserving set functor equipped with a subnatural encoding $\flat_X\colon FX
  \to \Bag(A \times X)$.
\end{assumption}
\begin{remark}
  That $F$ preserves intersections is an extremely mild condition for set
  functors. All the functors in \autoref{ex:encodings} preserve
  intersections. Furthermore, the collection of intersection preserving set
  functors is closed under products, coproducts, and functor composition. A
  subfunctor $\sigma\colon F \monoto G$ of an intersection preserving functor
  $G$ preserves intersections if $\sigma$ is a cartesian natural transformation,
  that is all naturality squares are pullbacks (cf.~\autoref{rem:reduction}).

  Let us note that for every finitary set functor (cf.~\autoref{rem:comp}) the Trnkov\'a
  hull $\bar F$ (see~p.~\pageref{trnkova}) preserves
  intersections~\cite[Cor.~8.1.17]{AdamekEA20}.
\end{remark}

We are now ready to show the desired equality~\eqref{eq:desire} by point-wise
inclusion in either direction. Under the running \autoref{ass:reachable} it
follows that the encoding of a coalgebra can only mention states that are in the
coalgebra's canonical graph:
\begin{proposition}\label{prop:hin}
  For every $t\in FX$ we have that
  $\tau_X^{\Bag(A\times -)}(\flat_X(t)) \subseteq \tau_X^F(t)$.
\end{proposition}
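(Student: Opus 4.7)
The plan is to prove this by contrapositive: I will show that if $x \notin \tau_X^F(t)$, then $x \notin \tau_X^{\Bag(A\times -)}(\flat_X(t))$, using subnaturality of the encoding (which is available by \autoref{ass:reachable}).

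So fix $x \in X$ and suppose $x \notin \tau_X^F(t)$. By the defining property of $\tau_X^F$ (\autoref{def:cangr}), this means that $t\colon 1 \to FX$ factorizes through the inclusion-induced map $Fi\colon F(X\setminus\{x\}) \to FX$ with $i\colon X \setminus \{x\} \subto X$. That is, there is some $t' \in F(X\setminus\{x\})$ with $t = Fi(t')$. Since $i$ is injective, subnaturality of $\flat$ at $i$ gives the commutative square
\[
  \flat_X \cdot Fi \;=\; \Bag(A \times i) \cdot \flat_{X\setminus\{x\}}.
\]
Applying this to $t'$ yields
\[
  \flat_X(t) \;=\; \flat_X(Fi(t')) \;=\; \Bag(A\times i)\bigl(\flat_{X\setminus\{x\}}(t')\bigr),
\]
so $\flat_X(t)$ factorizes through $\Bag(A\times i) = \Bag(A\times(-))(i)$, witnessing exactly that $x \notin \tau_X^{\Bag(A\times -)}(\flat_X(t))$.

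There is no real obstacle here once one unfolds the definition of $\tau$: the map $\tau^G$ extracts exactly those states that cannot be ``omitted'' from a representative in $GX$, and such an omission is witnessed by a factorization through an inclusion; subnaturality then precisely transports such a witness from the $F$-side to the $\Bag(A\times -)$-side. The only subtlety worth flagging is that we only need $\flat$ to be subnatural (not fully natural), because we are factoring through the inclusion $i\colon X\setminus\{x\} \subto X$, which is always injective. This is what makes \autoref{ass:reachable} sufficient (and is why the converse inclusion, proved elsewhere, is the harder direction and requires the intersection-preservation hypothesis on $F$).
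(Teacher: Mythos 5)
Your proof is correct and follows exactly the same route as the paper's: contraposition, extracting the factorization $t = Fi(t')$ from $x \notin \tau_X^F(t)$, and transporting it along the subnaturality square of $\flat$ at the injective inclusion $i\colon X\setminus\{x\} \subto X$. Your closing remark about why subnaturality (rather than full naturality) suffices here, and why the converse inclusion is the harder direction, is also accurate.
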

\begin{proof}[Proof (Sketch)]
  This is shown by contraposition. If $x$ is not in $\tau_X^F(t)$, then we know that
  the map $t\colon 1 \to FX$ factorizes through $F(X\setminus\{x\})\xrightarrow{Fi}FX$
  (cf.~\autoref{def:cangr}). Using the subnaturality square of $\flat$ for the
  map $i$ then yields $x \not\in \tau_X^{\Bag(A\times -)}(\flat_X(t))$.
\end{proof}
For the converse inclusion, we additionally require that $F$ meets the
assumptions of the partition refinement algorithm:

\begin{theorem}\label{prop:her}
  The
  canonical graph of a finite coalgebra coincides with that of its encoding. 
\end{theorem}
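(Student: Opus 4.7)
The plan is to verify the pointwise equation~\eqref{eq:desire}, namely $\tau_X^F(t) = \tau_X^{\Bag(A\times -)}(\flat_X(t))$ for every $t \in FX$; this immediately gives that both graphs have the same node set $X$ and the same edge relation. \autoref{prop:hin} already supplies the inclusion $\tau_X^{\Bag(A\times -)}(\flat_X(t)) \subseteq \tau_X^F(t)$, so the real work is to prove the converse by contraposition: assuming $\flat_X(t)$ factorizes through $\Bag(A\times i)$ for the inclusion $i\colon X\setminus\{x\}\hookrightarrow X$, we need $t$ to factorize through $Fi$.

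The central device I would use is a \emph{double-cover} construction. Set $P := X \sqcup \{x'\}$ and define two injections $\iota_1,\iota_2\colon X \to P$: the map $\iota_1$ is the canonical inclusion into the first summand, while $\iota_2$ agrees with $\iota_1$ on $X\setminus\{x\}$ but sends the distinguished element $x$ to the fresh copy $x'$. Then $\iota_1$ and $\iota_2$ coincide precisely on $X\setminus\{x\}$, and their pullback in $\Set$ is exactly $i\colon X\setminus\{x\}\monoto X$ (taken along itself). Under \autoref{ass:reachable}, $F$ preserves intersections and hence this pullback of two monos; so $F(X\setminus\{x\})$ is the equalizer of $F\iota_1$ and $F\iota_2$ in $FP$, and it suffices to establish the single equation $F\iota_1(t) = F\iota_2(t)$.

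To get this equation, I would invoke injectivity of the pairing $\fpair{F!_P,\flat_P}\colon FP\monoto F1\times\Bag(A\times P)$ from \autoref{def-encoding} and check the two components separately. The first component is immediate because $!_P\cdot\iota_1 = !_P\cdot\iota_2 = !_X$, so $F!_P\cdot F\iota_1(t) = F!_X(t) = F!_P\cdot F\iota_2(t)$. For the second, subnaturality of $\flat$ along each $\iota_k$ yields $\flat_P\cdot F\iota_k(t) = \Bag(A\times\iota_k)(\flat_X(t))$, and since by hypothesis $\flat_X(t)$ already lives in $\Bag(A\times(X\setminus\{x\}))$, the maps $\iota_1$ and $\iota_2$ act identically on its support; hence the two images in $\Bag(A\times P)$ coincide. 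Injectivity now delivers $F\iota_1(t)=F\iota_2(t)$, intersection preservation extracts a preimage $s\in F(X\setminus\{x\})$ with $Fi(s)=t$, and thus $x\notin\tau_X^F(t)$, as required.

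The main obstacle I foresee is a slightly delicate degenerate case: if $X=\{x\}$, then $X\setminus\{x\}=\emptyset$, and one must check that $F\emptyset$ behaves as the appropriate empty intersection. This is where the finiteness assumption and the passage to the Trnkov\'a hull (p.~\pageref{trnkova}) pay off, since the hull preserves all intersections including the empty one; the argument above then goes through uniformly. Beyond this, the proof is a fairly mechanical chase through the pushout/pullback square once the double cover~$P$ is in place.
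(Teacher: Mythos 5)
Your proposal is correct and matches the paper's proof essentially verbatim: your double cover $P = X\sqcup\{x'\}$ with the two injections $\iota_1,\iota_2$ is, up to isomorphism, the paper's codomain $(X\setminus\{x\})+2$ with the maps $(x\leadsto 0)$ and $(x\leadsto 1)$, and the key steps --- componentwise comparison under the injective pairing $\fpair{F!,\flat_P}$ using subnaturality in each component, followed by intersection preservation to pull $t$ back through $Fi$ --- are exactly those of the paper. The only quibble is terminological: what you actually use is the preserved pullback (intersection) of the two monos rather than literally an equalizer, though in this situation the two give the same factorization of $t$ through $F(X\setminus\{x\})$.
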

For every finite set $X$ one proves the equation~\eqref{eq:desire}: $\tau_X^F =
\tau_{X}^{\Bag(A\times -)} \cdot\,\flat_X$. It suffices to prove the reverse of
the inclusion in \autoref{prop:hin} -- again by contraposition. This time the
argument is more involved using that the map $\fpair{F!,\flat_X}$ is injective
(\autoref{def-encoding}), and that $F$ preserves intersections.
(For details see the \fullorax{}.)

As a consequence of \autoref{prop:her}, the states in the reachable part of a
pointed coalgebra $(X, c, i)$ are precisely the states reachable from the node
$i(*) \in X$ in the (underlying graph of the) encoding
$\flat_X \cdot c\colon X \to \Bag(A \times X)$,
cf.~\autoref{ex:cangr}\ref{ex:cangr:3}. Thus, given (the encoding of) a pointed
coalgebra $(X,c,i)$, its reachable part can be computed in linear time by a
standard breadth-first search on the encoding viewed as a graph (ignoring the labels).

This holds for all the functors in \autoref{ex:encodings} and
every functor obtained from them by forming products, coproducts and functor
composition.

\section{Conclusions and Future Work}

We have shown how to extend a generic coalgebraic partition refinement algorithm
to a fully fledged minimization algorithm. Conceptually, this is the step from
computing the simple quotient of a coalgebra to computing the well-pointed
modification of a pointed coalgebra. To achieve this, our extension includes two
new aspects: (1)~the computation of the transition structure of the simple
quotient given an encoding of the input coalgebra and the partition of its state
space modulo behavioural equivalence, and (2)~the computation of the encoding of
the reachable part from the encoding of a given pointed coalgebra. Both of these
new steps have also been implemented in the Coalgebraic Partition Refiner
\copar, together with a new pretty-printing module that prints out the resulting
encoded coalgebra in a functor-specific human-readable syntax.%

There are a number of questions for further work. This mainly concerns
broadening the scope of generic coalgebraic partition refinement algorithms.
First, we will further broaden the range of system types that our algorithm and
tool can accommodate, and provide support for base categories beside the
sets as studied in the present work,
e.g.~nominal sets, which underlie nominal
automata~\cite{BojanczykEA14,SchroderEA17}.

\smnote{}
Concerning genericity, there is an orthogonal approach by Ranzato and
Tapparo~\cite{RanzatoT08}, which is variable in the choice of the \emph{notion
  of process equivalence} -- however within the realm of standard labelled
transition systems (see also~\cite{GrooteEA17}). Similarly, Blom and
Orzan~\cite{BlomOrzan03,BlomOrzan05} use a technique called \emph{signature
  refinement}, which handles strong and branching bisimulation as well as Markov
chain lumping (see also \cite{DijkPol18}).

To overcome the bottleneck on memory consumption that is inherent in partition
refinement~\cite{ValmariF10,Valmari09}, symbolic and distributed methods have
been employed for many concrete system
types~\cite{BergaminiEA05,BlomOrzan03,BlomOrzan05,GaravelHermanns02,Wijs15,DijkPol18}.
We will explore in future work whether these methods, possibly generic in the
equivalence notion, can be extended to the coalgebraic generality.

\subparagraph*{Acknowledgement} We would like to thank the anonymous referees
for their comments, which helped us to improve the presentation.
\twnote{}

\label{maintextend}

\bibliographystyle{plainurl}%
\bibliography{refs}

\clearpage
\clearpage
\appendix

\section{Additional Notation in Omitted Proofs}
\label{app:notation}
Recall that the quotients of a set $X$, represented by surjective maps
$X \epito P$ are in one-to-one correspondence with partitions on $X$. More
generally, every map $f\colon X\to Y$ induces an equivalence relation
\[
  \ker(f) = \{ (x_1,x_2) \in X \times X \mid f(x_1) = f(x_2)\}
\]
called the \emph{kernel} of $f$. If $f\colon X\epito Y$ is surjective, then $\ker(f)$ is the
equivalence relation corresponding to the partition $Y$ on $X$.

\section{Omitted Proofs}
\subsection{Proofs for \autoref{sec:prelim}}

\subsection*{Details for \autoref{rem:reduction}}

Recall that a (sub)natural transformation
$\sigma$ from a functor $F$ to a functor $G$ is a set-indexed family of maps
$\sigma_X\colon FX \to GX$ such that for every (injective) function $f\colon X \to Y$ we
have
\[
  \begin{tikzcd}
    FX
    \arrow{r}{\sigma_X}
    \arrow{d}[swap]{Ff}
    &
    GX
    \arrow{d}{Gf}
    \\
    FY
    \arrow{r}{\sigma_Y}
    &
    GY
  \end{tikzcd}
\]
As usual, we shall also say that the family $\sigma$ is \emph{(sub)natural in $X$}.

A subnatural transformation $\sigma\colon F \to G$ is called \emph{subcartesian} if the
above ``naturality squares'' are pullbacks for every injective map $f$. 

Given a natural transformation $\sigma\colon F \to G$ every $F$-coalgebra
$(X,c)$ yields a $G$-coalgebra $(X, \sigma_X\cdot c)$.  

Recall that $F$ is a \emph{subfunctor} of $G$ if there is a
natural transformation $\sigma\colon F \to G$ all of whose components
$\sigma_X\colon FX \monoto GX$ are injective maps.

\sloppypar
In the following proposition point~\ref{prop:reduction:1} is
standard, for point~\ref{prop:reduction:2}
see~\cite[Prop.~2.13]{concurSpecialIssue}, and point~\ref{prop:reduction:3} can
be gleaned from~\cite[Thm.~4.6]{WissmannEA19}. We provide a full proof for the
convenience of the reader. 

\begin{proposition}\label{prop:reduction}
  Let $\sigma\colon F \to G$ be natural transformation.
  \begin{enumerate}
  \item\label{prop:reduction:1} Behavioural equivalence wrt.~$F$ implies that for $G$. 
  \item\label{prop:reduction:2} If $F$ is a subfunctor of $G$ via $\sigma$, then
    the problem of computing the simple quotient for $F$-coalgebras reduces to
    that for $G$-coalgebras.
  \item\label{prop:reduction:3} If $\sigma$ is subcartesian, then the problem of
    computing the reachable part for pointed $F$-coalgebras reduces to that for
    pointed $G$-coalgebras.
  \end{enumerate}
\end{proposition}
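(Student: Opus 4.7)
For item~(1), my plan is a one-step naturality chase. Given an $F$-coalgebra morphism $h\colon (X,c) \to (Y,d)$, I would compose $Fh \cdot c = d \cdot h$ with $\sigma$ on the left and invoke the naturality square $\sigma_Y \cdot Fh = Gh \cdot \sigma_X$ to obtain $Gh \cdot (\sigma_X \cdot c) = \sigma_Y \cdot d \cdot h = (\sigma_Y \cdot d) \cdot h$. Hence $h$ is simultaneously a morphism between the induced $G$-coalgebras, and so $F$-behavioural equivalence implies $G$-behavioural equivalence.

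For item~(2), my plan is to form the $G$-simple quotient $q\colon (X, \sigma_X\cdot c) \epito (Y,e)$ and then promote it to the $F$-simple quotient. First, I would lift $e$ to an $F$-structure $d\colon Y \to FY$ with $\sigma_Y \cdot d = e$. Naturality of $\sigma$ (applied to~$q$) together with the fact that $q$ is a $G$-morphism yields $e \cdot q = \sigma_Y \cdot Fq \cdot c$; since $q$ is surjective and $\sigma_Y$ is injective, the prescription $d(y) = Fq(c(x))$ for any $x$ with $q(x)=y$ is well-defined and satisfies $d \cdot q = Fq \cdot c$, making $q$ an $F$-coalgebra morphism $(X,c) \to (Y,d)$. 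To see that $(Y,d)$ is $F$-simple, I would consider an arbitrary $F$-quotient $p\colon (Y,d) \epito (Z,d')$; by naturality the same underlying map $p$ is a $G$-quotient $(Y,e) \epito (Z, \sigma_Z \cdot d')$, which is an isomorphism by $G$-simplicity of $(Y,e)$, hence so is $p$. Finally, by item~(1), $q$ identifies all $F$-behaviourally equivalent states, so $(Y,d)$ is indeed the $F$-simple quotient.

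For item~(3), my plan is to show that the posets of pointed sub-$F$-coalgebras of $(X,c,i)$ and of pointed sub-$G$-coalgebras of $(X, \sigma_X\cdot c, i)$ coincide as sub\emph{sets} of $X$, from which the equality of reachable parts follows. One inclusion uses naturality exactly as in item~(1). For the converse, suppose $m\colon (T,t) \monoto (X, \sigma_X \cdot c)$ is a pointed sub-$G$-coalgebra, so that $Gm \cdot t = \sigma_X \cdot c \cdot m$. Then the pair $(c\cdot m,\ t)$ forms a cone on the subcartesian square at $m$, since $\sigma_X \cdot (c \cdot m) = Gm \cdot t$. Applying the pullback property, I obtain a unique $s\colon T \to FT$ with $Fm \cdot s = c \cdot m$ (so $m$ is an $F$-coalgebra morphism) and $\sigma_T \cdot s = t$. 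The preservation of the initial state is automatic because the underlying map $m$ is the same on both sides.

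The main obstacle I anticipate is the bookkeeping in item~(2): one must carry three distinct coalgebra structures ($d$, $d'$, and their $\sigma$-images) through the simplicity argument, and make sure that transporting a hypothetical proper $F$-quotient $p$ to a $G$-quotient genuinely witnesses non-simplicity of $(Y,e)$---which requires noting that $p$ stays the same as a function and that ``being an isomorphism'' is a property of that function rather than of the coalgebra structure. Item~(3), by contrast, reduces to a single use of the pullback universal property once the right cone is identified, and item~(1) is immediate.
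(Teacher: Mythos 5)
Your proposal is correct and follows essentially the same route as the paper's proof: item~(1) is the identical naturality chase, item~(2) lifts the $G$-structure on the quotient through the injective component $\sigma_Y$ using surjectivity of $q$ (the paper phrases this as the induced functor $\Coalg F\to\Coalg G$ preserving and reflecting quotients), and item~(3) is the same diagonal fill-in via the universal property of the subcartesian pullback square. The only difference is organizational, not mathematical.
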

\noindent
Consequently, if $F$ is a subfunctor of $G$ via a subcartesian $\sigma$ the
minimization problem for $F$-coalgebras reduces to that for $G$-coalgebras.
\begin{proof}
  \begin{enumerate}
  \item This follows from the fact that for every morphism $h\colon (X,c) \to
    (Y,d)$ of $F$-coalgebras we have the following commutative diagram due to
    the naturality of $\sigma$:
    \begin{equation}\label{eq:pres}
      \begin{tikzcd}
        X
        \arrow{r}{c}
        \arrow{d}[swap]{h}
        &
        FX
        \arrow{r}{\sigma_X}
        \arrow{d}{Fh}
        &
        GX
        \arrow{d}{Gh}
        \\
        Y
        \arrow{r}{d}
        &
        FY
        \arrow{r}{\sigma_Y}
        &
        GY
      \end{tikzcd}
    \end{equation}
    This      actually      shows      that      the      object      assignment
    $(X,c)  \mapsto  (X,\sigma_X  \cdot  c)$  is a  functor  from  the  category
    $\Coalg  F$  of  all  $F$-coalgebras  to the  category  $\Coalg  G$  of  all
    $G$-coalgebras, which acts as the identity on morphisms.
    
  \item We first prove that the above functor $\Coalg F \to \Coalg G$ preserves
    and reflects quotient coalgebras if $\sigma_X$ is injective.

    For preservation, note that every quotient $q\colon (Y,d) \epito (X,c)$
    yields the quotient
    $q\colon (Y,\sigma_Y \cdot d) \epito (X, \sigma_X \cdot c)$ wrt.~the functor
    $G$, cf.~Diagram~\eqref{eq:pres}.

    For reflection, let $(X,c)$ be an $F$-coalgebra and let
    $q\colon (X, \sigma_X \cdot c) \epito (Y, d')$ be any quotient of
    $G$-coalgebras. Since $q$ is surjective and $\sigma_Y$ injective we obtain a
    unique coalgebra structure $d\colon Y \to FY$ such that $q$ is a morphism of
    $F$-coalgebras:
    \[
      \begin{tikzcd}
        X
        \arrow{r}{c}
        \arrow[->>]{d}[swap]{q}
        &
        FX
        \arrow{d}{Fq}
        \arrow[>->]{r}{\sigma_X}
        &
        GX
        \arrow{d}{Gq}
        \\
        Y
        \arrow[dashed]{r}{d}
        \arrow[shiftarr = {yshift=-15}]{rr}{d'}
        &
        FY
        \arrow[>->]{r}{\sigma_Y}
        &
        GY
      \end{tikzcd}
    \]
    
    The desired reduction is now obvious since the simple quotient of an
    $F$-coalgebra $(X,c)$ coincides with that of the $G$-coalgebra $(X,\sigma_X
    \cdot c)$.

  \item We first prove that the functor $\Coalg F \to \Coalg G$ induced by
    $\sigma$ preserves and reflects pointed subcoalgebras.

    Preservation is clear
    by using Diagram~\eqref{eq:pres} and the fact that a morphism $h\colon
    (X,c,i) \to (Y,d,j)$ of pointed coalgebras preserves the point: $h \cdot i =
    j$.

    For reflection, let $(X,c, i)$ be any $F$-coalgebra and let
    $m\colon (S,s', j) \monoto (X, \sigma_X \cdot c, i)$ be a subcoalgebra. Then
    from the fact that $\sigma$ is subcartesian we obtain a unique coalgebra
    structure $s\colon S \to FS$ such that $m\colon (S,s,j) \to (X,c,i)$ is a
    pointed subcoalgebra wrt.~$F$:
    \[
      \begin{tikzcd}
        S
        \arrow[dashed]{r}{s}
        \arrow[>->]{d}[swap]{m}
        \arrow[shiftarr = {yshift=15}]{rr}{s'}
        &
        FS
        \arrow{r}{\sigma_S}
        \arrow{d}[swap]{Fm}
        \pullbackangle{-45}
        &
        GS
        \arrow{d}{Gm}
        \\
        X
        \arrow{r}{c}
        &
        FX
        \arrow{r}{\sigma_X}
        &
        GX        
      \end{tikzcd}
    \]

    This implies that the reachable parts of $(X,c,i)$ wrt.~$F$ and
    $(X,\sigma_X\cdot c, i)$ wrt.~$G$ coincide, which clearly establishes the
    desired reduction.\qedhere
  \end{enumerate}
\end{proof}

\noindent For further use we collect a few properties of the filter function $\fil_S$.
\begin{lemma}\label{lem:fils-properties}
  \begin{enumerate}
  \item\label{prop:fils-natural} The maps
    $\fil_S\colon \Bag(A \times X) \to \Bag(A)$ are natural in $A$.
  \item\label{L:ev}
    For every $x\in X$, we have
    \begin{equation}\label{eq:filsingle}
      \fil_{\{x\}} = \ev(x)\cdot\group,
    \end{equation}
    where $\ev: X \to Y^X \to Y$ is the evaluation of the exponential $Y^X$ (in curried form).
  \item\label{item:fils-eq-group} The function
    $\fpair{\fils{x}}_{x\in X}\colon \Bag(A\times X)\to (\Bag A)^X$ has a
    codomain restriction to $(\Bag A)^{(X)}$, and this is equal to the function
    $\group$.
    
  \item\label{lem:filsingle2} For every function $f\colon X\to Y$ and $S\subseteq X$,
    \begin{equation}
      \label{eq:filsingle2}
      \fil_S = \fil_{f[S]} \cdot\,\Bag(A\times f).
    \end{equation}
  \end{enumerate}
\end{lemma}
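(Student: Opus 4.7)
The plan is to prove all four items by direct computation after unfolding the definitions of $\fil_S$, $\group$, and the action of $\Bag$ on functions. Throughout I would represent every bag $b\in\Bag Z$ as its multiplicity function $Z\to\N$, as permitted by \autoref{rem:bags}, and establish each equation pointwise; none of the four items is deep, and the whole lemma is essentially a bookkeeping exercise.

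For part~\ref{prop:fils-natural}, I would fix a map $g\colon A\to A'$ and evaluate the two composites $\Bag g\cdot\fil_S$ and $\fil_S\cdot\Bag(g\times X)$ on an arbitrary $f\in\Bag(A\times X)$ at an arbitrary $a'\in A'$; both sides reduce to the finite double sum $\sum_{x\in S}\sum_{a\in g^{-1}(a')} f(a,x)$, with the only difference being the summation order, and the two orders agree since the sums are finite. For part~\ref{L:ev}, I would unfold $\group = \curry\cdot\Bag(\swap)$; the computation $\Bag(\swap)(f)(x,a) = f(a,x)$ then yields $(\ev(x)\cdot\group)(f) = \bigl(a\mapsto f(a,x)\bigr)$, which is precisely $\fil_{\{x\}}(f)$ by definition.

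Part~\ref{item:fils-eq-group} is then a formal consequence of part~\ref{L:ev}: by the universal property of the power $(\Bag A)^X$, the tuple map $\fpair{\fils{x}}_{x\in X}$ is uniquely determined by $\ev(x)\cdot\fpair{\fils{x}}_{x\in X} = \fils{x}$, and by part~\ref{L:ev} these components coincide with $\ev(x)\cdot\group$; since $\group$ by construction factorises through the subset $(\Bag A)^{(X)}\subseteq (\Bag A)^X$, the codomain restriction is well defined.

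The main step is part~\ref{lem:filsingle2}. Unfolding $\Bag(A\times f)(g)(a,y) = \sum_{x\in f^{-1}(y)} g(a,x)$ gives
\[
  \fil_{f[S]}\bigl(\Bag(A\times f)(g)\bigr)(a)
  = \sum_{y\in f[S]}\,\sum_{x\in f^{-1}(y)} g(a,x)
  = \sum_{\substack{x\in X\\ f(x)\in f[S]}} g(a,x),
\]
and to conclude $\fil_S(g)(a) = \sum_{x\in S} g(a,x)$ one needs the identification $f^{-1}[f[S]] = S$. This is the one place where genuine care is required: the equality holds whenever $S$ is saturated under $\ker f$, which is precisely the situation of the intended application in the sketch of \autoref{thm:merge-thm} (where $f = q$ is a quotient and $S = q^{-1}[y]$ is a $q$-fibre, so that $q^{-1}[q[S]] = S$ automatically). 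I would therefore either make the saturation hypothesis explicit in the statement or reformulate the item as $\fil_{f^{-1}[T]} = \fil_T\cdot\Bag(A\times f)$ for $T\subseteq Y$, which is the form actually invoked and whose proof is exactly the same computation, now without any side condition.
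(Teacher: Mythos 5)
Your proofs of items \ref{prop:fils-natural}--\ref{item:fils-eq-group} are correct and follow essentially the same route as the paper's (direct pointwise computation on multiplicity functions for \ref{prop:fils-natural} and \ref{L:ev}, and \ref{item:fils-eq-group} as a formal consequence of \ref{L:ev} via the universal property of the power); the paper merely cites earlier work for \ref{prop:fils-natural} instead of recomputing it.

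The substantive point is your observation about item \ref{lem:filsingle2}, and you are right: as stated, for arbitrary $f$ and arbitrary $S\subseteq X$, the equation $\fil_S = \fil_{f[S]}\cdot\Bag(A\times f)$ is \emph{false}. The left-hand side sums $t(a,x)$ over $x\in S$, the right-hand side over $x\in f^{-1}[f[S]]$, and these differ whenever $S$ is not saturated under $\ker f$ (e.g.\ $X=\{x_1,x_2\}$, $f=\mathord{!}\colon X\to 1$, $S=\{x_1\}$, $t=\mbraces{(a,x_2)}$ gives $0\neq 1$). The paper's own proof commits exactly this error at the step $\sum_{x\in X,\,f(x)\in f[S]}t(a,x)=\sum_{x\in S}t(a,x)$, which silently assumes $f^{-1}[f[S]]=S$. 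As you note, every invocation of \eqref{eq:filsingle2} in the paper is of the saturated form — in \autoref{thm:merge-thm} with $S=q^{-1}[y]$, and in \autoref{prop:subnat} and \autoref{lem:merge-id-enc-nat} with $f=\chi_S$ and the fibre $\chi_S^{-1}[\{1\}]=S$ — so no downstream result is affected. Your proposed repair, restating the item as $\fil_{f^{-1}[T]} = \fil_T\cdot\Bag(A\times f)$ for $T\subseteq Y$ (or adding the hypothesis $f^{-1}[f[S]]=S$), is the correct fix and is proved by the same computation without any side condition.
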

\begin{proof}%
  \begin{enumerate}
  \item This was proved in previous work~\cite[Rem.~6.5]{concurSpecialIssue}.
    
  \item Given $x\in X$, $a\in A$ and $f\in \Bag(A\times X)$,
    \begin{align*}
      \fils{x}(f)(a) &= \sum_{y\in \{x\}}f(a, y) = f(a, x)
      = (\lambda b. f(b, x))(a)\\
      &= \ev(x)(\lambda y.\lambda b. f(b, y))(a)
      = \ev(x)(\group(f))(a).
    \end{align*}

  \item The first part of the statement is clear, and the second part follows
    from \autoref{L:ev}.

  \item Given $t\in\Bag(A\times X), f\colon X\to Y, S\subseteq X$ and $a\in A$:
    \begin{align*}
      \fil_{f[S]}&(\Bag(A\times f)(t))(a) = \sum_{y\in f[S]}\Bag(A\times f)(t)(a, y)
      =\sum_{y\in f[S]}\sum_{\substack{x\in X\\f(x)=y}}t(a, x)\\
      &=\sum_{\mathclap{\substack{x\inj X\\f(x)\in f[S]}}}t(a, x)
      =\sum_{x\in S}t(a, x) = \fil_S(t)(a).\qedhere
    \end{align*}
  \end{enumerate}
\end{proof}

\subsection{Proofs for \autoref{sec:refint}}
\subsection*{Proof of \autoref{encoding-injective}}
We recall the definition of a refinement interface of a functor $F$ here for the convenience of
the reader. For detailed treatment, please refer to our previous work~\cite{concurSpecialIssue,coparFMSpecial}.
\begin{defn}
 Given sets $S\subseteq C\subseteq X$, the map $\chi_S^C\colon X\to 3$ is
 defined by
 \[
   \chi_S^C(x) = \begin{cases}
     2 &\text{if }x \in S, \\
     1 &\text{if }x \in C\setminus S, \\
     0 &\text{if }x \in X\setminus C. \\
   \end{cases}
 \]
 Intuitively, this is a `three-valued characteristic function' and is equivalent
 to the map $\fpair{\chi_S,\chi_C}\colon X\to 2\times 2$ without the impossible
 case $(1,0)$ ($x\in S$, $x\not\in C$, $S\subseteq C$).
\end{defn}
\begin{notheorembrackets}
\begin{defn}[\cite{concurSpecialIssue}]\label{def:refinementInterface}
  Given a set $A$ and a family of maps $\flat_X\colon FX\to \Bag(A\times X)$
  (one for every set $X$), a \emph{refinement interface} for a functor~$F$ is
  formed by a set $W$ of \emph{weights} and functions
  \begin{align*}
    \begin{array}{l@{\qquad}c@{\quad}l}
      \op{init}\colon F1\times\Bag A\to W,&\op{update}\colon \Bag A \times W \to W\times F3\times W
    \end{array}
  \end{align*}
  such that there exists a family of \emph{weight maps}
  $w\colon \Pow X \to (FX \to W)$ such that for all $S\subseteq C\subseteq X$,
  the diagrams
  \begin{equation*}
    \begin{tikzcd}
      F1 \times \Bag A
      \arrow{r}{\op{init}}
      & W
      \\
      FX
      \arrow{u}[left, pos=0.4]{
        \fpair{
          F!,
          \op{fil}_X\cdot \flat}}
      \arrow{ur}[swap]{w(X)}
    \end{tikzcd}
    \hspace{1mm}
    \begin{tikzcd}[column sep=1.4cm]
      \Bag A \times W
      \arrow{r}{\op{update}}
      & W \times F3 \times W
    \\
    FX
    \arrow[
    ]{ur}[swap]{\fpair{w(S),F\chi_S^C, w(C\setminus S)}}
    \arrow{u}[left]{\fpair{\op{fil}_S\cdot \flat,w(C)}}
  \end{tikzcd}
\end{equation*}
commute.
\end{defn}
\end{notheorembrackets}

\begin{notheorembrackets}
\begin{remark}[{\cite{concurSpecialIssue}}]
  The crucial part of forming a refinement interface for a functor $F$ is
  finding an appropriate set $W$ and maps $w\colon \Pow X\to (FX\to W)$.
  However, only the maps $\op{init}$ and $\op{update}$ are implemented, whereas
  $w$ just ensures correctness (and is not thus implemented). In most instances,
  we have 
  \[
    W := F2
    \qquad\text{and}\qquad
    w(C) := F\chi_C\colon FX\to F2
    \qquad\text{for }C\subseteq X.
  \]
  The only exception is the monoid valued functor $FX= M^{(X)}$ for a
  non-cancellative monoid $(M,+,0)$~\cite{coparFMSpecial} (so in particular also
  for $FX=\Powf X$~\cite{concurSpecialIssue}), where we have:
  \begin{align*}
    &W := M\times \Bag(M\setminus \{0\})
    \qquad\text{and}\\
    &w(C)\colon M^{(X)}\to M\times \Bag(M\setminus \{0\})\\
    &w(C)(\mu) = \big(\sum_{x\in X\setminus C}\mu(x),~~ m\mapsto |\{x\in X\mid
    \mu(x) = m\}| \big)
  \end{align*}
  where we define the bag $\Bag(M\setminus\{0\})$ by a map
  $(M\setminus\{0\})\to \N$. For $FX=\Pow X$, we have $M = 2$
  and $\Bag(2\setminus\{0\}) = \Bag(\{1\}) \cong \N$, hence $w(C)\colon \Powf
  X\to 2\times \N$ sends a successor structure $t\in \Powf X$ to $w(C)(t)\in
  2\times \N$ which provides (a) the information whether $t$ contains a successor outside of $C$
  and (b) the number of successors in $C$. Keeping track of the number of
  successors in the blocks $C$ of the partition is one of the main ideas of the
  $\CO(m\log n)$
  algorithm by Paige and Tarjan~\cite{PaigeTarjan87}.

  The functions $\op{init}, \op{update}$, which are to be implemented for every
  functor $F$ of interest, incrementally compute these weights (in $W$) and the
  three valued characteristic function $F\chi_S^C\colon FX\to F3$.
\end{remark}
\end{notheorembrackets}

\begin{lemma}\label{encoding2anyF3}
  Given a functor $F$ with a refinement interface, we have for all sets
  $S\subseteq C \subseteq X$ a map $r_S^C\colon F1\times \Bag(A\times X) \to F3$
  with $r_S^C\cdot \fpair{F!, \flat_X} = F\chi_S^C$.
  \[
    \begin{tikzcd}
      FX
      \arrow{rr}{F\chi_S^C}
      \arrow{dr}[swap]{\fpair{F!,\flat_X}}
      & & F3
      \\
      & F1\times \Bag(A\times X)
      \arrow{ur}[swap]{r_S^C}
    \end{tikzcd}
  \]
\end{lemma}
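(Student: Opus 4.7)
The plan is to construct $r_S^C$ by chaining the two primitives $\op{init}$ and $\op{update}$ of the refinement interface so as to mirror the refinement process in two steps: first recover the weight $w(X)$ of the whole set from the data $\fpair{F!,\flat_X}(t)$, then refine it to the weight $w(C)$ of the block $C$, and finally read off the three-valued characteristic $F\chi_S^C$ by one more refinement step. Concretely, given a pair $(u,b) \in F1 \times \Bag(A \times X)$, I set
\[
  w_X := \op{init}(u,\, \fil_X(b)),
\]
\[
  (w_C,\,\_,\,\_) := \op{update}(\fil_C(b),\, w_X),
\]
\[
  (\_,\,f,\,\_) := \op{update}(\fil_S(b),\, w_C),
\]
and define $r_S^C(u,b) := f$, where $\_$ marks a component that is discarded.

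To verify the equation $r_S^C \cdot \fpair{F!, \flat_X} = F\chi_S^C$, I would instantiate this definition at $\fpair{F!,\flat_X}(t)$ for an arbitrary $t\in FX$ and invoke the commuting diagrams of \autoref{def:refinementInterface} three times. The triangle for $\op{init}$ gives $w_X = w(X)(t)$. A first use of the triangle for $\op{update}$, with the parameters $(S,C)$ in that triangle instantiated to $(C,X)$ (which is legal because $C\subseteq X$), shows that the first component of $\op{update}(\fil_C(\flat_X(t)),\,w(X)(t))$ equals $w(C)(t)$, hence $w_C = w(C)(t)$. A second use of the same triangle, now with the originally given inclusion $S\subseteq C$, shows that the middle component of $\op{update}(\fil_S(\flat_X(t)),\,w(C)(t))$ equals $F\chi_S^C(t)$, that is $f = F\chi_S^C(t)$, as required.

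I do not anticipate a serious obstacle: the whole argument is a direct unfolding of the refinement interface axioms. The only point requiring care is the book-keeping of parameters in the two successive invocations of $\op{update}$, namely first splitting $X$ into $C$ and $X\setminus C$ and then splitting $C$ into $S$ and $C\setminus S$, together with the observation that in each invocation two of the three output components are discarded because only one is needed downstream (or as the final result). In particular, the construction makes no use of the specific choice of weight set $W$ or of weight maps $w$ beyond their abstract specification, so it applies uniformly to every functor equipped with a refinement interface.
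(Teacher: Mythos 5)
Your construction is correct and is essentially identical to the paper's own proof: the paper likewise defines auxiliary maps $v_X$ (via $\op{init}$ on $\fil_X$ of the bag component) and $v_C$ (via the first projection of $\op{update}$ applied with the inclusion $C\subseteq X$), and then obtains $r_S^C$ as the middle $F3$-component of a second $\op{update}$ applied with $S\subseteq C$, verifying each step by the corresponding refinement-interface axiom exactly as you do. The only difference is presentational: the paper phrases the argument with commutative diagrams rather than pointwise.
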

\begin{proof}
  \tikzset{
    extreme L shape/.style={
        to path={
          (\tikztostart.west)
          -- ([#1]\tikztostart.west)
          |- ([yshift=-4mm]\tikztotarget.south) \tikztonodes
          -- (\tikztotarget.south)
        },
        rounded corners,
    },
    extreme L shape/.default={
      xshift=-6mm
    },
  }
  First, we define maps $v_X$ and $v_C$ by the commutativity of the
  left-hand parts of the diagrams below, respectively. We also observe
  that precomposing these maps with $\fpair{F!,\flat_X}$ yields $w(X)$
  and $w(C)$, respectively, using the axioms of the refinement
  interface (note that the left-hand triangle in the right-hand
  diagram commutes by the left-hand diagram):
  \[
    \begin{tikzcd}[sep = 12mm]
      F1\times \Bag(A\times X)
      \arrow[extreme L shape={xshift=-2mm}]{dr}[pos=0.3,left]{v_X}[pos=0.3,right]{:=}
      \arrow{d}[swap]{F1\times \Bag \pr_1}
      \descto{dr}{Axiom \\ \init}
      & FX
      \arrow{l}[swap]{\fpair{F!,\flat_X}}
      \arrow{d}{w(X)}
      \\
      F1\times \Bag A
      \arrow{r}[swap]{\init}
      & W
    \end{tikzcd}
    \quad
    \begin{tikzcd}[column sep = 12mm, row sep=18mm]
      F1\times \Bag(A\times X)
      \arrow[extreme L shape={xshift=-2mm}]{dr}[pos=0.25,left]{v_C}[pos=0.25,right]{:=}
      \arrow{d}[description,pos=0.25]{\fpair{v_X, \fil_C\cdot \pr_2}}
      \descto[pos=0.75]{dr}{Axiom \\ \update}
      & FX
      \arrow{l}[swap]{\fpair{F!,\flat_X}}
      \arrow{d}{w(C)}
      \arrow{dl}[sloped,above]{\fpair{w(X),~ \fil_C\cdot\, \flat_X}}
      \\
      W \times \Bag A
      \arrow{r}[swap]{\pr_1\cdot\, \update}
      & W
    \end{tikzcd}
  \]
  Now we can define $r_S^C$ by the commutativity of the left-hand part
  in the diagram below and show that is has the desired property:
  \[
    \begin{tikzcd}[column sep = 12mm, row sep=18mm, baseline=(F3.base)]
      F1\times \Bag(A\times X)
      \arrow[extreme L shape={xshift=-9mm}]{dr}[pos=0.30,left]{r_S^C}[pos=0.30,right]{:=}
      \arrow{d}[swap]{\fpair{v_C, \fil_S\cdot \pr_2}}
      \descto[pos=0.75]{dr}{Axiom \\ \update}
      \descto[pos=0.15]{dr}{Def.~$v_C$}
      & FX
      \arrow{l}[swap]{\fpair{F!,\flat_X}}
      \arrow{d}{F\chi_S^C}
      \arrow{dl}[sloped,above]{\fpair{w(C),~ \fil_S\cdot\, \flat_X}}
      \\
      W \times \Bag A
      \arrow{r}[swap]{\pr_2\cdot\, \update}
      & |[alias=F3]| F3
    \end{tikzcd}
    \hfill\raisebox{-12pt}{\qedhere}
  \]
\end{proof}

\begin{proof}[Proof of \autoref{encoding-injective}]
  Let $X= \{x_0,\ldots,x_{n-1}\}$. We define the following family of
  subsets of $X$:
  \[
    S_i = \{x_i\},
    \qquad
    C_i = \{x_{i},\ldots,x_{n-1}\}
    \qquad \text{for $0\le i < n$}.
  \]
  For every $k$, $0\le k \le n$, we define the map
  \[
    q_k := \fpair{\chi_{S_i}^{C_i}}_{0\le i < k}\colon X\longrightarrow \prod_{0\le i < k} 3
    ~\cong~ 3^k.
  \]
  The partitions corresponding to $\chi_{S_i}^{C_i}$ and $q_k$ are illustrated
  in \autoref{fig:partitionSequence}. Note that the partition for
  $q_{k+1}$ is formed by the (nonempty) intersections of the blocks
  from the partitions for $q_k$ and $\chi_{S_k}^{C_k}$.
  \begin{figure}
    \centering
    \begin{tikzpicture}[
      partition block/.style={
        draw=black!50,
        line width=1pt,
        inner sep=0pt,
        minimum width=1.7em,
        minimum height=1.7em,
        rounded corners=0.85em,
        text=red, %
      }
      ]
      \foreach \partitionType/\xshift in {chi/0cm, q/6cm} {
        \begin{scope}[xshift=\xshift, x=7mm]
        \foreach \k in {0,1,2,3} {
          \foreach \i/\xlabel in {0/0,1/1,2/2,3/3,5/{n-1}} {
            \node (\partitionType\k x\i) at (\i,-\k) {$x_{\xlabel}$};
          }
          \tikzset{shift the last dots in chi/.style={
            }}
          \ifthenelse{\equal{\partitionType}{chi}}{
            \ifthenelse{\k = 3}{
              \tikzset{shift the last dots in chi/.style={
                  xshift=2mm,
                }}
            }{}
          }{}
          \draw[draw=none] (\partitionType\k x3) --
          node[shift the last dots in chi] (\partitionType\k x4) {$\cdots$}
          (\partitionType\k x5);
        }
        \end{scope}
      }
      \foreach \k in {0,1,2,3} {
        \node[anchor=east] at ([xshift=-2mm]chi\k x0.west) {$X/\chi_{S_\k}^{C_\k}:$};
        \node[anchor=east] at ([xshift=-2mm]q\k x0.west) {$X/q_\k:$};
      }
      \foreach \k in {0,1,2,3}{
        \ifthenelse{\k > 0}{
          \pgfmathsetmacro{\kMinusOne}{int(\k-1)}
          \node[partition block, fit=(chi\k x0) (chi\k x\kMinusOne)] {};
        }{}
        \node[partition block, fit=(chi\k x\k)] {};
        \pgfmathsetmacro{\kPlusOne}{int(\k+1)}
        \node[partition block, fit=(chi\k x\kPlusOne) (chi\k x5)] {};
        \ifthenelse{\k > 0}{
          \pgfmathsetmacro{\kMinusOne}{int(\k-1)}
          \foreach \i in {0,...,\kMinusOne} {
            \node[partition block, fit=(q\k x\i)] {};
          }
        }{}
        \node[partition block, fit=(q\k x\k) (q\k x5)] {};
      }
    \end{tikzpicture}
    \caption{The partitions for $q_k$ in the proof of \autoref{encoding-injective}}
    \label{fig:partitionSequence}
  \end{figure}
  Clearly, the union of the equivalence relations $\ker (q_k)$ and $\ker
  (\chi_{S_k}^{C_k})$ is again an equivalence relation, for every $0\le k < n$.
  Thus, we can apply~\cite[Prop.~5.18]{concurSpecialIssue} to obtain
  \[
    \ker F\fpair{q_k,\chi_{S_k}^{C_k}}
    = \ker \fpair{Fq_k, F\chi_{S_k}^{C_k}}
    \qquad
    \text{for all }0\le k < n.
  \]
  Combining these $n$-many equalities, we obtain
  \begin{equation}\label{eq:ker}
    \ker F q_n
    = \ker \fpair{F\chi_{S_k}^{C_k}}_{0\le k < n}.
  \end{equation}
  Here, the map $q_n\colon X\to 3^n$ is injective, because for every $x_i\in X$ the element
  $q_n(x_i) \in 3^n$ is clearly the only element in the image of $q_n$ that has
  $2$ in the $i$th component:
  \[
    \pr_i(q_n(x_i)) = \chi_{S_i}^{C_i}(x_i) = \chi_{S_i}^{C_i}(x_i)
    \overset{\text{Def.~}S_i}{=} \chi_{\{x_i\}}^{C_i}(x_i)
    \overset{\text{Def.~}\chi}{=} ~ 2.
  \]
  Since $F$ preserves injective maps, $Fq_n\colon FX\to F3^n$ is
  injective, too. Thus, so is $\fpair{F\chi_{S_k}^{C_k}}_{0\le k < n}$
  by equation~\eqref{eq:ker} of kernels (note that a map
  $f\colon A\to B$ is injective iff the relation $\ker(f)$ is the
  identity relation on $A$).

  Finally, we pair all the maps $r_{S_k}^{C_k}$, $0\le k < n$, that we
  have derived from the refinement interface in \autoref{encoding2anyF3}.
  We obtain the following commutative diagram:
  \[
    \begin{tikzcd}
      FX
      \arrow[>->]{rr}{\fpair{F\chi_{S_k}^{C_k}}_{0\le k < n}}
      \arrow{dr}[swap]{\fpair{F!,\flat_X}}
      & & \prod_{0\le k < n}F3
      \\
      & F1\times \Bag(A\times X)
      \arrow{ur}[swap]{\fpair{r_{S_k}^{C_k}}_{0\le k < n}}
    \end{tikzcd}
  \]
  We know that the map at the top is injective. It follows from the standard
  laws for injective maps that $\fpair{F!,\flat_X}$ is injective, as desired.
\end{proof}

\subsection*{Proof of \autoref{prop:poly-flat-nat}} 
\begin{proof}
  Given $f\colon X\to Y$ and $\sigma(x_1,\ldots,x_n)\in F_\Sigma X$, we calculate
  \begin{align*}
    \Bag(A\times f)(\flat_X(\sigma(x_1,\ldots,x_n)))
                   &= \Bag(A\times f)\{(1,x_1),\ldots,(n,x_n)\}\\
                   &= \{(1,f(x_1)),\ldots,(n,f(x_n))\}\\
                   &= \flat_Y(\sigma(f(x_1),\ldots,f(x_n)))\\
                   &= \flat_Y(F_\Sigma f(\sigma(x_1,\ldots,x_n))).\qedhere
  \end{align*}
\end{proof}
\subsection*{Proof of \autoref{P:prod-coprod-enc}}
\begin{proof} By the assumption on the individual functors $F_i$, we know that
  $\langle F_i!, \flat_{X,i}\rangle$ is injective for every $i\in I$. We now
  prove the required injectivity for the coproduct and product functors
  respectively:
  \begin{enumerate}
  \item For the coproduct $\coprodi F_i$, assume
    $\inj_jt_1, \inj_k t_2\in \coprodi F_iX$ such that
    $\langle F!,\flat_X\rangle(\inj_j t_1) = \langle F!,\flat_X \rangle(\inj_k
    t_2)$. Since this implies that $\inj_j (F_j!\; t_1) = \inj_k (F_k!\; t_2)$,
    we know that $j=k$. We now expand the definition
    of $\flat_X$ and calculate for $i = 1,2$:
    \begin{align*}
      \langle F!, \flat_X\rangle(\inj_j t_i)
      &= \langle F!, [\Bag(\inj_k\times \id)]_{k\in I}\cdot\coprodi[k]\flat_{X,k}\rangle(\inj_j t_i) \\
      &= \id\times[\Bag(\inj_k\times \id)]_{k\in I}\cdot\langle F!, \coprodi[k]\flat_{X,k}\rangle(\inj_j t_i) \\ 
      &= \inj_j\times\Bag(\inj_j\times \id)\cdot\langle F_j!, \flat_{X,j}\rangle(t_i). 
    \end{align*}
    Since $\inj_j$ is an injective map, $\Bag$ preserves injections, and
    injections are stable under product, we see
    that $\inj_j \times \Bag(\inj_j \times \id)$ is injective, whence so is its
    composite with $\fpair{F_j!, \flat_{X,j}}$. Since this composite merges
    $t_1$ and $t_2$ by assumption, we conclude $t_1 = t_2$.
    
  \item For the product $\prodi F_i$, assume $t_1,t_2\in \prodi F_iX$ such that
    $\langle F!,\flat_X\rangle(t_1) = \langle F!,\flat_X\rangle(t_2)$. We will
    show that $t_1=t_2$. The assumption implies that $F!(t_1)=F!(t_2)$ as well as
    $\flat_X(t_1)=\flat_X(t_2)$. The former implies that for all $i\in I$ we
    have $F_i!(\pr_i t_1) = F_i!(\pr_i t_2)$. From the latter and the definition of $\flat_X$ we
    obtain for every $i\in I$, $a \in A_i$ and $x \in X$ that
    \[
      \flat_{X,i}(\pr_i(t_1))(a,x)
      =
      \flat_X(t_1)(\inj_i a, x)
      =
      \flat_X(t_2)(\inj_i a, x)
      =
      \flat_{X,i}(\pr_i(t_2))(a,x).
    \]
    Consequently, for all $i\in I$ we have
    $\langle F_i!, \flat_{X,i} \rangle(\pr_i(t_1)) = \langle F_i!,\flat_{X,i}
    \rangle(\pr_i(t_2))$ which implies $\pr_i(t_1) = \pr_i(t_2)$ by the
    assumption. Hence, $t_1 = t_2$ since the projections $\pr_i$, $i \in I$, form a jointly
    monic family.\qedhere
  \end{enumerate}
\end{proof}

\subsection*{Proof of \autoref{prop:enc}}
\begin{proof}%
  First, note that for every bag $b\in \Bag(A \times X)$ and every pair $(a,x)
  \in A \times X$  we have $\fil_{\{x\}}(b)(a) = b(a, x)$.
  \begin{enumerate}
  \item For the finite powerset functor $\Powf(-)$, we have $A = 1$ and
    $\flat\colon\Powf X\to \Bag(1\times X)\cong \Bag(X)$ given by

    \[ \flat(t)(*,x) = \begin{cases}
        1 & x\in t,\\
        0 & \text{otherwise.}
      \end{cases}
    \]
    Observe that $1\in\Powf\chi_{\{x\}}(t) \Leftrightarrow x\in t$. We then have
    \begin{align*}
      \fil_{\{x\}}&(\flat(t))(*) = \flat(t)(*,x) = \begin{cases}
        1 & x\in t\\
        0 & \text{otherwise,}
      \end{cases}\\
      \fil_{\{1\}}&(\flat(\Powf\chi_{\{x\}}(t)))(*) = \flat(\Powf\chi_{\{x\}}(t))(*,1) = \begin{cases}
        1 & 1\in\Powf\chi_{\{x\}}(t)\\
        0 & \text{otherwise}
      \end{cases}\\
      &= \begin{cases}
        1 & x\in t\\
        0 & \text{otherwise.}
        \end{cases}
    \end{align*}
  \item The monoid-valued functor $M^{(-)}$ for a given monoid $M$ has labels $A= M$ and $\flat\colon M^{(X)}\to\Bag(M\times X)$ given by
    \[ \flat(t)(m,x) =
      \begin{cases}
        1 & t(x) = m\neq 0,\\
        0 & \text{otherwise.}
      \end{cases}
    \]
    Observe that $M^{(\chi_{\{x\}})}(t)(1) = t(x)$. We then have
    \begin{align*}
      \fil_{\{x\}}&(\flat(t))(m) = \flat(t)(m,x) =
      \begin{cases}
        1 & t(x) = m\neq 0\\
        0 & \text{otherwise,}
      \end{cases}\\
      \fil_{\{1\}}&(\flat(M^{(\chi_{\{x\}})}(t)))(m) =\\
      &= \flat(M^{(\chi_{\{x\}})}(t))(m,1) =
      \begin{cases}
        1 & M^{(\chi_{\{x\}})}(t)(1) = m\neq 0\\
        0 & \text{otherwise}
      \end{cases}\\
      &=
        \begin{cases}
          1 & t(x) = m \neq 0\\
          0 & \text{otherwise.}
        \end{cases}
    \end{align*}
  \item The polynomial functor $F_\Sigma$ for a signature $\Sigma$ has labels
    $A=\N$, and the map $\flat\colon F_\Sigma X\to\Bag(\N\times X)$ is given by
    \( \flat(\sigma(x_1,\ldots,x_n)) = \{(1,x_1),\ldots,(n,x_n)\}\). Since this
    $\flat$ is natural by Proposition~\ref{prop:poly-flat-nat}, the desired
    result follows from~\autoref{prop:subnat}(\ref{prop:subnat:1}).
    \qedhere
  \end{enumerate}
\end{proof}

\subsection*{Proof of \autoref{prop:prod-coprod-enc-uniform}}
\begin{proof}
  \begin{enumerate}
  \item For the coproduct of $(F_i)_{i\in I}$,
    $\flat_X\colon \coprod_{i\in I} F_iX\to\Bag(\coprod_{i\in I}A_i\times X)$ is
    defined in \autoref{const:comb-enc} as
    \[
      \flat_X\colon \coprod_{i\in I} F_i X
      \xrightarrow{\coprod_{i\in I}\flat_{X,i}}\coprod_{i\in I}\Bag(A_i\times X)
      \xrightarrow{[\Bag(\inj_i\times X)]_{i\in I}}\Bag\big(\coprod_{i\in I} A_i\times X\big).      
    \]
    We evaluate both sides of the condition for uniform encodings:
    \begin{align*}
      \fil_{\{x\}}&(\flat(\inj_i t))(\inj_j a) = \flat(\inj_i t)(\inj_j a, x)\\
      &= ([\Bag(\inj_k\times X)]_{k\in I}(\textstyle\coprod_{k\in I}\flat_k)(\inj_i t))(\inj_j a, x)\\
      &= ([\Bag(\inj_k\times X)]_{k\in I}\inj_i(\flat_i (t)))(\inj_j a, x)\\
      &=
      \begin{cases}
        0 & i \neq j\\
        \flat_i(a,x) & \text{otherwise,}
      \end{cases}\\
      \fil_{\{1\}}&(\flat((\textstyle\coprod_{k\in I}F_k)\chi_{\{x\}}(\inj_i t)))(\inj_j a) =\\
      &= \flat((\textstyle\coprod_{k\in I}F_k)\chi_{\{x\}}(\inj_i t))(\inj_j a, 1)\\
      &= \flat(\inj_i F_i\chi_{\{x\}}(t))(\inj_j a, 1)\\
      &=
      \begin{cases}
        0 & i\neq j\\
        \flat_i(F_i\chi_{\{x\}}t)(a, 1) & \text{otherwise,}
      \end{cases}
    \end{align*}
    and $\flat_i(F_i\chi_{\{x\}}t)(a,1) = \flat_i(t)(a,x)$ by the
    assumption that $\flat_i$ is uniform. Therefore, both sides agree.

  \item For the product $\prod_{i\in I}F_i$ we define
    $\flat_X\colon\prod_{i\in I}F_i X\to \Bag(\coprod_{i\in I}A_i\times X)$ in
    \autoref{const:comb-enc} as
    \[ \flat_X(t)(\inj_i(a), x) = \flat_i(\pr_i(t))(a, x). \]
    We evaluate both sides again:
    \begin{align*}
      \fil_{\{x\}}&(\flat(t))(\inj_i a) = \flat(t)(\inj_i a, x) = \flat_i(\pr_i t)(a,x)\\
      \fil_{\{1\}}&(\flat(\prod_{k\in I} F_k\chi_{\{x\}}(t)))(\inj_i a) =\\
      &=\flat(\prod_{k\in I}F_k\chi_{\{x\}}t)(\inj_i a, 1)\\
      &= \flat_i(\pr_i(\prod_{k\in I}F_k\chi_{\{x\}}(t)))(a, 1)\\
      &= \flat_i(F_i\chi_{\{x\}}(\pr_i t))(a, 1)\\
      &= \flat_i(\pr_i t)(a, x),
    \end{align*}
    where the last line uses the fact that $\flat_i$ is uniform.
    \qedhere
  \end{enumerate}
\end{proof}

\subsection*{Proof of \autoref{prop:subnat}}
In order to prove that uniform encodings are subnatural we use the following
lemma:
\begin{lemma}
  \label{5gon2emptybag}
  The following diagram commutes for all uniform encodings:
  \begin{equation*}
    \begin{tikzcd}
      F1 \ar[rr, "!"] \ar[d, "F0", swap] && 1 \ar[d, "\emptybag"]\\
      F2 \ar[r, "\flat_2"] & \Bag(A\times 2)\ar[r, "\fils{1}"] & \Bag A,
    \end{tikzcd}
  \end{equation*}
  where $0\colon 1 = \{0\} \hookrightarrow \{0, 1\} = 2$ is the
  obvious inclusion map.
\end{lemma}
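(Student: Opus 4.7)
My plan is to exploit uniformity of $\flat$ at an \emph{infinite} index set together with the finite-support property of $\Bag$. The difficulty is that a direct attempt using uniformity only at $X\in\{1,2\}$ fails: the map $0\colon 1\to 2$ agrees with the characteristic function $\chi_\emptyset\colon 1\to 2$ of the empty subset, which is \emph{not} of the form $\chi_{\{x\}}$ required by \eqref{eq:uniform-encoding}. So pairing $F0$ with $\fils{1}\cdot\flat_2$ is not directly controlled by any single instance of uniformity at $X=1$ or $X=2$; extra leverage is needed.

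To obtain that leverage, I would fix $t\in F1$, choose any infinite set $X$ together with an element $x_0\in X$, and form the map $f\colon 1\to X$ with $f(0)=x_0$ and the element $Ff(t)\in FX$. For each $y\in X$, uniformity at $(X,y)$ applied to $Ff(t)$ yields
\[
  \fils{y}(\flat_X(Ff(t)))
  \;=\;
  \fils{1}\bigl(\flat_2\bigl(F(\chi_{\{y\}}\cdot f)(t)\bigr)\bigr).
\]
A one-line check in $\Set$ then shows $\chi_{\{y\}}\cdot f = \chi_{\{0\}}\colon 1\to 2$ when $y=x_0$, and $\chi_{\{y\}}\cdot f = 0\colon 1\to 2$ whenever $y\neq x_0$ (because $\chi_{\{y\}}(x_0)=0$). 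Hence for every $y\in X\setminus\{x_0\}$ I obtain
\[
  \fils{y}(\flat_X(Ff(t))) \;=\; \fils{1}(\flat_2(F0(t))).
\]

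The proof then concludes by noting that $\flat_X(Ff(t))$ is a bag in $\Bag(A\times X)$, i.e.\ a \emph{finitely supported} function $A\times X\to\N$. Consequently, all but finitely many of its ``$y$-slices'' $\fils{y}(\flat_X(Ff(t)))$ equal $\emptybag$. Since $X$ is infinite, infinitely many $y\neq x_0$ have a vanishing slice, and by the displayed equation this forces $\fils{1}(\flat_2(F0(t)))=\emptybag$, which is exactly the desired commutativity.

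I expect the main obstacle to be the conceptual point flagged at the outset: recognising that this seemingly elementary statement about $\flat_2$ is not reachable from uniformity at finite indices alone, and that one must instantiate \eqref{eq:uniform-encoding} at an arbitrarily large $X$. Once that idea is in place, the finite support built into $\Bag$ does all the remaining squeezing essentially for free.
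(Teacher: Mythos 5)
Your proof is correct and is essentially the paper's own argument: the paper instantiates your idea with the specific infinite set $X = 1+\N$ and $f = \inl$, applies uniformity at each $\inr n$ to get $\fils{1}(\flat_2(F0(t))) = \fils{\inr n}(\flat_{1+\N}(F\inl(t)))$ for all $n$, and concludes from finiteness of the bag $\flat_{1+\N}(F\inl(t))$ (phrased as a contradiction rather than your direct ``all but finitely many slices vanish'' formulation). Your opening observation that uniformity at finite index sets cannot suffice, since $0 = \chi_\emptyset$ is not of the form $\chi_{\{x\}}$, correctly identifies the key conceptual point.
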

\begin{proof}
  The following diagram commutes for all $n\in \N$:
  \[
    \begin{tikzcd}
      F1 \ar[r, "F\inl"]\ar[dr, "F0",swap] & F(1 + \N) \ar[d,
      "F\chi_{\{\inr n\}}"]\ar[r, "\flat_{1+\N}"] & \Bag(A\times
      (1+\N))\desctox{shift right=3ex}{d}{(Def.~\ref{def:uniform-encoding})}
      \ar[dr, "\fils{\inr n}"]\\
      & F2\ar[r, "\flat_2", swap] & \Bag(A\times 2)\ar[r, "\fils{1}", swap] & \Bag(A).
    \end{tikzcd}
  \]
  Let $t\in F1$ and, for the sake of contradiction, suppose that
  $\fils{1}(\flat_2(F0(t)))$ is nonempty and contains the element
  $a$. Then, by the above diagram we have $a\in\fils{\inr n}(\flat_{1+\N}(F\!\inl\,(t)))$ and therefore
  \[ (a, \inr n) \in \flat_{1+\N}(F\!\inl\,(t))\quad\quad\text{for all }n\in \N.\]
  However, this contradicts the finiteness of the bag $\flat_{1+\N}(F\!\inl\,(t))$.
\end{proof}

\noindent We are now ready to prove the main proposition:
\begin{proof}[Proof of \autoref{prop:subnat}]
  \begin{enumerate}
  \item Given an encoding $\flat_X\colon FX \to \Bag(A \times X)$ which is natural in $X$,
    we have the following commutative diagram:
    \[
      \begin{tikzcd}[row sep = 5]
        FX
        \ar[r, "\flat_X"]
        \ar[dd, "F\chi_{\{x\}}", swap]
        &
        \Bag(A\times X)
        \ar[dd, "\Bag(A\times\chi_{\{x\}})",swap]
        \ar[dr, near start, "\fil_{\{x\}}"]\\
        & & \Bag(A),\\
        F2 \ar[r, "\flat_Y"]
        &
        \Bag(A\times 2)
        \ar[ur, near start, "\fil_{\{1\}}",swap]
      \end{tikzcd}      
    \]
    Indeed, the left-hand square commutes due to the naturality of $\flat$ and
    the right-hand triangle commutes by
    \autoref{lem:fils-properties}(\ref{lem:filsingle2}).
    
  \item Let $\flat_X\colon FX\to\Bag(A\times X)$ be a uniform encoding. First we show
  that the family
  $\big(\Bag(A\times Y)\xrightarrow{\fils{y}}\Bag(A)\big)_{y\in Y}$ is jointly
  monic. Indeed, recall from \autoref{lem:fils-properties} that the morphism
  $\langle\fils{y}\rangle_{y\in Y}$ is equal to
  $\group\colon\Bag(A\times Y)\to(\Bag A)^{(Y)}$, which is an isomorphism,
  whence a split mono. It therefore suffices to prove that the following diagram
  commutes for all $y\in Y$ and all monomorphisms $m\colon X\monoto Y$:
  \begin{equation}\label{eq:subnat-proof}
    \begin{tikzcd}
      FX \ar[r, "\flat_X"]\ar[d, "Fm", swap] & \Bag(A\times X)\ar[r, "\Bag(A\times m)"] & \Bag(A\times Y)\ar[d, "\fils{y}"]\\
      FY \ar[r, "\flat_Y"] & \Bag(A\times Y)\ar[r, "\fils{y}"] & \Bag A
    \end{tikzcd}
  \end{equation}

  \noindent
  We distinguish two cases:
  \begin{enumerate}
  \item If $y\in m[X]$, equivalently, $y=m(x)$ for an $x\in X$, the
    following diagram commutes:
    \begin{equation*}
      \begin{tikzcd}
        FX \ar[rr, "\flat_X"]\ar[d, "Fm", swap]\ar[dr, "F\chi_{\{x\}}"]
        \descto{drrr}{(Def.~\ref{def:uniform-encoding})}
        && \Bag(A\times X)\ar[r, "\Bag(A\times m)"]\ar[dr, "\fils{x}"] & \Bag(A\times Y) \ar[d, "\fils{m(x)}"]\\
        FY \ar[r, "F\chi_{\{m(x)\}}", swap]\ar[dr, "\flat_Y", swap, bend right] & F2 \ar[r, "\flat_2"] & \Bag(A\times 2)\ar[r, "\fils{1}"] & \Bag(A)\\
        & \Bag(A\times Y)\ar[urr, "\fils{m(x)}", swap, bend right=10]
        \descto{ur}{(Def.~\ref{def:uniform-encoding})}
      \end{tikzcd}
    \end{equation*}
    Therefore, \eqref{eq:subnat-proof} commutes for $y=m(x)\in m[X]$.
  \item If $y\in (Y\setminus m[X])$, equivalently, $\chi_{\{y\}}\cdot m
    = 0!$, then the following diagram commutes:
    \begin{equation*}
      \begin{tikzcd}
        FX \ar[rr, "\flat_X"]\ar[dd, "Fm", swap]\ar[dr, "F!"] &&
        \Bag(A\times X)\ar[d, "!"]\ar[r, "\Bag(A\times m)"]
        \descto{dr}{(\(y\not \in m[X]\))}
        & \Bag(A\times Y)\ar[dd, "\fils{y}"]\\
        & F1 \ar[r, "!"]\ar[d, "F0"]
        \descto[xshift=-8mm]{drr}{(Lem.~\ref{5gon2emptybag})}
        & 1 \ar[rd, "\emptybag"] & {}\\
        FY \ar[r, "F\chi_{\{y\}}"]\ar[dr, "\flat_Y", bend right, swap]
        & F2 \ar[r, "\flat_2"]
        \descto{d}{(Def.~\ref{def:uniform-encoding})}
        & \Bag(A\times 2)\ar[r, "\fils{1}", near start] & \Bag A\\
        & \Bag(A\times Y)\ar[rru, "\fils{y}", bend right=10, swap]
      \end{tikzcd}
    \end{equation*}
    Therefore, \eqref{eq:subnat-proof} also commutes for $y\not\in m[X]$.\qedhere
  \end{enumerate}
\end{enumerate}
\end{proof}

\subsection{Proofs for \autoref{sec:simpl}}
\subsection*{Proof of \autoref{L:merge-enc}}
\begin{proof}%
  To see this, instantiate~\eqref{eq:merge-axiom} for $S = \{x\}$ and compare it
  with the diagram~\eqref{eq:uniform-encoding} of \autoref{def:uniform-encoding}:
  \[
    \begin{tikzcd}[column sep=19mm]
      FX \ar[r, "\fil_{\{x\}}\cdot\,\flat_X"]
      \ar[d, "F\chi_{\{x\}}",swap]
      \ar[rr, shiftarr={yshift=8mm},"\flat_X"]
      \descto[xshift=-2mm]{drr}{\eqref{eq:merge-axiom}}
      & \Bag(A)\ar[dr, "\merge"{description}]
      \descto{r}{\ref{L:merge-enc}(\ref{L:merge-enc:1})}
      & \Bag(A\times X)
      \ar[d, "\fil_{\{x\}}"]
      \\
      F2 \ar[rr, "\fil_{\{1\}}\cdot\,\flat_2", swap]
      & & \Bag(A)
    \end{tikzcd}
  \]
  We see that the upper inner part of the diagram commutes if and only if the
  outside does. This establishes the desired equivalence.
\end{proof}

\subsection*{Proof of \autoref{lem:merge-id-enc-nat}}
We first establish the following easy lemma:
\begin{lemma}\label{lem:revimg}
  For every map $f\colon X\to Y$ and subset $S\subseteq Y$ we have
  \[
    \begin{tikzcd}
      X
      \arrow{rr}{f}
      \arrow{rd}[swap]{\chi_{f^{-1}[S]}}
      &&
      Y
      \arrow{ld}{\chi_S}
      \\
      &
      2
    \end{tikzcd}
  \]
\end{lemma}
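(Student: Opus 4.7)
The plan is to verify the equality $\chi_S\cdot f = \chi_{f^{-1}[S]}$ by a direct pointwise computation on elements of $X$, since maps between sets are determined by their values on elements. There is no obstacle here; the statement is essentially an unfolding of the definitions of the characteristic function and of the preimage.

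Concretely, I would fix an arbitrary element $x\in X$ and compute the value of both sides at $x$. By definition of composition, $(\chi_S\cdot f)(x) = \chi_S(f(x))$, and by definition of $\chi_S$ this equals $1$ precisely when $f(x)\in S$, and $0$ otherwise. On the other hand, $\chi_{f^{-1}[S]}(x) = 1$ precisely when $x\in f^{-1}[S]$, which by definition of the preimage is the condition $f(x)\in S$. Both sides therefore produce the same element of $2$ for every $x\in X$, so the two maps are equal and the triangle commutes.

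Since this lemma will be used as a rewriting step inside the forthcoming proof of \autoref{lem:merge-id-enc-nat} (where it lets one replace $\chi_S\cdot f$ by $\chi_{f^{-1}[S]}$ when interchanging filtering and the functor action), there is no need to present the argument in more than one or two lines; a single element-chase suffices.
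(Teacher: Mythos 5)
Your proof is correct and follows exactly the same route as the paper's: a pointwise element-chase showing $\chi_S(f(x))=1 \Leftrightarrow f(x)\in S \Leftrightarrow x\in f^{-1}[S] \Leftrightarrow \chi_{f^{-1}[S]}(x)=1$. Nothing more is needed.
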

\begin{proof}
  Indeed, we have
  \[
    \chi_S(f(x)) = 1 \Leftrightarrow f(x)\in S \Leftrightarrow x\in f^{-1}[S]
    \Leftrightarrow \chi_{f^{-1}[S]}(x) = 1.\qedhere
  \]
\end{proof}
\begin{proof}[Proof of \autoref{lem:merge-id-enc-nat}]

  \ref{item:enc-nat-merge-id} $\Rightarrow$ \ref{item:merge-id-enc-nat}: The
  required axiom~\eqref{eq:merge-axiom} for $\merge = \id$ follows by combining
  the given naturality of $\flat_X$ with Equation~\eqref{eq:filsingle2}:
  \[
    \begin{tikzcd}[row sep=5]
      FX
      \ar[r, "\flat_X"]
      \ar[dd, "F\chi_S",swap]
      &
      \Bag(A\times X)
      \ar[dd, "\Bag(A\times\chi_S)",swap]
      \ar[dr, "\fil_S"]
      \\
      &&
      B(A)
      \\
      F2
      \ar[r, "\flat_2"]
      &
      \Bag(A\times 2)
      \ar[ur, "\fils{1}",swap]
    \end{tikzcd}
  \]

  \medskip\noindent
  \ref{item:merge-id-enc-nat} $\Rightarrow$ \ref{item:enc-nat-merge-id}:
  Suppose $\merge = \id$ is a minimization interface. Then the
  axiom~\eqref{eq:merge-axiom} simplifies as follows: 
  \begin{equation}\label{eq:merge-id-axiom}
    \begin{tikzcd}[row sep=5]
      FX \ar[r, "\flat_X"]\ar[dd, "F\chi_S",swap] & \Bag(A\times X) \ar[dr, "\fil_S"]\\
      & & B(A).\\
      F2 \ar[r, "\flat_2"] & \Bag(A\times 2) \ar[ur, "\fils{1}",swap]
    \end{tikzcd}
  \end{equation}
  In order to show that $\flat$ is a natural transformation we use that the family
  $\big(\Bag(A\times Y)\xrightarrow{\fils{y}}\Bag(A)\big)_{y\in Y}$ is jointly
  monic. Hence, it suffices to prove that the following diagram commutes for all
  functions $f\colon X \to Y$ and $y \in Y$:
  \begin{equation}\label{eq:merge-nat-proof}
    \begin{tikzcd}[column sep = 35]
      FX \ar[r, "\flat_X"]\ar[d, "Ff", swap] & \Bag(A\times X)\ar[r, "\Bag(A\times f)"] & \Bag(A\times Y)\ar[d, "\fils{y}"]\\
      FY \ar[r, "\flat_Y"] & \Bag(A\times Y)\ar[r, "\fils{y}"] & \Bag A
    \end{tikzcd}
  \end{equation}
  \noindent
  Indeed, let $y\in Y$ and $S\subseteq X$ be the inverse image of $y$ under
  $f$: $S = f^{-1}[y]$. Then the following diagram commutes:
  \[
    \begin{tikzcd}[column sep = 35]
      FX \ar[rr, "\flat_X"]\ar[d, "Ff", swap]\ar[dr, "F\chi_S"]
      \desctox[pos=.25]{shift right=1em}{dr}{\ref{lem:revimg}}
      \descto{drrr}{\eqref{eq:merge-id-axiom}}
      && \Bag(A\times X)\ar[r, "\Bag(A\times f)"]\ar[dr, "\fil_S"] & \Bag(A\times Y) \ar[d, "\fils{y}"]\\
      FY \ar[r, "F\chi_{\{y\}}", swap]\ar[dr, "\flat_Y", swap, bend right] & F2 \ar[r, "\flat_2"]
      & \Bag(A\times 2)\ar[r, "\fils{1}"] & \Bag(A)\\
      & \Bag(A\times Y)\ar[urr, "\fils{y}", swap, bend right=10]
      \descto{ur}{\eqref{eq:merge-id-axiom} for $S=\{y\}$}
    \end{tikzcd}
    \]
  Therefore,~\eqref{eq:merge-nat-proof} commutes for all $f\colon X \to Y$ and
  $y \in Y$ as desired.
\end{proof}

\subsection*{Proof of \autoref{prop:all-merges-go}}

\begin{proof}
\begin{enumerate}
\item For the finite powerset functor $\Powf(-)$, with $A = 1$, we define \merge{} by
  \[ \merge(\ell)(*) = \min(1, \ell(*)). \]
  To show that the axiom holds, we calculate both sides:
  \begin{align*}
    \fil_{\{1\}}&(\flat(\Powf\chi_S(t)))(*) = \flat(\Powf\chi_S(t))(*, 1)\\
        &= \begin{cases}
              1 & 1\in \Powf\chi_S (t)\\
              0 & \text{otherwise}
            \end{cases}
         =
            \begin{cases}
              1 & S\cap t \neq\emptyset\\
              0 & \text{otherwise}
            \end{cases}\\[1em]
    \merge&(\fil_S(\flat(t)))(*) = \min(1, \fil_S(\flat(t))(*))\\
                  &= \min(1, \sum_{x\in S}\underbrace{\flat(t)(*, x)}_{x\in t\Rightarrow 1,\text{ else }0})\\
                  &=
                    \begin{cases}
                      1 & S\cap t \neq\emptyset\\
                      0 & \text{otherwise.}
                    \end{cases}
  \end{align*}
  This $\merge{}$ can be implemented in constant time, since it just needs to
  check if its input bag is empty and return one of two possible constants,
  depending on that result.
\item For monoid-valued functors $M^{(-)}$ with $A = M$, \merge{} is defined as
  \[
    \merge(\ell) =
    \begin{cases}
      \mbraces{ \Sigma\ell } & \Sigma \ell\neq 0\\
      \emptybag & \text{otherwise.}
    \end{cases}
  \]
  To show that this fulfils the required property, we first need the following facts:
  \begin{align*}
    (M^{(\chi_S)}t)(1) &= \sum_{\mathclap{\substack{x\in X\\\chi_S(x) = 1}}} t(x) = \sum_{x\in S}t(x),\\
    \intertext{and}
    \Sigma(\fil_S(\flat(t))) &= \sum_{m\in M}m\cdot (\fil_S(\flat(t)))(m)\\
                       &=\sum_{m\in M}m\cdot\left(\sum_{x\in S}\flat(t)(m,x) \right)\\
                       &=\sum_{x\in S}\sum_{m\in M} m\cdot\underbrace{\flat(t)(m,x)}_{\substack{1\text{ if }t(x)=m\neq 0\\\text{else }0}}\\
                       &=\sum_{x\in S}t(x).
  \end{align*}
  Now we have that
  \begin{align*}
    \fil_{\{1\}}&(\flat(M^{(\chi_S)}t))(m) = \flat(M^{(\chi_S)}t)(m, 1)\\
                &=
                  \begin{cases}
                    1 & \sum_{x\in S}t(x) = m\neq 0\\
                    0 & \text{otherwise}
                  \end{cases}\\
    \intertext{and also}
    \merge&(\fil_S(\flat(t)))(m) =
            \begin{cases}
              1 & \Sigma(\fil_S(\flat(t))) = m\neq 0\\
              0 & \text{otherwise}
            \end{cases}\\
                &=
                  \begin{cases}
                    1 & \sum_{x\in S}t(x) = m\\
                    0 & \text{otherwise.}
                  \end{cases}
  \end{align*}
  Since this \merge{} has to sum up the monoid elements in its input bag, it
  runs in linear time in the size of that bag, provided that addition of monoid
  elements is a constant-time operation.
\item For the polynomial functor $F_\Sigma$, the encoding
  $\flat\colon F_\Sigma X\to\Bag(\N\times X)$ is already natural (see
  \autoref{prop:poly-flat-nat}). Thus, $\merge = \id$ is a minimization
  interface by \autoref{lem:merge-id-enc-nat} with constant run-time.%
  \qedhere
\end{enumerate}
\end{proof}

\subsection*{Proof of \autoref{thm:merge-thm}}
We first prove the following technical proposition about merge:
\begin{proposition}\label{prop:merge-empty}
  Suppose $F$ is not the constant empty set functor ($CX=\emptyset$, $F\neq C$)  and is equipped with a subnatural encoding and a
  minimization interface $\merge$. Then we have
  $\merge(\mbraces{}) = \mbraces{}$.
\end{proposition}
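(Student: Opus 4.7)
The plan is to evaluate the $\merge$ axiom at a judiciously chosen input and exploit subnaturality of $\flat$ on an injection between small sets.

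First, I would observe that the hypothesis ``$F$ is not the constant empty functor'' is equivalent to $F1 \neq \emptyset$. Indeed, if $F1 = \emptyset$ then for every nonempty $X$ the map $F!\colon FX \to F1 = \emptyset$ forces $FX = \emptyset$, and moreover $F\emptyset$ must also be empty since the unique map $\emptyset \to X$ with $X$ nonempty would otherwise give rise to a function from a nonempty set into an empty set. So I can pick some element $t \in F1$.

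Next, I would specialize the minimization interface axiom \eqref{eq:merge-axiom} to $X=1$ and $S=\emptyset$. Evaluating both legs at $t$ gives
\[
  \merge(\fil_\emptyset(\flat_1(t))) \;=\; \fil_{\{1\}}(\flat_2(F\chi_\emptyset(t))).
\]
The left-hand side is precisely $\merge(\mbraces{})$, because $\fil_\emptyset$ always returns the empty bag (the sum in its definition is indexed by the empty set). It therefore suffices to show that the right-hand side equals $\mbraces{}$.

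For this, note that $\chi_\emptyset\colon 1 \to 2$ coincides with the injection $0\colon 1 \monoto 2$ that picks out the element $0 \in 2$. Subnaturality of $\flat$ applied to the injection $0$ yields
\[
  \flat_2 \cdot F0 \;=\; \Bag(A \times 0) \cdot \flat_1,
\]
so $\flat_2(F0(t)) = \Bag(A\times 0)(\flat_1(t))$. The bag on the right is supported entirely on pairs of the form $(a,0)$, since $\Bag(A\times 0)$ translates each $(a,\ast)$ to $(a,0)$. Consequently $\fil_{\{1\}}$ of this bag is empty, and the right-hand side of the displayed equation is $\mbraces{}$, giving $\merge(\mbraces{}) = \mbraces{}$.

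There is no real obstacle here: the only point that requires a moment's care is guaranteeing $F1 \neq \emptyset$ to produce a witness $t$, and the remainder is a one-line diagram chase. Notably this argument uses \emph{subnaturality}, not uniformity, of the encoding, which is exactly the hypothesis of the proposition; had we assumed uniformity we could have cited \autoref{5gon2emptybag} directly to conclude $\fil_{\{1\}}\cdot \flat_2 \cdot F0 = \mbraces{}\cdot !$ in one step.
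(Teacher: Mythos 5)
Your proposal is correct and follows essentially the same route as the paper's proof: pick $t\in F1$ (using $F1\neq\emptyset$), instantiate the $\merge$ axiom at $X=1$, $S=\emptyset$, and use subnaturality of $\flat$ at the injection $\chi_\emptyset=0\colon 1\monoto 2$ to see that the resulting bag has no elements over $1\in 2$. The only difference is that you spell out the (standard) argument that $F1=\emptyset$ forces $F$ to be the constant empty functor, which the paper states without proof.
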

\begin{proof}%
  Consider the diagram for the injective $\chi_\emptyset\colon 1\rightarrowtail 2$:
  \[
    \begin{tikzcd}
      F1
      \arrow{r}{\flat_1}
      \arrow{d}[swap]{F\chi_\emptyset}
      & \Bag(A\times 1)
      \arrow{r}{\fil_\emptyset}
      \arrow{d}{\Bag(A\times\chi_\emptyset)}
      & \Bag A
      \arrow{d}{\merge}
      \\
      F2
      \arrow{r}{\flat_2}
      & \Bag(A\times 2)
      \arrow{r}{\fil_{\{1\}}}
      & \Bag A
    \end{tikzcd}
  \]
  Note that $F1$ is nonempty for all $\Set$-functors except for the constant
  empty set functor, which is excluded by assumption. Hence, there is some $x\in
  F1$ and we have:
  \begin{align*}
    \merge(\mbraces{})
    &= \merge (\fil_\emptyset(\flat_1(x)))
      \tag{$\fil_\emptyset$ always returns $\mbraces{}$}
    \\
    &= \fil_{\{1\}}(\flat_2(\chi_\emptyset(x)))
      \tag{$\merge$ axiom}
    \\
    &= \fil_{\{1\}}(\Bag(A\times\chi_\emptyset)(\flat_1(x))))
      \tag{$\flat$ subnatural}
    \\
    &= \mbraces{} \tag{$1$ is not in the image of $\chi_\emptyset$}
  \end{align*}
  as desired.
\end{proof}

\noindent We now proceed to prove \autoref{thm:merge-thm}:
\begin{proof}
  First, observe that for every function $f\colon X \to Y$ the following
  squares commute:
  \begin{equation}\label{eq:extensionality}
    \begin{tikzcd}
      X^Z \ar[d, "f^Z", swap] \ar[r, "\ev(z)"] & X \ar[d, "f"]\\
      Y^Z \ar[r, "\ev(z)"] & Y
    \end{tikzcd}
    \qquad
    \text{for every $z \in Z$.}
  \end{equation}
  We verify that the following diagram commutes for every $y\in Y$, where we define $S = q^{-1}[y]$:
  \[
    \begin{tikzcd}
      FX \ar[ddd, "Fq",swap] \ar[r, "\flat_X"]\ar[drd, "F\chi_S"]\desctox{bend
        left}{ddd}{\ref{lem:revimg}}\descto{ddrrr}{\eqref{eq:merge-axiom}} & \Bag(A\times X)\ar[rrd, "\fil_S",swap]\desctox[near end]{bend left=20}{rrd}{\eqref{eq:filsingle2}}\ar[rr, "\Bag(A\times q)"] & & \Bag(A\times Y)\ar[d, "\fil_{\{y\}}"]\ar[r, "\group"] & {\Bag A}^{(Y)} \ar[dl, "\ev(y)", rounded corners, to path={|-  (\tikztotarget) [near start]\tikztonodes}]\descto{dl}{\rlap{\!\!\!\eqref{eq:filsingle}}}\\
      & &  & \Bag A\ar[d, "\merge"]\\
      & F2 \ar[r, "\flat_2"]\descto{drr}{\eqref{eq:uniform-encoding}} & \Bag(A\times 2)\ar[r, "\fil_{\{1\}}"] & \Bag A  & \\
      FY \ar[ur, "F\chi_{\{y\}}"]\ar[rrr, "\flat_Y"] &&& \Bag(A\times
      Y)\ar[u, "\fil_{\{y\}}",swap]\ar[r, "\group"] & {\Bag
        A}^{(Y)} \ar[ul, "\ev(y)", swap,rounded corners, to path={|-
        (\tikztotarget) [near start]\tikztonodes}]\descto{ul}{\rlap{\eqref{eq:filsingle}}}
    \end{tikzcd}
  \]
  Instantiating~\eqref{eq:extensionality} for $f = \merge$ yields
  $\ev(y) \cdot \merge^Y = \merge\cdot\ev(y)$. By virtue of
  \autoref{prop:merge-empty} and \autoref{ass:simpl}, \merge{} preserves empty bags. Hence, the function
  $\merge^{Y}\colon \Bag(A)^{Y}\to\Bag(A)^{Y}$ defined as
  $f\mapsto \merge\cdot f$ preserves finite support and therefore restricts to
  the monoid-valued functor as
  $\merge^{(Y)}\colon\Bag(A)^{(Y)}\to\Bag(A)^{(Y)}$. Therefore, the outside of
  the diagram together with the fact that $(\ev(y))_{y\in Y}$ is a jointly
  injective family implies
  \[
    \begin{tikzcd}[column sep = 35]
      FX \ar[r, "\flat_X"]\ar[d, "Fq",swap] & \Bag(A\times X)\ar[r, "\Bag(A\times q)"] & \Bag(A\times Y)\ar[r, "\group"] & \Bag(A)^{(Y)} \ar[d, "\merge^{(Y)}"]\\
      FY \ar[rr, "\flat_Y"] & & \Bag(A\times Y) \ar[r, "\group"] & \Bag(A)^{(Y)}.\\
    \end{tikzcd}
  \]
  Post-composition with $\ungroup$ and the application of equation
  \eqref{eq:group-ungroup} now yields the desired result.
\end{proof}

\subsection*{Proof of \autoref{constr:correct:linear}}%
\newcommand{\yidx}{\operatorname{idx}}
\begin{proof} \leavevmode
  (1) Correctness. 
  Combining that $q$ is a coalgebra homomorphism with \autoref{thm:merge-thm}
  yields the following diagram, whose commutativity we discuss next:
  \[
    \begin{tikzcd}[column sep = 35, row sep=14mm]
      X
      \arrow{r}{c}
      \arrow[->>]{d}[swap]{q}
      \descto{dr}{$q$ coalgebra\\ morphism}
      &[8mm] FX \ar[r, "\flat_X"]\ar[d, "Fq",swap]
      \descto{drrr}{\autoref{thm:merge-thm}}
        & \Bag(A\times X)\ar[r, "\Bag(A\times q)"]
        & \Bag(A\times Y)\ar[r, "\group"] 
        & \Bag(A)^{(Y)} \ar[d, "\merge^{(Y)}"]\\
      Y
      \arrow{r}{d}
      \arrow[shiftarr={yshift=-7mm}]{rrr}[swap]{e}
      &FY \ar[rr, "\flat_Y"]
        &
        & \Bag(A\times Y)
        & \Bag(A)^{(Y)} \ar[l, "\ungroup",swap]
    \end{tikzcd}
  \]
  The two rectangles commute, and the outside of the diagram commutes by \autoref{cons:main}.
  Hence, $e\cdot q = \flat_Y\cdot d \cdot q$. Since $q$ is surjective, we have
  $e = \flat_Y\cdot d$ as desired.

  \medskip
  \noindent
  (2)~Runtime.
    For the implementation of \autoref{cons:main}, assume that the encoded input coalgebra
    $\flat_X\cdot c\colon X\to \Bag(A\times X)$ is given as adjacency lists
    and that the quotient map $q\colon X\epito Y$ is given as a partition on
    $X$. Such a partition is represented as an assignment $q'\colon X\epito
    \{0,\ldots, |Y|-1\}$ which sends an element of $X$ to the number of its
    block and which can be evaluated in $\CO(1)$ (e.g.~the refinable partition
    structure~\cite{ValmariLehtinen08} represents partitions in such a way and
    is in fact used by the coalgebraic algorithm~\cite{coparFM19}); in other
    words, we implicitly use the bijection $Y = \{0,\ldots,|Y|-1\}$. We now
    compute the composition
    \[
    \underbrace{\ungroup\cdot\,\underbrace{\merge{{}^{(Y)}}\cdot \underbrace{\group \cdot \underbrace{\Bag(A\times
    q)\cdot \underbrace{\flat_X\cdot c}_{s_0}}_{s_1}}_{s_2}}_{s_3}}_{s_4}\colon~~ X\longrightarrow \Bag(A\times Y)
    \]
    from \autoref{cons:main} step by step:
    \begin{enumerate}
    \item $s_0 := \flat_X\cdot c\colon X\to \Bag(A\times X)$ is the given input,
      encoded using adjacency lists, i.e.~as an array of size $|X|$ whose entries are lists of
      elements from $A\times X$. We denote its size by
      \[
        m := |X| + \sum_{x\in X}|\flat_X(c(x))|.
      \]
    \item For $s_1 := \Bag(A\times q)\cdot s_0\colon X\to \Bag(A\times Y)$,
      we iterate over all edges in the adjacency lists and replace every
      right-hand side $x\in X$ of an edge by $q(x) \in Y$. This takes $\CO(m)$
      time ($\CO(1)$ time for each of the $m$ entries).
    \item For $s_2 := \group\cdot\,s_1\colon X\to \Bag(A)^{(Y)}$, we represent a
      map $t\in \Bag(A)^{(Y)}$ as a list of pairs $(y,t(y)) \in Y\times \Bag(A)$ with $t(y)$
      non-empty and compute this list for all $x\in X$ as follows. \twnote{} Allocate an array $\yidx\colon Y\to \mathbb{Z}$ (initially $-1$ everywhere)
      and then do the following for every $x\in X$:
      \smnote{}
      \begin{enumerate}
      \item Allocate an array $p\colon \N\to Y\times \Bag(A)$ of size $|s_1(x)|$
        and initialize an integer $i:=0$ (intuitively, $i$ is the index of the
        first unused cell in $p$).
      \item For every $(a,y) \in s_1(x)$, we distinguish whether we have seen
        $y$ in $s_1(x)$ before:
        \begin{itemize}
        \item If $\yidx(y) < 0$, then it is the first time we see $y$ in $s_1(x)$.
          Thus put $\yidx(y) := i$, increment $i := i+1$, and define $p(\yidx(y)) := (y,
          \mbraces{a})$.
        \item If $\yidx(y) \ge 0$, then we have seen $y$ before and simply append
          $a$ to the second component of $p(\yidx(y))$.
        \end{itemize}
      \item For every $(y,\ell)$ in the first $i$ entries of $p$, put $\yidx(y) =
        -1$. (Thus, $\yidx$ is again $-1$ everywhere.)
      \item Let $s_2(x)$ be the first $i$ entries of $p$
      \end{enumerate}
      For $x\in X$ each of the above steps runs in $\CO(|s_1(x)|)$, thus
      doing these for all $x\in X$ runs in $\CO(m)$ in total.

    \item For $s_3 := \merge^{(Y)}\cdot s_2\colon X\to \Bag(A)^{(Y)}$, apply
      $\merge\colon \Bag(A)\to \Bag(A)$ to every bag in the list $s_2(x)\in \Bag(A)^{(Y)}$
      (we have represented $s_2(x)$ as a list of elements from $Y\times \Bag(A)$
      in the definition of $s_2$). Since by assumption, $\merge$ runs in
      linear time, the present step runs in $\CO(m)$ time\smnote{} and moreover the size of
      the resulting $s_3$ is still of size $\CO(m)$.

    \item For $s_4 := \ungroup\cdot s_3\colon X\to \Bag(A\times Y)$, first note
      that for every $x\in X$, the bag $s_3(x) \in \Bag(A)^{(Y)}$ is represented
      by a list of elements of $Y\times \Bag(A)$, i.e.~every $\ell\in s_3(x)$ is
      of type $\ell \in Y\times \Bag(A)$, thus we define $s_4$ as the following
      multiset-comprehension:
      \[
        s_4(x) := \mbraces{ (a,y)\mid (y, t) \in s_3(x), a\in t }
        \in \Bag(A\times Y).
      \]
      This is computed in time $|s_3(x)|$ for every $x\in X$ and thus $s_4$ can
      be computed in time~$\CO(m)$.
    \end{enumerate}
    Finally, for the definition of $e\colon Y\to \Bag(A\times Y)$, we allocate
    $|Y|$ new adjacency lists, all of them empty initially. Then, for every $x \in X$, we put 
    \(
      e(q(x)) := s_4(x)
    \)
    if $e(q(x))$ is empty (and skip otherwise). By the well-definedness of
    \autoref{cons:main} it does not matter which~$x\in X$ defines the outgoing
    edges of $q(x)\in Y$. This takes $|X| < m$ time. Thus, all steps
    $s_1,\ldots,s_4$ and the final definition of $e$ take $\CO(m)$ time in total.
    \qedhere
\end{proof}
\subsection*{Proof of \autoref{prop:merge-for-prod}}
We first note a few technicalities before proceeding to the proof of
\autoref{prop:merge-for-prod}.
\begin{remark}\label{rem:filter-facts}
  \begin{enumerate}
  \item We observe that for every $i\in I$, we have
    \begin{equation}
      \label{eq:prod-merge-filter}
      \filter_i(\merge(t))(a) = \merge(t)(\inj_i a) = \merge_i(\filter_i(t))(a).
    \end{equation}

  \item In order to show that $\merge{}$ in \autoref{C:merge-prod}
    indeed constitutes a lawful minimization interface, we use a
    different, but equivalent, definition of $\flat$ for
    $\prod_{i\in I}F_i$:
    \begin{equation}
      \label{eq:new-enc-for-prod}
      \flat' = \prodi F_i X\xrightarrow{\Pi_{i\in I}\flat_i}
        \prodi\Bag(A_i\times X) \xrightarrow{\concat}
        \Bag(\coprodi A_i\times X),
    \end{equation}
    with $\concat$ given by
    $\concat(t)(\inj_i a, x) = \pr_i(t)(a, x)$. This is indeed
    equivalent to the original definition:
    \begin{align*}
      \flat'(t)(\inj_i a, x) &= \concat(\prodi[j]\flat_j(t))(\inj_i a, x)\\
      &= \pr_i(\prodi[j]\flat_j(t))(a, x) = \flat_i(\pr_i(t))(a, x)\\
      &= \flat(t)(\inj_i a, x).
    \end{align*}
    We also need another auxiliary definition similar to
    $\concat{}$
    \begin{align}
      &\concat'\colon \prodi\Bag(A_i)\to\Bag(\coprodi A_i) \nonumber\\
      &\concat'(t)(\inj_i a) = \pr_i(t)(a),
        \label{defConcatPrime}
    \end{align}
    for which we observe the following properties:

    \begin{equation}\label{eq:concat-filter}
      \begin{tikzcd}[column sep = 30]
        \prodi[j]\Bag(A_j)\ar[r, "\concat'"]\ar[rd, "\pr_i",swap] & \Bag(\coprodi[j]A_j)\ar[d, "\filter_i"]\\
        & \Bag(A_i)
      \end{tikzcd}
    \end{equation}
    and
    \begin{equation}\label{eq:concat-concat}
      \begin{tikzcd}[column sep = 30]
        \prodi\Bag(A_i\times X)\ar[r, "\concat"]\ar[d, "\prodi\fil_S"] & \Bag(\coprodi A_i\times X)\ar[d, "\fil_S"]\\
        \prodi\Bag(A_i)\ar[r, "\concat'",swap] & \Bag(\coprodi A_i).
      \end{tikzcd}
    \end{equation}
    Indeed, we have
    \begin{align*}
      \fil_S(\concat(t))(\inj_i a) &= \sum_{x\in S}\concat(t)(\inj_i a, x)\\
      &=\sum_{x\in S}\pr_i(t)(a,x) = \fil_S(\pr_i(t))(a)\\
      &=\pr_i(\prodi[j]\fil_S(t))(a) \\
      &= \concat'(\prodi[j]\fil_S(t))(\inj_i, a)
    \end{align*}
    and
    \[
      \filter_i(\concat'(t))(a) = \concat'(t)(\inj_i a) = \pr_i(t)(a).
    \]
    
  \item\label{rem:filter-facts:3} The function $\filter_i$ behaves as expected
    when injecting all elements of a bag into a coproduct and then immediately
    filtering this bag. Speci\-fi\-cally, we have that
    \begin{equation}
      \filter_i\cdot\Bag(\inj_j) =
      \begin{cases}
        \id & i = j,\\
        \emptybag! & i \neq j.
      \end{cases}
    \end{equation}
  \end{enumerate}
\end{remark}

\noindent We are now ready to prove the main proposition:
\begin{proof}[Proof of \autoref{prop:merge-for-prod}]
  \begin{enumerate}
  \item For the product functor $\prodi F_i$, the following diagram commutes for all $i\in I$:
  \[
    \begin{turn}{90}
    \begin{tikzcd}
      &&& \Bag(\coprodi[j]A_j\times X)\ar[dr, "\fil_S"]\\
      \prodi[j]F_jX \ar[rr, "{\prodi[j]\flat_j}"]\ar[dr, "\pr_i"]\ar[ddd, "{\prodi[j]F_j\chi_S}", swap]\ar[urrr, "\flat'",bend left]\desctox{shift left=5ex}{ddd}{\scriptsize Naturality\\ of $\pr_i$}\descto{rrrd}{\scriptsize Naturality of $\pr_i$}\desctox{shift left=4ex}{rrru}{Def.\ of \(\flat'\)~\eqref{eq:new-enc-for-prod}} && \prodi[j]\Bag(A_j\times X)\ar[ur, "\concat"]\ar[r, "{\prodi[j]\fil_S}",swap]\desctox{shift left=4ex}{rr}{\eqref{eq:concat-concat}} & \prodi[j]\Bag(A_j)\ar[d, "\pr_i"]\ar[r, "\concat'"] & \Bag(\coprodi[j] A_j)\ar[dl, "\filter_i"]\ar[dd, "\merge"]\desctox{shift right=8ex}{ddd}{\eqref{eq:prod-merge-filter}}\\
      & F_iX \ar[d, "F_i\chi_S", swap]\ar[r, "\flat_i"]\descto{drr}{Axiom $\merge_i$} & \Bag(A_i\times X)\ar[r, "\fil_S"] & \Bag(A_i)\ar[d, "\merge_i"]\desctox[near start]{shift left=2ex}{ur}{\eqref{eq:concat-filter}}\\
      & F_i2 \ar[r, "\flat_i", swap] & \Bag(A_i\times 2)\ar[r, "\fil_{\{1\}}",
      swap] & \Bag(A_i)\desctox[near start]{shift
        right=2ex}{dr}{\eqref{eq:concat-filter}}
      & \Bag(\coprodi[j]A_j) \arrow{l}[swap]{\filter_i}
      \\
      \prodi[j]F_j2 \ar[ur, "\pr_i"]\ar[rr, "{\prodi[j]\flat_j}",swap]\ar[drrr, "\flat'", swap, bend right]\descto{rrru}{\scriptsize Naturality of $\pr_i$}\desctox{shift right=4ex}{rrrd}{Def.\ of \(\flat'\)~\eqref{eq:new-enc-for-prod}} && \prodi[j]\Bag(A_j\times 2)\ar[dr, "\concat"]\ar[r, "{\prodi[j]\fil_{\{1\}}}"]\desctox{shift right=4ex}{rr}{\eqref{eq:concat-concat}} & \prodi[j]\Bag(A_j)\ar[u, "\pr_i"]\ar[r, "\concat'", swap]& \Bag(\coprodi[j] A_j)\ar[ul, "\filter_i",swap]\\
      &&& \Bag(\coprodi[j]A_j\times 2)\ar[ur, "\fil_{\{1\}}",swap]\\
    \end{tikzcd}
  \end{turn}
\]
  Observe that for any two $f,g\in\Bag(\coprodi[j]A_j)$ with $f\neq g$, there exists a $j\in I$ such that $\filter_j(f) \neq \filter_j(g)$: Let w.l.o.g.\ be $x = \inj_i a\in \coprodi[j]A_j$ such that $f(x)\neq g(x)$. Then we have
  \[
    \filter_i(f)(a) = f(\inj_i a) = f(x) \neq g(x) = g(\inj_i a) = \filter_i(g)(a).
  \]
  Hence, the family $(\filter_i)_{i\in I}$ is a point-separating source and therefore jointly monic.

  The desired equation $\merge\cdot\fil_S\cdot\flat' =
  \fil_{\{1\}}\cdot\flat'\cdot\prodi[j]F_j\chi_S$ thus follows from the diagram
  above.

\item For the coproduct functor $\coprodi F_i$, we assume without loss of
  generality that $F_i1 \neq \emptyset$ for all $i \in I$ because summands which
  are constantly $\emptyset$ may be omitted from the coproduct without changing
  it.

  We need to show
  \[ \merge(\fil_S(\flat(\inj_i t)))(\inj_j a) = \fils{1}(\flat((\coprodi[k]F_k\chi_S)(\inj_i t)))(\inj_j a)
  \]
  for every $\inj_i t\in\coprodi[k]F_kX$ and $\inj_j a\in\coprodi[k]A_k$.

  We calculate as follows:
  \allowdisplaybreaks
  \begin{align*}
    \merge(&\fil_S(\flat(\inj_i t)))(\inj_j a)\\
           &= \merge_j(\filter_j(\fil_S(\flat(\inj_i t))))(a)\tag*{Def.\ of $\merge$}\\
           &= \merge_j(\filter_j(\fil_S([\Bag(\inj_k\times X)]_{k\in I}((\coprodi[k]\flat_k)(\inj_i t)))))(a)\tag*{Def.\ of $\flat$}\\
           &= \merge_j(\filter_j(\fil_S([\Bag(\inj_k\times X)]_{k\in I}(\inj_i\flat_i(t)))))(a)\tag*{~~\({\textstyle\coprod}\flat_k\cdot\inj_i=\inj_i\cdot\flat_i\)}\\
           &= \merge_j(\filter_j(\fil_S(\Bag(\inj_i\times X)(\flat_i(t)))))(a)\tag*{\([f_k]\cdot\inj_i=f_i\)}\\
           &= \merge_j(\filter_j(\Bag(\inj_i)(\fil_S(\flat_i(t)))))(a)\tag*{\itemref{lem:fils-properties}{prop:fils-natural}}\\
    \intertext{From here we  proceed by case distinction. If $i=j$, we have}
    \merge_i&(\filter_i(\Bag(\inj_i)(\fil_S(\flat_i(t)))))(a)\\
           &= \merge_i(\fil_S(\flat_i(t)))(a)  \tag*{\autoref{rem:filter-facts}(\ref{rem:filter-facts:3})}\\
           &= \fils{1}(\flat_i(F_i\chi_S(t)))(a)\tag*{Axiom of $\merge_i$}\\
           &= \filter_i(\Bag(\inj_i)(\fils{1}(\flat_i(F_i\chi_S(t)))))(a) \tag*{\autoref{rem:filter-facts}(\ref{rem:filter-facts:3})}\\
           &= \Bag(\inj_i)(\fils{1}(\flat_i(F_i\chi_S(t))))(\inj_i a) \tag*{Def.\ of $\filter_i$}\\
           &= \fils{1}(\Bag(\inj_i\times X)(\flat_i(F_i\chi_S(t))))(\inj_i a)\tag*{\itemref{lem:fils-properties}{prop:fils-natural}}\\
           &= \fils{1}([\Bag(\inj_k\times X)]_{k\in I}(\coprodi[k]\flat_k(\coprodi[k]F_k\chi_S(\inj_i t))))(\inj_i a) \tag*{UMP of ${\textstyle\coprod}$}\\
           &= \fils{1}(\flat(\coprodi[k]F_k\chi_S))(\inj_i a)\tag*{Def.\ of $\flat$}\\
    \intertext{In the second case, $i\neq j$, we have}
    \merge_j&(\filter_j(\Bag(\inj_i)(\fil_S(\flat_i(t)))))(a)\\
           &=\merge_j(\emptybag)(a) \tag*{\autoref{rem:filter-facts}}\\
           &=\emptybag(a) \tag*{~\autoref{prop:merge-empty} \& \autoref{ass:simpl}}\\
           &=\filter_j(\Bag(\inj_i)(\fils{1}(\flat_i(F_i\chi_S(t)))))(a)
           \tag*{\autoref{rem:filter-facts}}\\
           \intertext{The remainder of the calculation is completely analogous
             to the first case.}
  \end{align*}
\item Since both the product and coproduct of functors share the same definition
  of \merge{}, its linear run-time complexity only needs to be verified once. To
  this end, we represent bags $\Bag(A)$ as (linked) lists of elements from $A$
  (in lieu of maps $A\to\N$) and rewrite the definition of \merge{} 
  such that it uses $\concat'$ from \eqref{defConcatPrime}:
  \begin{align*}
    \merge(t)(\inj_j a) &= \merge_j(\filter_j(t))(a) \tag*{\autoref{C:merge-prod}} \\
    &= \pr_j(\fpair{\merge_i\cdot\filter_i}_{i\in I}(t))(a) \tag*{Def.~of $\fpair{\cdots}$}\\
    &= \concat'(\fpair{\merge_i\cdot\filter_i}_{i \in I}(t))(\inj_j a)
      \tag*{\eqref{defConcatPrime}}
  \end{align*}
  Hence, $\merge$ is the composition
  \[
    \Bag\big(\coprodi A_i \big)
    \xra{\fpair{\filter_i}_{i\in I}}
     \prodi\Bag(A_i)
    \xra{\Pi_{i\in I}\;\merge_i}
     \prodi\Bag(A_i)
    \xra{\concat'}\Bag\big(\coprodi A_i\big).
  \]
  This composition can be readily implemented by the following algorithm. Given a
  bag $t\in \Bag\big(\coprodi A_i\big)$, let $n$ be the number of elements in $t$ and do:
  \begin{enumerate}
  \item Allocate an array of length $|I|$ initially containing an empty bag of
    type $\Bag(A_i)$ in the~$i$th component for all $i\in I$ (this array
    represents an element of $\prodi\Bag(A_i)$).
  \item Insert each label $\inj_i a$ from $t$ into the $i$th bag; this
    implements $\fpair{\filter_i}_{i\in I}$ above.
  \item For each $i\in I$, apply $\merge_i$ on the $i$th bag.
  \item Concatenate the resulting $|I|$ lists (encoding bags of type $\Bag(A_i)$)
    stored in our array to one list encoding the result bag of type $\Bag\big(\coprodi A_i\big)$.
  \end{enumerate}
  
  Each of those steps runs in $\CO(|I| + n)$ time if $\merge_i$ has linear
  run-time for every $i\in I$. Since $|I|$ is constant, this amounts to $\CO(n)$
  overall.\qedhere
\end{enumerate}
\end{proof}

\subsection{Proofs for \autoref{sec:reach}}

Gumm~\cite[Def.~7.2]{Gumm2005} defined the maps $\tau^F_X\colon FX \to \Pow X$
differently. We show that his definition is equivalent to ours. 

\begin{lemma}
  The definition of $\tau_X^F$ in \autoref{def:cangr} is equivalent to Gumm's definition
  in \opcit[].
\end{lemma}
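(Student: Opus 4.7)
The plan is to prove the equivalence by contraposition on both sides: I would show that, for each $x \in X$ and $t \in FX$, the statement ``$x \notin \tau_X^F(t)$'' holds in the sense of \autoref{def:cangr} if and only if it holds in Gumm's sense. To make this precise, I would first recall Gumm's definition, which characterizes $\tau_X^F(t)$ as the intersection of all $Y \subseteq X$ for which $t$ lies in the image of $Fi_Y \colon FY \to FX$, where $i_Y \colon Y \hookrightarrow X$ is the inclusion. Equivalently, $x \notin \tau_X^F(t)$ in Gumm's sense iff there exists some $Y \subseteq X$ with $x \notin Y$ and $t \in \mathrm{image}(Fi_Y)$.

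The forward direction (our definition implies Gumm's) is immediate: if $t$ factorizes as $Fi(t')$ for some $t' \in F(X \setminus \{x\})$ and $i \colon X \setminus \{x\} \hookrightarrow X$, then $Y := X \setminus \{x\}$ witnesses that $x$ is excluded from Gumm's intersection. For the converse, given any $Y$ with $x \notin Y$ and $t = Fi_Y(t')$ for some $t' \in FY$, I would use $Y \subseteq X \setminus \{x\}$ to factor the inclusion $i_Y$ as $Y \xhookrightarrow{j} X \setminus \{x\} \xhookrightarrow{i} X$, so that by functoriality $t = Fi(Fj(t'))$ exhibits a factorization through $F(X \setminus \{x\})$, whence $x \notin \tau_X^F(t)$ in the sense of \autoref{def:cangr}.

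I expect no real obstacle: the argument is a short diagram chase using only that $F$ sends composites of inclusions to composites of their images. The only edge case worth a brief comment is $X = \{x\}$, where $X \setminus \{x\} = \emptyset$; here both sides still make sense since $F$ applied to the unique map $\emptyset \hookrightarrow X$ yields a well-defined map $F\emptyset \to FX$, and the case requires no additional assumption on $F$ (in particular, neither intersection preservation nor any form of injection preservation is needed for this lemma).
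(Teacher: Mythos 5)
Your proposal is correct and follows essentially the same route as the paper: the paper's proof is a chain of rewrites of the set comprehension whose only non-trivial step is ``$\forall$-instance \& $[F_U^X]$ monotone in $U$'', which is precisely your two directions (instantiating $U = X\setminus\{x\}$ for one, and factoring the inclusion $Y \hookrightarrow X$ through $X\setminus\{x\}$ for the other). Your remark on the edge case $X=\{x\}$ is also accurate; no preservation property of $F$ is needed anywhere in this lemma.
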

\begin{proof}
  Before showing the equivalence, we need to recall other definitions that are
  used by Gumm~\cite{Gumm2005}. Recall that a filter $\mathcal{G}$ on a set $X$
  is a nonempty family $\mathcal{G}\subseteq \Pow(X)$ that is closed under
  binary intersection and supersets. The \emph{filter functor} $\mathbb{F}$ is
  the $\Set$-functor that sends a set $X$ to the set of all filters on $X$
  (its definition on maps is not relevant to this proof). For a given
  $\Set$-functor $F\colon \Set\to\Set$ and sets $U\subseteq X$, the set
  \[
    [F_U^X] := F(i\colon U\hookrightarrow X)[FU]  \qquad\subseteq FX
  \]
  denotes the image of $Fi\colon FU\to FX$, where $i\colon U\hookrightarrow X$
  is the inclusion map. The notation $[F_U^X]$ is monotone in $U$~\cite[Lemma 1]{Gumm2005}, that is,
  $V\subseteq U$ implies $[F_V^X]\subseteq [F_U^X]$. Moreover, one can easily
  prove for $t\in FX$
  that
  \begin{equation}
    t\in [F_U^X]
    \quad\Longleftrightarrow\quad \text{$t\colon 1\to FX$ factorizes through $Fi$}.
    \label{gummNotationVsFactor}
  \end{equation}
  Using this notation, we define the following family of maps:
  \[
    \mu_X\colon FX \to \mathbb{F}X
    \qquad
    \mu_X(t) := \big\{ U\subseteq X ~\big\vert~ t \in [F_U^X]\,\big\}.
  \]
  For $t\in FX$, the intersection of all elements in $\mu_X(t)$
  yields a subset of $X$:
  \[
    \bigcap \mu_X(t) =
    \{
    x\in X\mid \forall\, U\in \mu_X(t)\colon x\in U
    \}
 \quad\in \Pow X.
  \]
  This is the definition of $\tau_X^F$ in \opcit[]. In order to prove that this
  definition is equivalent to ours in \autoref{def:cangr}, we will prove that
  \[
    \tau_X^F(t) = \bigcap \mu_X(t)\qquad\text{for all }t\in FX.
  \]
  We have the following chain of equal sets, whose equality is established by
  performing equivalent rewrites in the comprehension formula:
  \begin{align*}
    \bigcap \mu_X(t)
    &= \{x\in X\mid \forall~ U\in \mu_X(t)\colon x\in U \} \tag{Def.~$\bigcap$}\\
    &= \{x\in X\mid \forall~ U\subseteq X\colon t\in [F_U^X] \to x\in U\} \tag{Def.~$\mu_X$} \\
    &= \{x\in X\mid \forall~ U\subseteq X\colon x\not\in U\to t\not\in [F_U^X]\} \tag{Contraposition} \\
    &= \{x\in X\mid \forall~ U\subseteq X\text{ with }x\not\in U\colon t\not\in [F_U^X]\} \\
    &= \{x\in X\mid \forall~ U\subseteq X\setminus\{x\}\colon t\not\in [F_U^X]\} \\
    &= \{x\in X\mid t\not\in [F_{X\setminus\{x\}}^X]\} \tag{$\forall$-instance \& $[F_U^X]$ monotone in $U$}\\
    &= \{x\in X\mid t\colon 1\to FX\text{ does not factor through }F(i\colon X\setminus\{x\}\to X)\}
      \tag*{\eqref{gummNotationVsFactor}} \\
    &= \tau_F^X(t)
    \tag*{Def.~$\tau_F^X(t)$}%
  \end{align*}
  This completes the proof.
\end{proof}

\subsection*{Proof of \autoref{prop:hin}}
\begin{proof}%
  We prove $\tau_{X}^{\Bag(A\times -)}(\flat_X(t))\subseteq \tau_X^F(t)$ by contraposition.
  If $x\in X$ is \emph{not} in $\tau_X^F(t)$, then
  we show that it is not in $\tau^{\Bag(A\times -)}_X(\flat_X(t))$ by proving
  that the following diagram commutes:
  \[
    \begin{tikzcd}
      1 \ar[r, "t"]\ar[dr, "t'", swap] & FX \ar[r, "\flat"] & \Bag(A\times X)\\
      & F(X\setminus\{x\})\ar[u, "Fi"]\ar[r, "\flat"] & \Bag(A\times
      (X \setminus \{x\})\ar[u, "\Bag(A\times i)",swap]
    \end{tikzcd}
  \]
  First, observe that $x \not\in \tau^F_X(t)$ implies that the map
  $t\colon 1\to FX$ factorizes through $F(X\setminus\{x\})\xrightarrow{Fi}FX$
  (cf.~\autoref{def:cangr}), and we therefore obtain $t'$ as shown in the left
  triangle. The right rectangle commutes by the subnaturality of
  $\flat$. Therefore, $1\xrightarrow{t}FX\xrightarrow{\flat}\Bag(A\times X)$
  factors through
  $\Bag(A\times X\setminus\{x\})\xrightarrow{\Bag(A\times i)}\Bag(A\times X)$
  and thus $x$ can not be in $\tau_X^{\Bag(A\times -)}(\flat(t))$. %
  We have shown that $x\not\in\tau_X^F(t)$ implies
  $x\not\in\tau_X^{\Bag(A\times -)}(\flat(t))$; equivalently, we have
  $\tau_X^{\Bag(A\times -)}(\flat(t))\subseteq\tau_X^F(t)$ as required.
\end{proof}

\subsection*{Proof of \autoref{prop:her}}
\begin{proof}%
  Having established one inclusion in \autoref{prop:hin} already, we prove the remaining inclusion
   $\tau_X^F(t) \subseteq \tau_{X}^{\Bag(A\times -)}(\flat_X(t))$ by contraposition:
  \[
    x\not \in \tau_{X}^{\Bag(A\times -)}(\flat_X(t))
    \Longrightarrow
    x\not\in \tau_X^F(t)
    \qquad
    \text{for all }x\in X.
  \]
  To this end, suppose that $x\in X$ satisfies
  $x\not \in \tau_{X}^{\Bag(A\times -)}(\flat_X(t))$. This implies
  that there exists some $t' \in \Bag(A\times (X\setminus\{x\}))$ such
  that the diagram below commutes:
  \[
    \begin{tikzcd}
      1
      \arrow{r}{t}
      \arrow{drr}[swap]{t'}
      & FX \arrow{r}{\flat_X}
      & \Bag(A\times X)
      \\
      && \Bag(A\times (X\setminus\{x\}))
      \arrow{u}[swap]{\Bag(A\times i)}
    \end{tikzcd}
    \text{ with }
    \begin{tikzcd}
      X
      \\
      X\setminus \{x\}.
      \arrow[>->]{u}[swap]{i}
    \end{tikzcd}
  \]
  For a fixed $b\in 2 = \{0,1\}$, we define the injective auxiliary map
  $(x\leadsto b)$ by
  \[
    (x\leadsto b)\colon X\to (X\setminus\{x\}) + 2
    \qquad
    (x\leadsto b)(y) = \begin{cases}
      b &\text{if }y = x \\
      y &\text{otherwise.}
    \end{cases}
  \]
  We will now prove the following equality
  \begin{equation}
    F(x\leadsto 0)(t) = F(x\leadsto 1)(t).
    \label{mapToAnything}
  \end{equation}
  We have the following commutative diagrams (for $b = 0,1$):
  \[
    \begin{tikzcd}[sep = 12mm, column sep=19mm]
      F(X\setminus\{x\}+2)
      \descto{dr}{$\fpair{\text{trivial},\text{subnaturality}}$}
      \arrow{r}{\fpair{F!, \flat_{X\setminus\{x\} + 2}}}
      & F1\times \Bag(A\times (X\setminus\{x\} + 2))
      \\
      FX
      \arrow{u}{F(x\leadsto b)}
      \arrow{r}{\fpair{F!, \flat_X}}
      & F1\times \Bag(A\times X)
      \arrow{u}[swap]{F1\times \Bag(A\times (x\leadsto b))}
      \\
      1
      \arrow{u}{t}
      \arrow{r}[swap]{\fpair{F!\,\cdot\, t,\, t'}}
      \descto{ur}{(Def.~of $t'$)}
      & F1\times \Bag(A\times (X\setminus\{x\}))
      \arrow{u}[swap]{F1\times \Bag(A\times i)}
      \arrow[shiftarr={xshift=28mm}]{uu}[swap]{F1\times \Bag(A\times \inl)}
    \end{tikzcd}
  \]
  Note that the lower right outside path via $t'$ does not mention
  $b$ at all. Hence, we have:
  \begin{align*}
    \fpair{F!, \flat_{X\setminus\{x\} + 2}}(F(x\leadsto 0)(t))
    &= (F1\times \Bag(A\times \inl))(F!(t), t')
    \\ &
    = \fpair{F!, \flat_{X\setminus\{x\} + 2}}(F(x\leadsto 1)(t)).
  \end{align*}
  By assumption $F$ has a refinement interface, and so we know that
  the map $\fpair{F!, \flat_{X\setminus\{x\} + 2}}$ is injective by
  \autoref{encoding-injective}.  Thus, we obtain the desired equation
  \eqref{mapToAnything}. Rephrased as a diagram, we see that the
  outside
  of the following diagram commutes:
  \[
    \begin{tikzcd}
      & FX
      \arrow{r}{F(x\leadsto 0)}
      & F(X\setminus\{x\} + 2)
      \\[3mm]
      & F(X\setminus\{x\})
      \arrow{u}{Fi}
      \arrow{r}{Fi}
      \pullbackangle{45}
      & FX
      \arrow{u}[swap]{F(x\leadsto 1)}
      \\
      1 \arrow[bend left]{uur}{t}
      \arrow[bend right=10]{urr}[swap]{t}
      \arrow[dashed]{ur}{t''}
    \end{tikzcd}
  \]
  Regarding the square, note that $X\setminus\{x\}$ is the
  intersection of the injective maps
  $(x\leadsto 0)\colon X\monoto X\setminus\{x\} + 2$ and
  $(x\leadsto 1)\colon X\monoto X\setminus\{x\} + 2$. Since $F$
  preserves intersections, we thus see that the above square is a
  pullback. Since the outside commutes, we obtain the above dashed map
  $t''\colon 1\to F(X\setminus\{x\})$ with $Fi\cdot t'' = t$. By the
  definition of $\tau_X^F$ (see~\autoref{def:cangr}), this implies that $x\not \in \tau_X^F(t)$,
  as desired.
\end{proof}

\end{document}

